\newtheorem{theorem}{Theorem}[section]
\newtheorem{lemma}[theorem]{Lemma}
\newtheorem{corollary}[theorem]{Corollary}
\theoremstyle{definition}
\newtheorem{observation}[theorem]{Observation}
\newtheorem{definition}[theorem]{Definition}
\newcommand{\Oh}{\mathcal{O}}
\newcommand{\Emb}{\mathcal{E}}
\newcommand{\tEmb}{{\widehat{\mathcal{E}}}}
\newcommand{\Eperm}{\theta}
\newcommand{\Vperm}{\sigma}
\newcommand{\Fperm}{\phi}
\newcommand{\Flags}{\mathbf{F}}
\newcommand{\orb}{\mathrm{orb}}
\newcommand{\genus}{\hat{g}}
\newcommand{\conncomp}{\mathrm{cc}}
\newcommand{\subgroup}[1]{\langle #1 \rangle}
\newcommand{\SymGrp}[1]{\mathrm{Sym}(#1)}
\newcommand{\Vertices}{\mathtt{Verts}}
\newcommand{\Edges}{\mathtt{Edges}}
\newcommand{\Faces}{\mathtt{Faces}}
\newcommand{\Label}{L}
\newcommand{\dom}{\mathrm{domain}}
\newcommand{\tdbag}{\beta}
\newcommand{\tdunder}{\alpha}
\newcommand{\tdG}{G^\downarrow}
\newcommand{\tdemb}{\mathrm{bnd}}
\newcommand{\DP}{\mathbf{T}}
\newcommand{\genusVD}{\textsc{Genus Vertex Deletion}}
\newcommand{\planarVD}{\textsc{Planar Vertex Deletion}}
\title{Deleting vertices to graphs of bounded genus}
\author{Tomasz Kociumaka\thanks{University of Warsaw, Poland, \texttt{kociumaka@mimuw.edu.pl}.}
   \and Marcin Pilipczuk\thanks{University of Warsaw, Poland, \texttt{malcin@mimuw.edu.pl}. Research supported by Polish National Science Centre grant DEC-2012/05/D/ST6/03214.}}
\date{}
\begin{document}
\maketitle
\begin{abstract}
We show that a problem of deleting a minimum number of vertices from a graph to obtain a graph embeddable on a surface of a given Euler genus
is solvable in time $2^{C_g \cdot k^2 \log k} n^{\Oh(1)}$, where $k$ is the size of the deletion set, $C_g$ is a constant depending on the Euler genus $g$ of the target surface,
and $n$ is the size of the input graph.
On the way to this result, we develop an algorithm solving the problem in question in time $2^{\Oh((t+g) \log (t+g))} n$, given a tree decomposition of the input graph of width $t$.
The results generalize previous algorithms for the surface being a sphere by Marx and Schlotter~\cite{ildi}, Kawarabayashi~\cite{kenichi}, and Jansen, Lokshtanov, and Saurabh~\cite{jls}.
\end{abstract}

\section{Introduction}

In recent years, a significant effort has been put into
the study of parameterized complexity of recognizing near-planar graphs~\cite{ildi,kenichi,jls},
that is, graphs that become planar after deleting a small number of vertices.
Since, by the classic result of Lewis and Yannakakis~\cite{lewis}, the decision version of the
problem is NP-hard, it is natural to look for fixed-parameter algorithms with various parameters.

The parameter of the size of the deletion set naturally comes from the supposed applications:
a number of efficient algorithms for planar graphs generalize well to near-planar graphs,
if one supply them with the deletion set. 
Formally, we define the problem \planarVD{} as follows: given a graph $G$ and an integer $k$,
decide, if one can delete at most $k$ vertices from $G$ to obtain a planar graph.

Clearly, for a fixed integer $k$, the yes-instances to \planarVD{} form a minor-closed 
graph class. Consequently, from the Graph Minors theory we obtain a nonuniform
fixed-parameter algorithm for \planarVD{} (cf.~\cite[Section~6]{ksiazka}).

Marx and Schlotter~\cite{ildi} showed an explicit, uniform fixed-parameter algorithm
by a typical irrelevant vertex approach. First, they observe that the
formulation of the problem as hitting all models of the forbidden minors for planar graphs (i.e., $K_5$ and $K_{3,3}$) leads to a fixed-parameter algorithm on bounded treewidth graphs by relatively standard techniques. Second, it is quite easy to believe (but quite technical to formally prove)
that a middle part of a large, flat (planar), and grid-like subgraph of the input graph will never
be part of an optimal deletion set, and can be removed without changing the answer to the problem.
The combination of the excluded grid theorem and the technique of iterative compression gives
here a win-win approach: if the treewidth of the graph is not sufficiently bounded, an irrelevant
part can be uncovered and removed.

There are two sources of potential inefficiencies in the approach of Marx and Schlotter.
First, the routine for graphs of bounded treewidth that finds a minimum set hitting all forbidden minor models work in time double-exponential in the treewidth bound. 
Since the treewidth bound needs to be significantly larger than the size of the deletion set
for the irrelevant vertex argument to work, we obtain at least a double-exponential
dependency on the parameter.
Second, the technique of iterative compression, at least applied in a straightforward manner,
  gives at least quadratic dependency on the input size.

Later, Kawarabayashi~\cite{kenichi} showed a fixed-parameter algorithm with linear dependency 
on the input size. Finally, Jansen, Lokshtanov, and Saurabh~\cite{jls} showed an algorithm with 
running time $2^{\Oh(k \log k)} n$, that is, with nearly single-exponential dependency
on the parameter and linear dependency on the input size.

On high level, the work of~\cite{jls} follows the approach of Marx and Schlotter, but
improves upon both components. First, they show that a routine that explicitly constructs
partial embeddings of graphs of bounded treewidth solves the problem in question
in time $2^{\Oh(t \log t)} n$, given a tree decomposition of width $t$. 
Second, they show arguments in the spirit of the aforementioned irrelevant vertex rule
that reduce the graph to treewidth linearly bounded in the size of the solution (deletion set).
Third, they apply a more involved iterative compression approach that in one step compresses
the graph by a multiplicative factor, yielding a linear dependency on the input size.

A simple reduction from the \textsc{Vertex Cover} problem (replace every edge $uv$ with a
$K_5$ with vertices $u$, $v$, and $3$ new vertices) shows that, unless the
Exponential Time Hypothesis (ETH)~\cite{DBLP:journals/jcss/ImpagliazzoPZ01} fails, the dependency on the parameter $k$
needs to be $2^{\Omega(k)}$ for any parameterized algorithm for \planarVD{}.
Although it is open whether \planarVD{} can be solved in $2^{o(k \log k)} n^{\Oh(1)}$ time,
we note that 
a lower bound by the second author~\cite{tw-lb} asserts that, unless the ETH fails,
the bounded treewidth subroutine requires dependency $2^{\Omega(t \log t)}$ on the treewidth
of the graph. This implies that a hypothetical algorithm that solves \planarVD{}
in  $2^{o(k \log k)} n^{\Oh(1)}$ time needs to follow significantly different approach
than the one we know currently.

In the light of the aforementioned developments, in this paper, we initiate the study
of the \genusVD{}: given a graph $G$ and integers $g$ and $k$, decide, if one can delete
at most $k$ vertices from $G$ to obtain a graph embeddable on a surface of Euler genus at most $g$.

Our main result is the following:
\begin{theorem}\label{thm:main}
\genusVD{} can be solved in time $2^{C_g k^2 \log k} n^{\Oh(1)}$, where
$C_g$ is a constant depending on $g$ only.
\end{theorem}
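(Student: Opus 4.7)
The plan is to follow the template of Marx and Schlotter for \planarVD{} and its refinement by Jansen, Lokshtanov, and Saurabh~\cite{jls}, lifting both ingredients from the sphere to an arbitrary fixed surface of Euler genus at most $g$. The two building blocks will be: a dynamic programming algorithm that, given a tree decomposition of width $t$, solves \genusVD{} in time $2^{\Oh((t+g)\log(t+g))} n$, exactly as announced in the abstract; and a treewidth-reduction routine that, whenever the current treewidth exceeds some function of $g$ and $k$, exposes an irrelevant vertex that can be safely removed without changing the answer. Composing the two then yields \Cref{thm:main} by running the DP on a tree decomposition of the reduced graph.

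For the bounded-treewidth algorithm I would track, along a nice tree decomposition, partial combinatorial embeddings of the already-processed part of $G - X$ into the target surface, together with the current deletion set $X$ restricted to the bag. A state of the DP records the rotation around each interface vertex, a matching of the dangling half-edges across the interface, and the Euler genus accounted for so far. The number of states at a bag of size $t$ is dominated by the number of rotation systems on a graph of boundary size $t$ embeddable in a surface of Euler genus at most $g$, which is $2^{\Oh((t+g)\log(t+g))}$; the transitions at introduce, forget, and join nodes can be implemented in time proportional to this bound.

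To reduce the treewidth, I would apply iterative compression: process $G$ vertex by vertex, maintaining a candidate deletion set of size at most $k+1$, and at each step attempt to compress it. Guessing the intersection $X_0'$ of this candidate with an optimal solution and removing $X_0'$ leaves a graph $G'$ that is known to become of Euler genus at most $g$ after deleting a further set of size at most $k$. Using the bounded-genus analogue of the excluded-grid theorem together with a Robertson--Seymour-style surface decomposition, I would locate, whenever $G'$ has sufficiently large treewidth, a large \emph{flat} wall, namely a wall whose boundary bounds a disk in the embedding of $G' - X_0'$ and which is disjoint from the at most $k$ remaining ``unembeddable'' attachments. The middle vertex of such a flat wall can then be shown irrelevant by a rerouting argument in the spirit of~\cite{jls}, with the extra care that the rerouted paths stay inside the flat disk and therefore do not alter the topological type of any face.

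The main obstacle will be making this irrelevant-vertex lemma work in the surface setting. In the planar case every wall is automatically flat, whereas on a surface of positive genus a wall may wind around a handle or crosscap, and one must carve out a sub-wall lying in a single embedding disk, sacrificing a polynomial factor in the wall-to-treewidth conversion. It is precisely this degradation, combined with the need to route around the $\Oh(k)$ attachments of the flat disk to the unembeddable part of the graph, that pushes the treewidth bound after reduction from the $\Oh_g(k)$ achievable in the planar case to $\Oh_g(k^2)$. Plugging $t = \Oh_g(k^2)$ into the $2^{\Oh((t+g)\log(t+g))} n$ treewidth DP then yields the announced $2^{C_g k^2 \log k} n^{\Oh(1)}$ bound.
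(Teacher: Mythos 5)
Your overall architecture---a $2^{\Oh((t+g)\log(t+g))}n$ dynamic program over partial embeddings, iterative compression to obtain a modulator of size $k+1$, a flat-wall/irrelevant-vertex step reducing the treewidth to roughly $C_g k^2$, and then the DP---is the same as the paper's (Theorems~\ref{thm:genusVDtw} and~\ref{thm:irr}). However, your treewidth-reduction step has a genuine gap: you assume one can always find a large wall that is both flat (bounding a disc in the embedding of the graph minus the modulator, which the paper extracts via Lemma~\ref{lem:diestel}) and ``disjoint from the at most $k$ remaining unembeddable attachments.'' The wall lives in the embedded part, so it is trivially disjoint from the modulator vertices themselves; the real obstruction is their \emph{neighbors}: a single modulator vertex $w$ that lies in no optimal solution may a priori have neighbors spread over the entire wall, in which case no large subwall avoids its attachments, and the middle vertex of your wall cannot be declared irrelevant (the re-embedded flat disc must have no internal edges to $M$). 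The paper closes exactly this hole with the second outcome of Theorem~\ref{thm:irr}: if some $w\in M$ has more than $k+2g+1$ scattered neighborhoods inside the flat wall, then $w$ is forced into every solution (Lemmas~\ref{lem:K5-} and~\ref{lem:force-delete}, via the Euler-genus lower bound for the gadget $B_\ell$), and only after this filtering does a subwall free of modulator attachments exist. Note that your guessing of the intersection $X_0'$ in the compression step does not substitute for this: even on the correct guess branch you still need the forced-deletion argument to conclude that the surviving modulator vertices have few attachment clusters.

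A second gap is in the irrelevant-vertex argument itself. You propose a ``rerouting argument in the spirit of \cite{jls}, with rerouted paths staying inside the flat disk,'' but on a surface the difficulty is not routing: given a hypothetical solution $S$ avoiding the middle vertex $v$, the genus-$g$ embedding of $G-S-\{v\}$ is adversarial and may draw the wall in a way completely different from $\Emb_0$. The paper handles this by packing $k+g+2$ concentric rings around $v$, selecting one whose territory avoids $S$, and applying Lemma~\ref{lem:diestel} again, this time inside the unknown target surface, to find a ring that is embedded \emph{plainly} (Lemma~\ref{lem:plainly}); only then can one cut along its central circle and glue back the planar drawing of the interior, now including $v$. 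You flag this as the main obstacle but do not supply the argument. Finally, a minor point of accounting: the $C_g k^2$ width bound does not come from a degraded wall-to-treewidth conversion (which is linear on bounded-genus graphs, as the paper notes), but from needing about $|M|(k+g)$ disjoint subwalls of sidelength $\Theta(k+g)$ to dodge the $\Oh(|M|(k+g))$ attachment clusters remaining after the filtering step, with $|M|=k+1$.
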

    
In the proof of Theorem~\ref{thm:main}, we follow the general approach of Marx and Schlotter~\cite{ildi}.
Our main technical contribution is the generalization of the bounded treewidth subroutine of~\cite{jls} to the bounded genus case.

\begin{theorem}\label{thm:genusVDtw}
Given a \genusVD{} instance $(G,g,k)$ with $|V(G)| = n$, and a tree decomposition of $G$ of width $t$, one can solve the \genusVD{} problem on $(G,g,k)$ in time $2^{\Oh((t+g) \log (t+g))} n$.
\end{theorem}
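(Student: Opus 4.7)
The plan is to generalize the bounded-treewidth subroutine of Jansen, Lokshtanov, and Saurabh~\cite{jls} from the planar to the bounded-genus setting, by running a bottom-up dynamic program on a nice tree decomposition of $G$ of width $t$ whose states encode partial combinatorial embeddings on sub-surfaces. As the combinatorial framework, I would fix the flag-permutation representation of 2-cell embeddings: the embedding of a graph $H$ is encoded by three permutations $(\Vperm, \Eperm, \Fperm)$ on the set of flags (``corners'') of $H$, where $\Vperm$ cycles flags around each vertex, $\Eperm$ is the involution pairing the two flags of each edge, and $\Fperm = \Vperm \Eperm$ traces the facial walks; the Euler genus is recovered from $|\orb(\Vperm)|$, $|\orb(\Eperm)|$, $|\orb(\Fperm)|$ via Euler's formula.

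At a bag $\tdbag_x$ with associated processed vertex set $V_x$, a DP state would consist of (i) a subset $S_x \subseteq V_x \cap \tdbag_x$ of already-chosen deletion vertices together with the total deletion count so far, and (ii) a combinatorial partial embedding of $G[V_x \setminus S_x]$ on a sub-surface with boundary, in which only vertices of $\tdbag_x \setminus S_x$ may lie on the boundary. Concretely, for each bag vertex $v$, the cyclic order around $v$ is split into closed intervals (whose incident faces are already completely enclosed inside $V_x$) and open slots that still face the boundary and await future edges; the boundary itself is recorded as a list of closed walks through the open slots of bag vertices, annotated with the partial Euler genus used so far. A state is discarded whenever its partial genus exceeds $g$ or its deletion count exceeds $k$.

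The transitions would implement the standard nice-tree-decomposition operations. For \textbf{introduce vertex} $v$, one branches on placing $v$ in $S_x$ or adding it as an isolated point on a fresh boundary component. For \textbf{introduce edge} $uv$, one enumerates a pair of open slots of $u$ and $v$ and all ways to route the new edge between them, updating the face/boundary structure: such a routing may merge two boundary components, split one into two, or increase the Euler genus by $1$ or $2$ depending on orientability and on whether the endpoints sit on the same or different components. For \textbf{forget vertex} $v$, one keeps only those states in which $v$ has no remaining open slots, thereby finalizing its faces. For \textbf{join}, one combines two partial embeddings sharing bag $\tdbag_x$ by matching their open slots at each common vertex in a compatible way and gluing boundary arcs, again branching over the genus contribution of the gluing. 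A counting argument then bounds the number of states per bag: with at most $\Oh(t)$ open slots per vertex, at most $\Oh(t+g)$ boundary components (each must either contain a bag vertex or contribute to the genus), and Euler genus at most $g$, the number of combinatorial types is $(t+g)^{\Oh(t+g)} = 2^{\Oh((t+g)\log (t+g))}$, and each transition is computed by brute force in the same asymptotic time per pair of states.

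The main obstacle, and the part that will require the most care, is the formal semantics of the partial-embedding state and the correctness of the \textbf{join} operation. One must prove that every globally optimal embedding of $G - S$ on a surface of Euler genus $\leq g$ decomposes canonically into partial embeddings compatible with some sequence of these transitions, and conversely that every state reachable by the DP corresponds to an actual sub-embedding, so that when the root is reached one has in fact produced a valid embedding of $G - S$ of genus at most $g$. This boils down to an additivity statement for Euler genus under gluing along a matched set of boundary arcs, phrased in the flag-permutation language: gluing two flag systems along a common bag behaves predictably on $|\orb(\Vperm)|$, $|\orb(\Eperm)|$, $|\orb(\Fperm)|$, and hence on $\genus$, which will justify truncating states whose partial genus already exceeds $g$ and will let the DP reconstruct the solution in the claimed $2^{\Oh((t+g)\log(t+g))} n$ time.
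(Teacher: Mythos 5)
Your overall plan---a bottom-up DP over a nice tree decomposition whose states are flag-encoded partial embeddings, with a $2^{\Oh((t+g)\log(t+g))}$ bound on states---is the same family of argument as the paper's, but as written it has a genuine gap precisely at the point that carries the technical weight. Your state is a partial embedding of $G[V_x\setminus S_x]$ on a surface with boundary, and your counting argument rests on the claim that each bag vertex has at most $\Oh(t)$ open slots. That claim is not true for the raw partial embedding: a bag vertex $v$ can be incident to arbitrarily many open faces (e.g.\ many internally disjoint paths through already-forgotten vertices between $v$ and another bag vertex $u$ create unboundedly many faces, each incident to both $v$ and $u$ and each a priori a distinct place where a future edge could be inserted). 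So without an additional compression step the number of combinatorial types per bag is not bounded by any function of $t+g$, and the asserted $(t+g)^{\Oh(t+g)}$ count does not follow. What is missing is an argument that partial embeddings can be replaced by bounded-size representatives without changing what genera are achievable by any completion. This is exactly where the paper invests its effort: it defines $t$-boundaried embeddings (closed embeddings with labelled bag vertices), declares two of them equivalent when genus-minimum \emph{merges} with every third embedding agree, proves that three simplifying operations (deleting an edge at an unlabelled size-$2$ vertex, deleting an edge both of whose faces see no labelled vertex beyond its immediate neighbourhood, suppressing a non-isolated unlabelled size-$4$ vertex) preserve equivalence, and then proves via a discharging argument that the resulting ``nice'' embeddings have at most $48t+24\genus$ flags, which is what yields the $2^{\Oh((t+g)\log(t+g))}$ state bound. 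You would need an analogue of all of this, and it does not reduce to bookkeeping.

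The second issue is one you flag yourself: the semantics of the state and the correctness of the join. In the paper this is not handled by a genus-additivity formula for gluing along boundary arcs (which is delicate to even state correctly, since the optimal way to interleave rotations at a shared vertex depends on both sides); instead, correctness is phrased through two one-directional invariants on the DP table, quantified over arbitrary external $t$-boundaried embeddings $\tEmb_0$ and arbitrary merges with them, and verified at forget and join nodes using observations that edge deletion/drawing commutes with merging. The paper also sidesteps your ``introduce edge'' case analysis (genus increments of $1$ or $2$, orientability) by refining the decomposition so each edge enters at a dedicated leaf and is combined only at join nodes. So while your architecture is recognizably the same, the proposal currently asserts rather than proves the two claims---bounded compressed states and exchangeability of partial embeddings under merging---that constitute the actual proof of Theorem~\ref{thm:genusVDtw}.
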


The proof of Theorem~\ref{thm:genusVDtw} follows the same principle as the corresponding
routine of~\cite{jls} --- building partial embeddings for graphs with small boundaries ---
but requires significant technical hurdle to be presented formally and in full detail.

The task of the algorithm of Theorem~\ref{thm:genusVDtw} can be also seen as a generalization
of an algorithm that computes the Euler genus in graphs of bounded treewidth: by substituting
$k=0$ we obtain an algorithm with running time $2^{\Oh((t+g) \log (t+g))} n$ that checks if the input
graph is embeddable on a surface of Euler genus at most $g$.
Such a result is not new:
a bounded treewidth routine is also part of the current algorithms that compute embeddings
in linear time~\cite{mohar1,mohar2}. 
In particular, the work of Kawarabayashi, Mohar, and Reed~\cite{mohar2} 
claims an algorithm with running time $f(t) \cdot n$ for some function $f$
(i.e., without the exponential dependency on $g$).
However, the work~\cite{mohar2} is an extended abstract from FOCS 2008 that, to the best
of our knowledge, never substantiated in a full version, and we were unable to reproduce
the details of this algorithm from the description in~\cite{mohar2}.

For the second part, namely the irrelevant vertex argument,
we generalize the arguments of Marx and Schlotter~\cite{ildi}:

\begin{restatable}{theorem}{irr}\label{thm:irr}
For every integer $g$ there exists a constant $C_g$ such that the following holds.
Given a graph $G$, an integer $k$, and a set $M \subseteq V(G)$ such that $G-M$ is embeddable into a surface of genus at most $g$,
one can in time $C_g n^{\Oh(1)}$ find one of the following:
\begin{enumerate}
\item a tree decomposition of $G$ of width at most $C_g |M|^{1/2} k^{3/2}$;
\item a vertex $w \in V(G)$ such that every solution to the \genusVD{} instance $(G,g,k)$ contains $w$; or
\item a vertex $v \in V(G)$ such that $(G, g, k)$ is a yes-instance to \genusVD{} if and only if $(G-\{v\}, g, k)$ is.
\end{enumerate}
\end{restatable}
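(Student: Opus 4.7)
The plan is to follow the irrelevant-vertex approach of Marx and Schlotter~\cite{ildi}, generalized to arbitrary surfaces. I would first compute, in linear time via a known algorithm (e.g.\ of Mohar), an explicit embedding of $G - M$ on a surface $\Sigma$ of Euler genus at most $g$, and then run a constant-factor treewidth approximation on $G$. If the approximation certifies $\mathrm{tw}(G) \leq C_g |M|^{1/2} k^{3/2}$, we return the tree decomposition (Case~1). Otherwise the treewidth is large, and a constructive version of the excluded grid theorem yields a wall $W^*$ in $G$ whose height is polynomial in $|M|^{1/2} k^{3/2}$, in time $f(g) n^{\Oh(1)}$.

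The core step is to extract from $W^*$ a \emph{flat} subwall $W$ lying in $G - M$, embedded as a closed disk in $\Sigma$ and disjoint from $N_G[M]$. Two obstructions are discarded in sequence. First (\emph{topological}), the embedding of $G - M$ is not planar, but $\Sigma$ may be planarized by cutting along $\Oh(g)$ noncontractible curves, so only $\Oh(g)$ subwalls of $W^*$ can meet these curves and must be discarded. Second (\emph{apex}), each of the $|M|$ vertices in $M$ may attach to many wall vertices, but after subdividing $W^*$ into bricks one shows that every apex corrupts at most $\Oh(k)$ bricks, and hence at most $\Oh(|M|\cdot k)$ bricks in total are corrupted. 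An averaging argument then isolates a $(k+1) \times (k+1)$ brick region untouched by $M$, which is the desired flat subwall $W$. Matching a clean region of side $\Omega(k)$ against $\Oh(|M|\cdot k)$ corrupted bricks forces the starting wall to have height $\Omega(|M|^{1/2} k^{1/2} \cdot k)$, which is the origin of the exponent $3/2$ in the treewidth bound.

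With such a flat subwall $W$ of height $k+1$ in hand, I would argue that its center $v$ is an irrelevant vertex (Case~3). Take any minimum solution $S$ of size at most $k$. Since $W$ contains $k+1$ nested concentric cycles around $v$, at least one such cycle $\gamma$ is disjoint from $S$, and --- being part of a disk-embedded planar subwall --- bounds in $\Sigma$ a closed disk $D$ containing $v$ in its interior. One can then re-embed $G - S$ inside $D$ to obtain an embedding of $G - (S \cup \{v\})$ of the same Euler genus, proving that $S \setminus \{v\}$ is also a solution to $(G, g, k)$. This yields the equivalence $(G, g, k) \equiv (G - \{v\}, g, k)$. Case~2 is returned as a side product of the apex-processing step: if during the extraction we identify a vertex $w \in M$ whose deletion is necessary for the flat wall to exist in every branch of the procedure, an iterative-compression-style argument lets us conclude that $w$ belongs to every solution.

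The principal technical hurdle is the flat-wall extraction itself in the bounded-genus setting: the apex counting has to be interwoven with the planarization of $\Sigma$ inside a single wall, and the ``flat'' region finally isolated must be a genuine closed disk in the embedding, not a subsurface of higher complexity. Once this is achieved, the re-routing lemma that justifies the surgery around $v$ in the irrelevance argument follows lines similar to those in~\cite{jls}, but again the topology of $\Sigma$ outside $D$ must be carefully accounted for to ensure that the rest of the embedding of $G - S$ remains valid.
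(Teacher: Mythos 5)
Your outline reproduces the paper's high-level skeleton (embed $G-M$, approximate treewidth, extract a flat wall, declare its center irrelevant), but two of its load-bearing steps are asserted rather than proved, and both are exactly where the genus case differs from the planar one. First, the apex-counting step: the claim that ``every apex corrupts at most $\Oh(k)$ bricks'' is false as stated --- a vertex $w\in M$ may be adjacent to vertices in essentially every brick of the wall, and nothing in your procedure prevents this. Handling such a $w$ is precisely what Case~2 of the theorem is for, and your treatment of it (``an iterative-compression-style argument lets us conclude that $w$ belongs to every solution'') is a placeholder, not an argument: iterative compression is what supplies the set $M$ in the first place and cannot certify that a specific vertex lies in \emph{every} solution. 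What is needed is a concrete genus obstruction. The paper shows that if $w$ has more than $k+2g+1$ neighbours scattered in pairwise far-apart balls of the flat wall, then any solution of size $\le k$ avoiding $w$ leaves at least $2g+2$ untouched balls, from which one contracts a minor $B_{2g+1}$ (disjoint copies of $K_5^-$ all attached to $w$ and to a contracted hub vertex), whose Euler genus exceeds $g$ (Lemma~\ref{lem:K5-}); hence $w$ is forced into every solution (Lemma~\ref{lem:force-delete}). Only after this filtering does the bound of $\Oh(|M|(k+g))$ corrupted regions --- and your averaging step --- become valid.

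Second, the irrelevance argument has a genuine gap in where the topology lives. You pick a cycle $\gamma$ of the wall disjoint from the solution $S$ and assert it ``bounds in $\Sigma$ a closed disk $D$''; but $\Sigma$ carries the \emph{given} embedding $\Emb_0$ of $G-M$, whereas $S$ comes with its own embedding of $G-\{v\}-S$ on a possibly different surface, arranged in a possibly completely different way. Flatness of $\gamma$ in the old embedding says nothing about the new one, and controlling the new embedding is the entire difficulty of this step (your closing remark that the surgery ``follows lines similar to those in~\cite{jls}'' defers exactly this point; in the planar case the target is the sphere, so no such control is needed). Relatedly, $k+1$ nested cycles do not suffice: one must survive both the $k$ deleted vertices and up to $g$ cycles that fail to bound disks in the new embedding, so the paper uses $k+g+2$ disjoint \emph{rings} of circumference about $k+1$ (so consecutive surviving rings remain connected after deleting $S$), argues via Lemma~\ref{lem:diestel} applied to the \emph{new} embedding that some ring with territory disjoint from $S$ is embedded plainly (Lemma~\ref{lem:plainly}), and then performs a cut-and-cap surgery along its central circle, re-gluing the planar drawing of the enclosed part taken from $\Emb_0$ (Corollary~\ref{cor:CR}) to obtain an embedding of $G-(S\setminus V(H))$ that accommodates $v$. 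Note also that your stated implication is reversed: the nontrivial direction is that a solution for $(G-\{v\},g,k)$ yields one for $(G,g,k)$, i.e.\ $v$ must be drawn back in, not deleted from an embedding of $G-S$.
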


Theorem~\ref{thm:main} follows now from Theorems~\ref{thm:genusVDtw} and~\ref{thm:irr}
in a standard manner.
By a standard application of the irrelevant vertex technique (cf.~\cite[Section 4]{ksiazka},
we can assume that, apart from the input \genusVD{} instance $(G,g,k)$,
we are additionally given a set $M \subseteq V(G)$ of size $k+1$ such that $G-M$ is embeddable
on a surface of Euler genus at most $g$ (i.e., $M$ is a solution of slightly too large size),
at the cost of an additional $\Oh(n)$ factor in the running time bound.
We iteratively apply the algorithm of Theorem~\ref{thm:irr} to $(G,g,k)$ and $M$:
if a vertex $w$ or $v$ is returned, we delete it and restart (decreasing the parameter $k$ by one in case of a vertex $w$);
if a tree decomposition is returned, we solve the \genusVD{} problem with the algorithm of Theorem~\ref{thm:genusVDtw}.

Since $|M|=k+1$, the algorithm of Theorem~\ref{thm:genusVDtw} is applied to a tree decomposition
of width of the order of $C_g k^2$, yielding the bound promised in Theorem~\ref{thm:main}.

While the lower bound of~\cite{tw-lb} shows that the bounded-treewidth routine of Theorem~\ref{thm:genusVDtw}
has optimal running time (assuming ETH), we conjecture that 
the running time bound of Theorem~\ref{thm:main} is not optimal, and can be improved similarly as it was in the planar case~\cite{jls}.
For this reason, we do not optimize the parameter dependency in Theorem~\ref{thm:irr}, favouring the clarity of the arguments.
In other words, we view our contribution as Theorem~\ref{thm:genusVDtw} being the main technical merit, while
Theorem~\ref{thm:irr} and the resulting Theorem~\ref{thm:main} being an example application.

\medskip

The paper is organized as follows: after introducing notation for permutations and
tree decompositions in Section~\ref{sec:prelims}, we discuss combinatorial embeddings
in Section~\ref{sec:embed}. Theorem~\ref{thm:genusVDtw} is proved in Section~\ref{sec:tw},
and Theorem~\ref{thm:irr} is proved in Section~\ref{sec:irr}.

\section{Preliminaries}\label{sec:prelims}

\subsection{Permutations, involutions, cycles}

For a nonnegative integer $t$, we denote $[t] = \{1,2,\ldots,t\}$.
The group of all permutations of a set $U$ is denoted by $\SymGrp{U}$.
Given a set of permutations $S$ of a set $U$, by $\subgroup{S}$ we denote the subgroup of $\SymGrp{U}$ generated by $S$.
Given a subgroup $\Gamma$ of the group of $\SymGrp{U}$, by $\orb(\Gamma)$ we denote the family of orbits of $\Gamma$.
For a permutation $\sigma$, $\orb(\sigma)$ is a shorthand for $\orb(\subgroup{\sigma})$; this also allows us to speak about orbits of a single permutation.
An orbit is \emph{trivial} if it consists of a single element.

A permutation $\sigma$ is an \emph{involution} if $\sigma(\sigma(i)) = i$ for every $i \in \dom(\sigma)$
and is \emph{fixed-point free} if $\sigma(i) \neq i$ for every $i \in \dom(\sigma)$.
Note that a permutation is a fixed-point free involution if and only if every its orbit is of size exactly two.

A permutation $\sigma$ is a \emph{cycle permutation} if it has exactly one nontrivial orbit; note that $\sigma$ needs to act cyclically on this orbit.
A \emph{cycle} is an unordered pair consisting of a cycle permutation and its inverse.

We will need the operation of \emph{restricting} a cycle permutation $\sigma$ to a subset $A$ of the elements of the nontrival orbit $v$ of $\sigma$:
the result is a permutation $\sigma_A$, where $\sigma_A(e) = e$ for every $e \notin A$ while for every $e \in A$ we have $\sigma_A(e) = \sigma^k(e)$ where $k$ is the minimum positive
integer with $\sigma^k(e) \in A$. In other words, we shorten the nontrivial orbit $v$ by crossing out the elements not belonging to $A$. 
Note that if $|A| \geq 2$, then $\sigma_A$ is also a cycle permutation, while for $|A| \leq 1$ we have $\sigma_A$ being an identity.
The definition of restricting naturally extends to cycles by restricting both components.

For a sequence $P$ of elements of some set, by $\bar{P}$ we denote its reverse. A \emph{subsequence} of a cycle is a sequence consisting of consecutive elements of the nontrivial orbit of the cycle.

In our work we will often analyze the subgroup of the permutation group spanned by two fixed-point free involutions. 
Let $\alpha$ and $\beta$ be two fixed-point free involutions of a set $U$ and let $\Gamma = \subgroup{\alpha, \beta}$.
Observe that for any orbit $v$ of $\Gamma$ and any element $e \in v$, the orbit $v$ consists of elements
$$e, \alpha(e), \beta(\alpha(e)), \alpha(\beta(\alpha(e))), \beta(\alpha(\beta(\alpha(e)))), \alpha(\beta(\alpha(\beta(\alpha(e))))),\ldots.$$
The above order is cyclic: $|v|$ is always even, and if $|v| = 2k$, then $(\beta \circ \alpha)^k (e) = e$.
To fix notation, let us define a cycle permutation $o_e^{\subgroup{\alpha,\beta}}$ of $U$ as follows:
\begin{equation*}
o_e^{\subgroup{\alpha,\beta}}(f) = \begin{cases} 
  \alpha(f) & \mathrm{if}\ f \in v\ \mathrm{and}\ f = (\beta \circ \alpha)^i(e)\ \mathrm{for\ some}\ $i$,\\
  \beta(f) & \mathrm{if}\ f \in v\ \mathrm{and}\ f = \alpha \circ (\beta \circ \alpha)^i(e)\ \mathrm{for\ some}\ $i$,\\
  f & \mathrm{if} f \notin v.
\end{cases}
\end{equation*}
Note that $o_e^{\subgroup{\alpha,\beta}}$ is a cycle permutation whose nontrivial orbit is $v$. Furthermore, while its definition formally depends on the choice of $e$
and the order of $\alpha$ and $\beta$, different choices of $e \in v$ and a potential swap of the roles of $\alpha$ and $\beta$ lead either to $o_e^{\subgroup{\alpha,\beta}}$ or its inverse.
The definition of a cycle is suited to accommodate that: For the cycle permutation $o_e^{\subgroup{\alpha,\beta}}$, its cycle is denoted as $\hat{o}_e^{\subgroup{\alpha,\beta}}$.
By the previous argumentation, the cycle does not depend on the order of $\alpha$ and $\beta$, nor of the choice of the element $e$ within the same orbit.

Let $\hat{\sigma}_1$ and $\hat{\sigma}_2$ be two cycles of disjoint sets $U_1$ and $U_2$, with nontrivial orbits $v_1$ and $v_2$.
A \emph{merge} of $\hat{\sigma}_1$ and $\hat{\sigma}_2$ is a cycle $\hat{\sigma}$ of $U := U_1 \cup U_2$, whose nontrivial orbit $v$ consists of exactly elements of $v_1 \cup v_2$.
Furthermore, for some choice of cycle permutations $\sigma_1 \in \hat{\sigma}_1$, $\sigma_2 \in \hat{\sigma}_2$, $\sigma \in \hat{\sigma}$,
for every $i=1,2$ and $e \in v_i$, if $k$ is the minimum positive integer for which $\sigma^k(e) \in v_i$, then $\sigma^k(e) = \sigma_i(e)$.
In other words, if we restrict the cycle of the nontrivial orbit of $\hat{\sigma}$ to the elements of $U_i$ only, we obtain the cyclic order of the nontrivial orbit of $\hat{\sigma}_i$.

\subsection{Tree decompositions}

Given a graph $G$, a \emph{tree decomposition} of $G$ is a pair $(T, \tdbag)$ where $T$ is a rooted tree and is a function $\tdbag : V(T) \to 2^{V(G)}$
that assigns to every $t \in V(T)$ a \emph{bag} $\tdbag(t) \subseteq V(G)$ such that the following holds:
\begin{itemize}
\item for every edge $e \in E(G)$, there is a node $t \in V(T)$ with $e \subseteq \tdbag(t)$;
\item for every vertex $v \in V(G)$, the set $\{t \in V(T) : v \in \tdbag(t)\}$ is nonempty and induces a connected subgraph of $T$.
\end{itemize}
The width of a decomposition is the maximum size of a bag, minus one.

For a tree decomposition $(T,\tdbag)$ of a graph $G$, we define two auxiliary functions $\tdunder$ and $\tdG$. For a node $t \in V(T)$, 
\begin{itemize}
\item by $\tdunder(t) \subseteq V(G)$ we denote the union of all bags of descendants of $t$ (including $t$ itself);
\item by $\tdG(t)$ we denote the graph $G[\tdunder(t)] -E(G[\tdbag(t)])$, that is, the graph induced by $\tdG(t)$ with the edges
inside the bag $\tdbag(t)$ removed.
\end{itemize}

For dynamic programming algorithms, it is often convenient to work with so-called \emph{nice} tree decompositions, where the bag of the root is empty,
and every node $t \in V(T)$ is of one of the following four types:
\begin{description}
    \item[leaf node] has no children and its bag is empty;
    \item[introduce node] has one child $t'$ such that $\tdbag(t) = \tdbag(t') \cup \{v\}$ for some vertex $v \notin \tdbag(t')$;
    \item[forget node] has one child $t'$ such that $\tdbag(t) = \tdbag(t') \setminus \{v\}$ for some vertex $v \in \tdbag(t')$;
    \item[join node] has two children $t_1$ and $t_2$ with $\tdbag(t) = \tdbag(t_1) = \tdbag(t_2)$.
\end{description}
It is well-known (see, e.g.,~\cite{ksiazka,Kloks94})
that, in polynomial time, one can turn any tree decomposition into an equivalent nice one without increasing its width.

\section{Embeddings and operations on them}\label{sec:embed}

\subsection{Graph and hypergraph embeddings}

We start with a clean but abstract notion of a hypergraph embedding, and then we restrict ourselves only to graph embeddings.

\begin{definition}[hypergraph embedding]
A \emph{hypergraph embedding} is a tuple
$(\Flags, \Eperm, \Vperm, \Fperm)$, where $\Flags$ is a finite set, whose elements are called \emph{flags},
  and $\Eperm$, $\Vperm$, and $\Fperm$ are three fixed-point free involutions of the set $\Flags$.
\end{definition}

Given a hypergraph embedding $\Emb = (\Flags, \Eperm, \Vperm, \Fperm)$, we use the notation $\Flags(\Emb) := \Flags$ etc.
Note that in any hypergraph embedding, since $\Eperm, \Vperm$, and $\Fperm$ are fixed-point free involutions, the number of flags needs to be even.

\begin{definition}[connected components, vertices, edges, faces]
Let $\Emb = (\Flags, \Eperm, \Vperm, \Fperm)$ be a hypergraph embedding. Then
\begin{description}
\item[a connected component] is an orbit of $\conncomp_\Emb := \subgroup{\Eperm, \Vperm, \Fperm}$;
\item[a vertex] is an orbit of $\Vertices_\Emb := \subgroup{\Vperm, \Fperm}$;
\item[an edge] is an orbit of $\Edges_\Emb := \subgroup{\Eperm, \Vperm}$;
\item[a face] is an orbit of $\Faces_\Emb := \subgroup{\Eperm, \Fperm}$.
\end{description}
\end{definition}

Note that, as all three permutations of $\Emb$ are fixed-point free involutions, every vertex, edge, or a face of $\Emb$
can be identified with a cycle, whose nontrivial orbit is the vertex/edge/face in question.

Given a hypergraph embedding $\Emb = (\Flags, \Eperm, \Vperm, \Fperm)$, a \emph{size-$k$} edge, vertex, or face is an orbit of $\Edges_\Emb$, $\Vertices_\Emb$, or $\Faces_\Emb$ that consists of $k$ flags.
Note that $k$ is always a positive even integer in this context.
Also, observe that a size-$2$ edge corresponds to two equal orbits or $\Eperm$ and $\Vperm$, a size-$2$ vertex corresponds to two equal orbits of $\Vperm$ and $\Fperm$,
while a size-$2$ face corresponds to two equal orbits of $\Eperm$ and $\Fperm$.

We now define the genus of an embedding.
\begin{definition}[genus of a hypergraph embedding]
Given a hypergraph embedding $\Emb = (\Flags, \Eperm, \Vperm, \Fperm)$,
      its \emph{genus} is defined as
\begin{equation}\label{eq:genus-hyp}
\genus_\Emb := \tfrac{1}{2} |\Flags| - |\orb(\Vertices_\Emb)| - |\orb(\Edges_\Emb)| - |\orb(\Faces_\Emb)| + 2|\orb(\conncomp_\Emb)|.
\end{equation}
\end{definition}
Since in any hypergraph embedding the number of flags is even, the genus of a hypergraph embedding is always an integer. 

In a graph embedding, every edge consists of four flags.
\begin{definition}[graph embedding]
A hypergraph embedding $\Emb$ is a \emph{graph embedding}, or simply an \emph{embedding}, if every edge consists of exactly four flags. 
\end{definition}
In other words, a hypergraph embedding $\Emb = (\Flags, \Eperm, \Vperm, \Fperm)$ is a graph embedding if no two orbits of $\Eperm$ and $\Vperm$ coincide,
but the involutions $\Eperm$ and $\Vperm$ commute.

Observe that the formula for genus simplifies in case of a graph embedding.
\begin{observation}
If $\Emb = (\Flags, \Eperm, \Vperm, \Fperm)$ is an embedding, then its genus equals
\begin{equation}\label{eq:genus}
\genus_\Emb :=  |\orb(\Edges_\Emb)| - |\orb(\Vertices_\Emb)| - |\orb(\Faces_\Emb)| + 2|\orb(\conncomp_\Emb)|.
\end{equation}
\end{observation}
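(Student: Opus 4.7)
The plan is to derive the simplified formula by directly substituting the graph-embedding assumption into the hypergraph formula~\eqref{eq:genus-hyp}. The only thing that changes between~\eqref{eq:genus-hyp} and~\eqref{eq:genus} is that the term $\tfrac{1}{2}|\Flags|$ is replaced by $|\orb(\Edges_\Emb)|$, so I just need to relate $|\Flags|$ to the number of edges when $\Emb$ is a graph embedding.

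First I would observe that the orbits of the subgroup $\Edges_\Emb = \subgroup{\Eperm, \Vperm}$ acting on $\Flags$ form a partition of $\Flags$, so
\[
|\Flags| = \sum_{e \in \orb(\Edges_\Emb)} |e|.
\]
By the definition of a graph embedding, every such orbit $e$ has size exactly $4$, hence $|\Flags| = 4\,|\orb(\Edges_\Emb)|$, and therefore $\tfrac{1}{2}|\Flags| = 2\,|\orb(\Edges_\Emb)|$.

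Substituting into~\eqref{eq:genus-hyp}, the coefficient on $|\orb(\Edges_\Emb)|$ becomes $2 - 1 = 1$, which yields exactly~\eqref{eq:genus}. There is no real obstacle here: the whole content of the observation is that in the graph case one can express $|\Flags|$ in closed form in terms of the number of edges. The only minor care needed is to double-check that $\subgroup{\Eperm, \Vperm}$ indeed partitions $\Flags$ (which is immediate, since any group action partitions the underlying set into orbits), and to confirm the size-$4$ claim is precisely the content of the graph-embedding definition.
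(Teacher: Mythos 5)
Your argument is correct and is exactly the computation underlying the paper's (unproved) observation: since the orbits of $\subgroup{\Eperm,\Vperm}$ partition $\Flags$ and each has size $4$ for a graph embedding, $\tfrac12|\Flags| = 2|\orb(\Edges_\Emb)|$, and substituting into~\eqref{eq:genus-hyp} gives~\eqref{eq:genus}. Nothing further is needed.
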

From this point, we use only graph embeddings in this work, and call them simply \emph{embeddings}.

If two objects (face, edge, vertex) share a flag, we say that these objects are \emph{incident}.
Note that an object with $k$ flags can be incident to at most $k/2$ objects of each of the other types (e.g., a size-$k$ vertex can be incident to at most $k/2$ edges and $k/2$ faces).
In particular, in an embedding, edge can be incident to one or two faces and one or two vertices. An edge incident with only one vertex is a \emph{loop}.

Let us now relate the aforementioned definition of a (combinatorial) embedding with the natural intuition. Let $G$ be a graph, embedded on a surface.
We visualize every edge as a (thin, and possibly bend) rectangle, with a flag attached at every corner of the rectangle (see Figure~\ref{fig:flags}).
The involution $\Eperm$ pairs up flags on an edge that lie on the same side.
The involution $\Vperm$ pairs up flags on an edge that lie at the same endpoint.
Finally, the involution $\Fperm$ pairs up neighboring flags of consecutive edges around a vertex.
In this manner, an orbit of $\Vertices_\Emb = \subgroup{\Vperm, \Fperm}$ yields a cyclic order of flags around a vertex,
an orbit of $\Edges_\Emb = \subgroup{\Eperm,\Vperm}$ yields a cyclic order of the four flags of an edge,
while an orbit of $\Faces_\Emb = \subgroup{\Eperm, \Fperm}$ yields a cyclic order of flags around a face.
The formula~\eqref{eq:genus} corresponds to the standard notion of an Euler genus of an embedding.

\begin{figure}
\begin{center}
\includegraphics{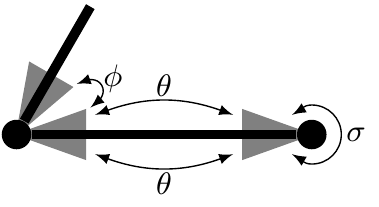}
\caption{Flags (gray triangles) and involutions in an embedding.}
\label{fig:flags}
\end{center}
\end{figure}

\subsection{Basic operations on embeddings}

We will need an operation of \emph{deleting an edge}, and a reverse operation of \emph{drawing a new edge}.

\subsubsection{Deleting an edge}

We start with the former. Let $\Emb = (\Flags, \Eperm, \Vperm, \Fperm)$ be an embedding and let $e$ be an edge of $\Emb$ with flags
$$x,\; x' = \Vperm(x),\; y = \Eperm(x),\; y' = \Vperm(y) = \Eperm(x').$$
We define an embedding $\Emb - e$, the result of deletion of the edge $e$ from $\Emb$, in the following manner.

First, we disconnect the edge $e$ from the rest of the embedding. For this, we modify $\Fperm$ so that $\{x,x'\}$ and $\{y,y'\}$ become its orbits:
If $\{x,x'\}$ is not already an orbit of $\Fperm$, we set $a=\Fperm(x)$ and $a'=\Fperm(x')$, and we replace $\{x,a\}$ and $\{x',a'\}$ with $\{x,x'\}$ and $\{a,a'\}$.
Next, if $\{y,y'\}$ is not already an orbit of $\Fperm$, we set $b=\Fperm(y)$, $b'=\Fperm(y')$, and we replace $\{y,b\}$ and $\{y',b'\}$ with $\{y,y'\}$ and $\{b.b'\}$.
As the third and final step, we delete the flags of $e$ and the corresponding orbits of all three involutions.

%

Note that the first two steps cannot be conveyed in parallel because $\phi(y)$ and $\phi(y')$ might be altered during the first step.
However, a direct check shows that $\{x,x'\}$ and $\{y,y'\}$ both become orbits of $\Fperm$ and that $\Fperm$ keeps being an involution.
What is more, the order of the first two steps (i.e., the arbitrary decision of processing $\{x,x'\}$ prior to $\{y,y'\}$) is irrelevant  
and edge deletion results in deleting the flags of $e$ from the cycles corresponding to the vertices incident with $e$.
This interpretation supports the following observation.
\begin{observation}\label{obs:edge-del-vertices-stay}
Let $\Emb_0=\Emb-e$ for an embedding $\Emb$ and an edge $e$.
For every vertex $v_0$ of $\Emb_0$, there exists a distinct vertex $v$ of $\Emb$, such that the cycle of $v_0$ is the cycle of $v$ with the flags of $e$ removed.
In the other direction, for every vertex $v$ of $\Emb$, either all flags of $v$ are contained in $e$, or there exists a vertex $v_0$ of $\Emb_0$ with the cycle equal to the cycle of $v$
with the flags of $e$ removed.
\end{observation}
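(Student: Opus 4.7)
The plan is to analyze the effect of each of the three steps in the definition of $\Emb - e$ on the cycles corresponding to vertices, and then assemble these local observations into the global statement about vertices of $\Emb$ versus $\Emb_0 = \Emb - e$.

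First I would examine Step 1, which (when not already vacuous) replaces the $\Fperm$-orbits $\{x,a\}, \{x',a'\}$ with $\{x,x'\}, \{a,a'\}$. Let $u$ denote the vertex of $\Emb$ containing $x$. The cycle of $u$, written as the orbit under $\hat{o}_x^{\subgroup{\Vperm,\Fperm}}$, has locally around $x, x'$ the form $\ldots, c, a, x, x', a', c', \ldots$, with $c = \Vperm(a)$ and $c' = \Vperm(a')$ and transitions alternating between $\Vperm$ and $\Fperm$. Since Step 1 modifies $\Fperm$ only at the four flags $x, x', a, a'$, the updated involution agrees with $\Fperm$ outside this set; the new cycle through $a$ therefore becomes $\ldots, c, a, a', c', \ldots$, which is exactly the old cycle of $u$ with the two flags $x, x'$ excised. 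Meanwhile $\{x, x'\}$ is closed under both $\Vperm$ and the new $\Fperm$, so it forms an orbit of $\subgroup{\Vperm, \Fperm}$ on its own.

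Next I would apply the same local-cycle analysis to Step 2, operating on $y, y', b, b'$. The only subtlety is that $b$ and $b'$ are computed with respect to the $\Fperm$ already modified by Step 1, which is precisely why the two steps must be done sequentially. The same reasoning then shows that after Step 2 the cycle through the vertex containing $y$ (which may or may not coincide with $u$) is its previous cycle with $y, y'$ excised, and $\{y, y'\}$ becomes an isolated orbit of $\subgroup{\Vperm, \Fperm}$. Step 3 then simply discards the flags of $e$ together with the orbits they now form under the three involutions; by the preceding analysis these are precisely the isolated vertex-orbits containing the flags of $e$. Collecting everything, every vertex $v_0$ of $\Emb_0$ has its cycle equal to the cycle of a unique vertex $v$ of $\Emb$ with the flags of $e$ removed, and the only vertices $v$ of $\Emb$ with no counterpart in $\Emb_0$ are exactly those whose flags are entirely contained in $\{x, x', y, y'\}$, which gives both directions of the observation.

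The main obstacle I expect is the case analysis for degenerate configurations in which the flags $a, a', b, b'$ collide with one another or with flags of $e$. For instance, $e$ may be a loop (so the vertex containing $x$ coincides with the one containing $y$), Step 1 may be vacuous (when $\{x, x'\}$ is already an $\Fperm$-orbit, meaning $\{x, x'\}$ is already a vertex of $\Emb$), or the intermediate $\Fperm$ after Step 1 may already pair $y$ with $y'$ or with a flag that was just rewired. In each such case the local-cycle computation remains valid because each step only alters $\Fperm$ on at most four flags and is the identity elsewhere; nonetheless, the bookkeeping of \emph{which} isolated orbits are deleted in Step 3 (one orbit $\{x,x',y,y'\}$, two orbits $\{x,x'\}$ and $\{y,y'\}$, or just one of them) requires a careful but routine inspection, and this is the only delicate part of the proof.
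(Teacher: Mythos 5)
Your proposal is correct and takes essentially the same route as the paper, which proves the observation only via the informal remark preceding it (``a direct check shows'' that the modifications of $\Fperm$ amount to excising the flags of $e$ from the cycles of its incident vertices, with $\{x,x'\}$ and $\{y,y'\}$ becoming isolated orbits that are then discarded). Your step-by-step local analysis of the two $\Fperm$-rewirings, including the remark that Step~2 must use the $\Fperm$ already altered by Step~1, is exactly that direct check made explicit, and the degenerate configurations you flag are indeed routine.
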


Next, we describe how genus changes subject to edge deletion.
\begin{lemma}\label{lem:edge-del}
Let $e$ be an edge in the embedding $\Emb$, and let $\Emb_0 = \Emb - e$.
Then the genus of $\Emb_0$ is not larger than the genus of $\Emb$. Furthermore, if $e$ is incident to two faces or to a size-$2$ vertex,
then the genera of $\Emb_0$ and $\Emb$ are equal.
\end{lemma}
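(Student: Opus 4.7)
The plan is to compute
\begin{equation*}
\genusdiff := \genus_\Emb - \genus_{\Emb_0} = \Delta E - \Delta V - \Delta F + 2\Delta C,
\end{equation*}
where $\Delta X := |\orb(X_\Emb)| - |\orb(X_{\Emb_0})|$ for $X \in \{\Edges,\Vertices,\Faces,\conncomp\}$, and to verify non-negativity by case analysis on the incidence structure of $e$. Immediately $\Delta E = 1$, since the deletion removes exactly the $\Eperm$-orbit of $e$ and leaves every other $\Eperm$-orbit intact. By Observation~\ref{obs:edge-del-vertices-stay}, $\Delta V$ equals the number of vertices of $\Emb$ whose flag set is contained in $\{x,x',y,y'\}$, which is $0$, $1$ (one size-$2$ endpoint, or a size-$4$ loop-vertex consisting of only $e$'s flags), or $2$ (both endpoints size-$2$, i.e.\ $e$ isolated).

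Next I would establish a face-side counterpart by unwinding the two-step rewiring of $\Fperm$ in the deletion procedure: each face of $\Emb$ with at least one flag outside $\{x,x',y,y'\}$ either survives to a face of $\Emb_0$, splices with a second such face (when $e$ is incident to two distinct faces of $\Emb$), or splits into two (when both sides of $e$ lie in the same face and the modified $\Fperm$-walk severs); faces fully contained in $\{x,x',y,y'\}$ vanish in $\Emb_0$. A similar inspection yields $\Delta C \in \{-1,0,1\}$, with $\Delta C = -1$ exactly when $e$ is a graph-theoretic bridge whose two endpoints both have degree at least two, and $\Delta C = 1$ exactly when $e$ is the only edge in its component.

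With these structural facts in hand the proof reduces to a finite case check. If $e$ is incident to two distinct faces, then no endpoint has degree one (otherwise the walk around a degree-one endpoint would identify both sides of $e$) and $e$ is not a bridge (bridges have the same face on both sides), so $\Delta V = 0$ and $\Delta C = 0$; the face splicing gives $\Delta F = 1$ in the generic subcase and any additional face vanishings in degenerate subcases (e.g.\ a loop at an otherwise isolated vertex) cancel against the matching contributions to $\Delta V$ and $\Delta C$, so $\genusdiff = 0$ throughout. If $e$ is a pendant edge, then $\Delta V = 1$, $\Delta F = 0$, $\Delta C = 0$, so $\genusdiff = 0$; if $e$ is an isolated edge, then $\Delta V = 2$, $\Delta F = 1$, $\Delta C = 1$, so $\genusdiff = 0$ again. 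In the remaining configurations (single face incident to $e$, both endpoints of degree at least two, $e$ either a bridge or a non-bridge whose deletion splits or preserves the face) a direct tally gives $\genusdiff \in \{0,1,2\}$, all non-negative.

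The main obstacle is the face analysis: whereas $\Vperm$ is untouched by the deletion, making Observation~\ref{obs:edge-del-vertices-stay} essentially immediate, the involution $\Fperm$ is actively rewired in two steps, and one must verify that each rewiring acts on the face walks through $e$ in the claimed structured way (splice, split, or closure), and also correctly accounts for faces that are entirely contained in $\{x,x',y,y'\}$. Once a face analog of Observation~\ref{obs:edge-del-vertices-stay} is pinned down by carefully tracking both rewiring steps, the case analysis above is routine arithmetic.
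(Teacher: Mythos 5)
Your proposal is correct and follows essentially the same route as the paper: you track the change of each term in the genus formula~\eqref{eq:genus} under edge deletion ($\Delta E = 1$, $\Delta V \in \{0,1,2\}$ via Observation~\ref{obs:edge-del-vertices-stay}, the splice/split/survive/vanish behaviour of faces, $\Delta C \in \{-1,0,1\}$) and finish with the same finite case analysis, and the tallies you list — including the degenerate loop-at-an-isolated-vertex cancellation and $\genusdiff \in \{0,1,2\}$ for the remaining single-face configurations — agree with the paper's. The only cosmetic difference is that you dispose of the two-face case by noting that a bridge has both sides on one face, whereas the paper argues $\Delta V = \Delta C = 0$ there directly from the surviving boundary segment $P_1$ of the larger face.
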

\begin{proof}
Let $\Emb = (\Flags, \Eperm, \Vperm, \Fperm)$ and $\Emb_0 = (\Flags_0, \Eperm_0, \Vperm_0, \Fperm_0)$.
Clearly,  $|\orb(\Edges_{\Emb_0})| = |\orb(\Edges_\Emb)|-1$.
We now investigate how the number of vertices, faces, and connected components can change while deleting an edge $e$.
Let $x_1,x_2 = \Vperm(x_1), y_1 = \Eperm(x_1), y_2 = \Eperm(x_2) = \Vperm(y_1)$ be the four flags of $e$.

For vertices, recall that the operation of deleting an edge almost preserves the set of vertices: every vertex $v_0$ of $\Emb_0$ originates from a vertex $v$ of $\Emb$ by deleting
the flags of $e$ from the cycle of $v$. Thus, $|\orb(\Vertices_\Emb)| - |\orb(\Vertices_{\Emb_0})|$ equals the number of vertices of $\Emb$ that have all flags contained in $e$.
There may be $0$, $1$, or $2$ of them. Furthermore, if $e$ is incident to two distinct faces, there are no such vertices, unless $e$ is a loop at a vertex $v$ incident only to $e$; in this case,
$e$ is incident to two faces of size $2$, and its set of flags is a whole connected component of $\Emb$.

For faces, consider first the case when $e$ is incident to two faces, $f_1$ and $f_2$, and assume that $x_i,y_i$ lie on $f_i$ for $i=1,2$.
Then, if the cycle of $f_i$ is $x_i, y_i, P_i$ for some sequence of flags $P_i$, then the deletion of $e$ replaces the faces $f_1$ and $f_2$ with one face with cycle $P_1 \bar{P}_2$, where
$\bar{P}_2$ is the sequence $P_2$ reversed. Consequently, $|\orb(\Faces_\Emb)| - |\orb(\Faces_{\Emb_0})| = 1$ in this case.

Consider now a case when $e$ is incident to one face $f$. If the cycle of $f$ is $x_1, y_1, P_1, x_2, y_2, P_2$ for some sequences of flags $P_1$ and $P_2$, then the deletion of 
$e$ replaces $f$ with a face with a cycle $P_1 \bar{P}_2$. If the cycle of $f$ is $x_1, y_1, P_1, y_2, x_2, P_2$ for some $P_1,P_2$, then the deletion of $e$
replaces $f$ with two faces with cycles $P_1$ and $P_2$, respectively. 
Consequently, in this case we have $|\orb(\Faces_\Emb)| - |\orb(\Faces_{\Emb_0})| \in \{0, -1\}$.

For connected components, note that $|\orb(\conncomp_\Emb)| - |\orb(\conncomp_{\Emb_0})| \in \{-1,0,1\}$: either $e$ connects two distinct vertices that land in different connected components of $\Emb_0$,
or the set of connected components essentially does not change (one connected component loses the flags of $e$) or $e$ is a whole connected component and it disappears in $\Emb_0$.
    
Let us now wrap up the argument. If the set of flags of $e$ is a whole connected component, then $|\orb(\conncomp_\Emb)| - |\orb(\conncomp_{\Emb_0})| = 1$.
This may happen if $e$ is a loop at a vertex incident only to $e$, or if $e$ connects two vertices incident only to $e$.
In the first case, let $k \in \{1,2\}$ be the number of faces $e$ is incident with. We have
\begin{align*}
|\orb(\Faces_\Emb)-|\orb(\Faces_{\Emb_0})| &= k, & |\orb(\Vertices_\Emb)| - |\orb(\Vertices_{\Emb_0})| &= 1, & |\orb(\conncomp_\Emb)| - |\orb(\conncomp_{\Emb_0})| &= 1.
\end{align*}
Consequently, $\genus_\Emb - \genus_{\Emb_0} = 2-k \in \{0, 1\}$. In particular, the genera of $\Emb$ and $\Emb_0$ are equal if $e$ is incident to two faces.

In the second case, when $e$ connects two size-$2$ vertices, $e$ is incident with one face with the same set of flags. Thus
\begin{align*}
|\orb(\Faces_\Emb)|-|\orb(\Faces_{\Emb_0})| &= 1, & |\orb(\Vertices_\Emb)| - |\orb(\Vertices_{\Emb_0})| &= 2, & |\orb(\conncomp_\Emb)| - |\orb(\conncomp_{\Emb_0})| &= 1.
\end{align*}
Consequently, $\genus_\Emb - \genus_{\Emb_0} = 0$.

Now consider a case when $e$ is incident to two faces, but is not a whole connected component. Then, at least one of this faces is of size larger than $2$, say $f_1$ with cycle $x_1,y_1,P_1$.
Then, $P_1$ remains in $\Emb_0$ as a sequence of flags, where every two consecutive flags form an orbit either of $\Fperm$ or of $\Eperm$. 
Consequently, the flags of the vertices incident with $e$
remain in the same vertices and in the same connected component, as $P_1$ connects the first and the last flags of $P_1$. We infer that in this case we have:
\begin{align*}
|\orb(\Faces_\Emb)|-|\orb(\Faces_{\Emb_0})| &= 1, & |\orb(\Vertices_\Emb)| - |\orb(\Vertices_{\Emb_0})| &= 0, & |\orb(\conncomp_\Emb)| - |\orb(\conncomp_{\Emb_0})| &= 0.
\end{align*}
Consequently, $\genus_\Emb = \genus_{\Emb_0}$. Note that at this point we have concluded the proof that $\genus_\Emb = \genus_{\Emb_0}$ if $e$ is incident to two faces.

Consider now a case when $e$ is incident to one face, and there exists its incident vertex $v$ with the set of flags contained in $e$. As we have already excluded the case
when the flags of $e$ is a whole connected component, $e$ is incident with two vertices $v$ and $v'$, and the flags of $v'$ are not contained in $e$.
Furthermore, $v$ is of size $2$, say $v$ consists of flags $y_1$ and $y_2$. Then, the cycle of the face incident with $e$ is $x_1,y_1,y_2,x_2,P$ for some sequence of flags $P$.
Consequently,
\begin{align*}
|\orb(\Faces_\Emb)|-|\orb(\Faces_{\Emb_0})| &= 0, & |\orb(\Vertices_\Emb)| - |\orb(\Vertices_{\Emb_0})| &= 1, & |\orb(\conncomp_\Emb)| - |\orb(\conncomp_{\Emb_0})| &= 0.
\end{align*}
We infer that $\genus_\Emb = \genus_{\Emb_0}$. Note that at this point we have concluded the proof that $\genus_\Emb = \genus_{\Emb_0}$ if $e$ is incident to a size-$2$ vertex. 

Assume now that $e$ is incident to one face, and every vertex incident with $v$ has its set of flags not contained in $e$. 
Let $f$ be the face incident with $e$. If the cycle of $f$ is $x_1,y_1,P_1,x_2,y_2,P_2$, then the resulting face with cycle $P_1 \bar{P}_2$ provides connectivity in $\Emb_0$
between the (remainings of) vertices incident with $e$. Consequently, 
In this case
\begin{align*}
|\orb(\Faces_\Emb)|-|\orb(\Faces_{\Emb_0})| &= 0, & |\orb(\Vertices_\Emb)| - |\orb(\Vertices_{\Emb_0})| &= 0, & |\orb(\conncomp_\Emb)| - |\orb(\conncomp_{\Emb_0})| &= 0,
\end{align*}
and have then $\genus_\Emb - \genus_{\Emb_0} = 1$.

In the last case, if the cycle of $f$ is $x_1,y_1,P_1,y_2,x_2,P_2$, we have
\begin{multline*}
|\orb(\Faces_\Emb)|-|\orb(\Faces_{\Emb_0})| = -1, \quad |\orb(\Vertices_\Emb)| - |\orb(\Vertices_{\Emb_0})| = 0,\\ |\orb(\conncomp_\Emb)| - |\orb(\conncomp_{\Emb_0})| \in \{0,-1\},
\end{multline*}
and then $\genus_\Emb - \genus_{\Emb_0} \in \{0, 2\}$.
\end{proof}

A direct corollary of Lemma~\ref{lem:edge-del} is the following.

\begin{corollary}\label{cor:genus-nonneg}
The genus of an embedding is always nonnegative.
\end{corollary}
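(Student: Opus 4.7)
The plan is to prove the corollary by induction on the number of edges of $\Emb$, using Lemma~\ref{lem:edge-del} as the key monotonicity tool. The idea is that edge deletion never increases the genus, so if we delete edges one by one until nothing is left, the genus at the end serves as a lower bound on the original genus.

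For the base case, I would consider an embedding $\Emb$ with $|\orb(\Edges_\Emb)| = 0$. Since every flag belongs to exactly one $\Edges_\Emb$-orbit, this forces $\Flags(\Emb) = \emptyset$, which in turn gives $|\orb(\Vertices_\Emb)| = |\orb(\Faces_\Emb)| = |\orb(\conncomp_\Emb)| = 0$. Plugging into formula~\eqref{eq:genus} then yields $\genus_\Emb = 0$, which is nonnegative.

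For the inductive step, assume the statement holds for all embeddings with fewer than $m$ edges and let $\Emb$ have $m \geq 1$ edges. Pick any edge $e$ of $\Emb$ and form $\Emb_0 := \Emb - e$. One should briefly note that $\Emb_0$ is indeed a (graph) embedding: the deletion procedure removes exactly the four flags of $e$ together with their orbits in all three involutions, so the remaining structure is still a hypergraph embedding in which every edge still consists of four flags. Hence $\Emb_0$ has $m-1$ edges, and by the induction hypothesis $\genus_{\Emb_0} \geq 0$. By Lemma~\ref{lem:edge-del}, $\genus_{\Emb_0} \leq \genus_\Emb$, so $\genus_\Emb \geq \genus_{\Emb_0} \geq 0$, completing the induction.

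I do not expect any real obstacle here; the only thing that needs a moment of care is the sanity check that $\Emb - e$ is again a graph embedding (so that the inductive hypothesis is applicable and formula~\eqref{eq:genus} is valid for it). Everything else is a one-line invocation of Lemma~\ref{lem:edge-del}.
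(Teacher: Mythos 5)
Your proof is correct and follows essentially the same route as the paper: repeatedly delete edges, invoke Lemma~\ref{lem:edge-del} for monotonicity, and observe that the empty embedding has genus zero; phrasing it as induction on the number of edges (plus the sanity check that $\Emb-e$ is again a graph embedding) is just a slightly more formal packaging of the paper's argument.
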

\begin{proof}
Observe that any embedding can be turned into an empty embedding (i.e., one with no flags) by successive edge deletions. 
Since edge deletion cannot increase the genus (Lemma~\ref{lem:edge-del}), and the empty embedding has genus zero, the lemma follows.
\end{proof}

\subsubsection{Drawing a new edge}
Let us now define a reverse operation to edge deletion.
Before, let us introduce a notion of \emph{position} $p_\Emb(x)$ of a flag $x$. 
Let $x,y=\Eperm(x), y'=\Vperm(y), x'=\Vperm(x)$ be the flags contained in the edge $e$ containing $x$. 
We set $p_\Emb(x)=(\Fperm(x),\Fperm(y))$ if $\Fperm(y)\notin \{x,x'\}$ and $p_\Emb(x)=(\Fperm(x),\Fperm(\Vperm(\Fperm(y))))$ otherwise. 
It is easy to observe that  defining the process of deletion of $e$, we set $a$ and $b$ so that $(a,b)=p_\Emb(x)$.
This lets us use $p_\Emb(x)$ to undo the deletion.
We often do not want to specify the flags of the new edge. In this case, we write $a=\bot$ instead of $a=x'$, $a=\top$ instead of $a=y$,
$a=\top'$ instead of $a=y'$, and $b=\bot$ instead of $b=y'$.
%

Let $\Emb = (\Flags, \Eperm, \Vperm, \Fperm)$ be an embedding, and let $a\in \Flags\cup\{\bot,\top,\top'\}$ and $b\in \Flags\cup\{\bot\}$.
By \emph{drawing a new edge $(x,y,y',x')$ at $(a,b)$} we mean the following operation, resulting in an embedding $\Emb_0$. 
%
%
%
First, we add new flags $\{x,y,y',x'\}$ to $\Flags$ and we set $\{x,x'\}$ and $\{y,y'\}$ to be new orbits of $\Vperm$ and $\Fperm$,
and setting $\{x,y\}$ and $\{x',y'\}$ to be two new orbits of $\Eperm$.
Next, we replace $a=\bot$, $a=\top$, $a=\top'$, and $b=\bot$ with $a=x'$, $a=y$, $a=y'$, and $b=y'$, respectively.
Finally, we adjust $\Fperm$ in the following two steps: If $b\ne y'$, we define $b'=\Fperm(b)$ and replace the orbits $\{y,y'\}$ and $\{b,b'\}$ with $\{y,b\}$ and $\{y',b'\}$.
Then, if $a \ne x'$, we define $a'=\Fperm(a)$ and the orbits $\{x,x'\}$ and $\{a,a'\}$ with $\{x,a\}$ and $\{x',a'\}$.

We defined this operation so that it is clear that if $e=(x,y,y',x')$ is an edge of $\Emb$, then one can retrieve $\Emb$ from $\Emb-e$ by drawing a new edge $(x,y,y',x')$ at $p_\Emb(x)$. 
It is also easy to verify that drawing a new edge is a well defined and that if $\Emb_0$ is obtained from $\Emb$ by drawing a new edge $e$, then $\Emb=\Emb_0-e$.

 Note that $a=\bot$ and $b=\bot$ both result in creating a new vertex
(two new vertices if $a=\bot$ and $b=\bot$ hold simultaneously). Moreover, if $a=\bot$, then $x$ belongs to a new size-2 vertex $(x,x')$, 
while $a=\top$, then $x$ belongs to a new size-2 face $(x,y)$.

We identify one more special case of drawing a new edge. If $a$ and $b$ are distinct flags that lie on the same face $f$, and furthermore, the order of flags on the cycle of $f$ is $\Fperm(a), a, P, b, \Fperm(b), P'$
for some (possibly empty) sequences of flags $P$ and $P'$, then we say that the new edge \emph{is drawn along the boundary of $f$}. 
Observe that if this is the case, then the new edge $e$ is
incident to two faces: in the new embedding $\Emb_0$, the face $f$ has been split
into a face with cycle $P, a, x, y, b$ and a face with cycle $P', \Fperm(b), y', x', \Fperm(a)$.
We explicitly allow here also the case $b = \Fperm(a)$; then the cycle of $f$ is $a, P, b$ for some sequence $P$, and the new embedding has new faces with cycles $x',y'$ and $P, a, x, y, b$.

Consequently, from Lemma~\ref{lem:edge-del} we immediately obtain the following.
\begin{lemma}\label{lem:edge-draw}
If $\Emb_0$ is created from $\Emb$ by drawing a new edge, then the genus of $\Emb_0$ is not smaller than the genus of $\Emb$.
Furthermore, the genera of $\Emb$ and $\Emb_0$ are equal if the edge has been drawn along a face boundary or at $(a,b)$ with  $a\in \{\bot,\top\}$.
\end{lemma}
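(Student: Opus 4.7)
The plan is to reduce the lemma directly to Lemma~\ref{lem:edge-del} via the fact, already established in the discussion preceding the statement, that drawing a new edge is the exact inverse of edge deletion: if $\Emb_0$ is obtained from $\Emb$ by drawing a new edge $e$, then $\Emb = \Emb_0 - e$. Applying the first part of Lemma~\ref{lem:edge-del} to $\Emb_0$ and its edge $e$ immediately yields $\genus_\Emb \leq \genus_{\Emb_0}$, which settles the non-decrease part of the statement.

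For the equality assertions, the strategy is to verify, in each listed case, that in $\Emb_0$ the edge $e$ is incident either to two distinct faces or to a size-$2$ vertex; these are precisely the situations in which the second part of Lemma~\ref{lem:edge-del} forces preservation of the genus. When the edge is drawn along the boundary of a face $f$, the discussion preceding the lemma has already decomposed $f$ into two distinct faces of $\Emb_0$, both incident to $e$, so the two-face condition holds. When $a = \bot$, that same discussion identifies $\{x, x'\}$ as a new size-$2$ vertex of $\Emb_0$ incident to $e$, so the size-$2$ vertex condition holds. When $a = \top$, it identifies $\{x, y\}$ as a new size-$2$ face of $\Emb_0$; since $e$ is also incident to the (necessarily distinct) face containing $x'$, the two-face condition holds again.

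The main obstacle is really only bookkeeping: the drawing operation is defined through a short sequence of $\Fperm$-surgery steps, and to justify the incidence claims stated informally before the lemma one has to trace through them carefully — in particular, checking that $\{x, x'\}$ genuinely forms an orbit of $\subgroup{\Vperm, \Fperm}$ when $a = \bot$, that $\{x, y\}$ forms an orbit of $\subgroup{\Eperm, \Fperm}$ when $a = \top$, and that the face boundary case splits one face into two. Once these local incidences are confirmed, Lemma~\ref{lem:edge-del} does all the remaining work and the proof is immediate.
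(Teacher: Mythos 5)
Your proposal is correct and follows exactly the paper's route: the paper derives this lemma immediately from Lemma~\ref{lem:edge-del}, using that drawing a new edge inverts deletion and that the preceding discussion already supplies the needed incidences (two faces when drawn along a face boundary, a size-$2$ vertex when $a=\bot$, and a size-$2$ face forcing a second distinct face when $a=\top$). Your case check for $a=\top$ (the size-$2$ face $\{x,y\}$ together with the distinct face containing $x'$) is precisely the right way to fit that case into the two-face condition of Lemma~\ref{lem:edge-del}.
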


\subsection{$t$-boundaried embeddings}

\begin{definition}[$t$-boundaried embedding]
A \emph{$t$-boundaried embedding} is a tuple $(\Emb, t, \Label)$ where $\Emb = (\Flags, \Eperm, \Vperm, \Fperm)$ is an
embedding, $t$ is a nonnegative integer, and $\Label$ is an injective function from a subset of $\orb(\Vertices_\Emb)$ to $[t]$.
The elements of the domain of $\Label$ are called \emph{labelled vertices}, and the elements of $[t]$ are \emph{labels}.

A genus of a $t$-boundaried embedding $(\Emb, t, \Label)$ is the genus of the underlying embedding $\Emb$.
\end{definition}

The main motivation to introduce $t$-boundaried embeddings is to then merge them.

\begin{definition}[merge of two $t$-boundaried embeddings]
Let $\tEmb_1$ and $\tEmb_2$ be two $t$-boundaried embeddings, where $\tEmb_i = (\Emb_i, t, \Label_i)$ and $\Emb_i = (\Flags_i, \Eperm_i, \Vperm_i, \Fperm_i)$ for $i=1,2$,
    and the sets of flags $\Flags_1$ and $\Flags_2$ are disjoint.
A $t$-boundaried embedding $\tEmb = (\Emb, t, \Label)$ with $\Emb = (\Flags, \Eperm, \Vperm, \Fperm)$ is a \emph{merge} of $\tEmb_1$ and $\tEmb_2$ if it is created by the following process.

First, we take $\tEmb$ to be a disjoint union of $\tEmb_1$ and $\tEmb_2$, that is, we take:
\begin{align*}
\Flags & = \Flags_1 \cup \Flags_2 &
\Eperm & = \Eperm_1 \cup \Eperm_2 \\
\Vperm & = \Vperm_1 \cup \Vperm_2 & 
\Fperm & = \Fperm_1 \cup \Fperm_2 \\
\Label &= \Label_1 \cup \Label_2 & &
\end{align*}
Then, for every label $\ell \in [t]$ that is contained in both the range of $\Label_1$ and $\Label_2$, that is, there exists a cycle $C_i$ corresponding to the orbit $\Label_i^{-1}(\ell)$ for $i=1,2$,
we modify $\Fperm$ on the elements of $C_1$ and $C_2$ so that these elements form a single orbit of $\subgroup{\Vperm, \Fperm}$ whose cycle $C$ is a merge of $C_1$ and $C_2$.
We assign this cycle the label $\ell$ in the assignment $\Label$.

A \emph{genus-minimum} merge of $\tEmb_1$ and $\tEmb_2$ is a merge that minimizes its genus.
\end{definition}
Note that, as we only modify $\Fperm$ in the merge operation, the set of edges of a merge is a union of the edges of the components. In particular, it follows that every edge of a merge consists
of four flags, and thus it is indeed an embedding.

\begin{definition}[equivalence of $t$-boundaried embeddings]
Two $t$-boundaried embeddings $\tEmb_1$ and $\tEmb_2$ are \emph{equivalent} if
for every $t$-boundaried embedding $\tEmb$, the genera of genus-minimum merges of $\tEmb$ and $\tEmb_i$ for $i=1,2$ are equal.
\end{definition}

By Observation~\ref{obs:edge-del-vertices-stay}, there is a natural correspondence between the vertices of $\Emb$ and $\Emb-e$ for every edge $e$ of $\Emb$. 
This correspondence allows us to extend the definition of edge deletion to $t$-boundaried embeddings: if $\tEmb = (\Emb, t, \Label)$ and $e$ is an edge in $\Emb$,
then a $t$-boundaried embedding $\tEmb-e$ is defined as $(\Emb-e, t, \Label_0)$, where $L_0$ is defined so that for every
vertex $v_0$ of $\Emb-e$, we take the corresponding vertex $v$ of $\Emb$, and copy its label to $v_0$ if $v$ is labelled.
Note that the range of $\Label_0$ may be a proper subset of the range of $\Label$ if some labelled vertex in $\tEmb$ has all its flags contained in the deleted edge $e$.

Furthermore, the characterization of vertex cycles in \cref{obs:edge-del-vertices-stay} lets us relate edge deletion to merging.
\begin{observation}\label{obs:merge-deletion}
Consider a merge $\tEmb$ of $t$-boundaried embedding $\tEmb_1$ and $\tEmb_2$.
For every edge $e$ of $\tEmb_1$, we have that $\tEmb-e$ is a merge of $\tEmb_1-e$ and $\tEmb_2$.
\end{observation}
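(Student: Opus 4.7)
The plan is to verify that the two descriptions of $\tEmb - e$---delete-then-merge versus merge-then-delete---produce the same tuple $(\Flags, \Eperm, \Vperm, \Fperm, \Label)$. Both the merge and the deletion operations modify only $\Fperm$ (the merge additionally stitches labels; the deletion additionally drops flags). The merge rewires $\Fperm$ at one combined vertex cycle per label shared by $\tEmb_1$ and $\tEmb_2$, while the deletion rewires $\Fperm$ only at the at most four $\Fperm$-neighbors of the flags of $e$ before removing those flags. Since $e$ is an edge of $\tEmb_1$, every flag of $e$ and every flag $\Fperm$-adjacent to such a flag in the disjoint union of $\tEmb_1$ and $\tEmb_2$ lies in $\Flags_1$. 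Consequently, the only place where the two processes can conflict is a labelled vertex $v_1$ of $\tEmb_1$ which is incident to $e$ and whose label $\ell$ is also used in $\tEmb_2$; everywhere else they act on disjoint flag sets and commute trivially.

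The main step would then be a careful analysis at such a conflicting vertex. Letting $v_2$ be the $\ell$-labelled vertex of $\tEmb_2$, letting $C_i$ be the cycle of $v_i$, and letting $D \subseteq v_1$ be the flags of $e$ lying at $v_1$ (a size-$2$ subset in general, or all four flags when $e$ is a loop at $v_1$), the merge $\tEmb$ has a combined cycle $C$ at this vertex that is, by definition, a merge of $C_1$ and $C_2$. By Observation~\ref{obs:edge-del-vertices-stay}, the cycle at this vertex in $\tEmb - e$ is $C$ with $D$ crossed out, while the cycle of $v_1$ in $\tEmb_1 - e$ is $C_1$ with $D$ crossed out; the task reduces to showing that the former is itself a merge of the latter with $C_2$. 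The boundary case $v_1 \setminus D = \emptyset$ is consistent as well: on the left, the restricted cycle reduces to $C_2$; on the right, the label $\ell$ disappears from $\tEmb_1 - e$, so no merging at $\ell$ takes place and the resulting vertex is simply $v_2$ with cycle $C_2$.

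The principal obstacle is thus a small algebraic lemma about restriction: if $C$ is a merge of $C_1$ and $C_2$, then the restriction of $C$ to $(v_1 \setminus D) \cup v_2$ is itself a merge of the restriction of $C_1$ to $v_1 \setminus D$ and of $C_2$. I would prove this directly from the ``minimum positive $k$ with $\sigma^k(f) \in A$'' definition of cycle restriction, observing that both ``restrict to a subset'' and ``merge of cycles'' amount to specifying which positions are skipped when iterating the underlying cycle permutation, so skipping only $D$ after restricting to each $v_i$ gives the same permutation as restricting directly to $(v_1\setminus D)\cup v_2$. Once this is in hand, the global statement follows by stitching the local checks together, and the label bookkeeping works out because the label function $\Label_0$ from the edge-deletion definition drops exactly those labels whose carrying vertex is entirely consumed by $e$, matching precisely the labels that disappear from $\tEmb_1 - e$ on the merge-side.
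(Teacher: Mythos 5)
Your proposal is correct and follows the route the paper intends: the paper states this as an observation without an explicit proof, justified only by the cycle characterization of Observation~\ref{obs:edge-del-vertices-stay}, and your argument (commuting of the two $\Fperm$-modifications away from shared-label vertices incident to $e$, plus the fact that restricting a merged vertex cycle to the surviving flags yields a merge of the restricted cycles, with the fully-consumed-vertex case handled separately) is exactly the verification that observation is meant to encapsulate.
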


Similarly, we can define the operation of drawing a new edge in a $t$-boundaried embedding; if a new vertex is created by this operation (due to $a$ or $b$ being equal to $\bot$),
we need to specify its label (or the fact that it is unlabelled). Thus, for each label $\ell$ unused in $\tEmb$, we add a special value $\bot_\ell$ available for $a$ and $b$, denoting the fact that 
the new vertex remains is labelled $\ell$; ordinary $\bot$ denotes the fact that it is unlabelled.
A counterpart of \cref{obs:merge-deletion} requires dealing with a special situation.

\begin{observation}\label{obs:merge-draw}
Consider a merge $\tEmb$ of $t$-boundaried embedding $\tEmb_1$ and $\tEmb_2$,
and let $\tEmb_1'$ be an embedding obtained from $\tEmb_1$ by drawing an edge at $(a,b)$.
Then a merge of $\tEmb_1'$ and $\tEmb_2$ can be obtained from $\tEmb$ by drawing an edge at $(a',b')$ ,
where $a'=a$ and $b'=b$ except for the following situation:
   if $a=\bot_\ell$  (or $b=\bot_\ell$) for a label $\ell$ used in $\tEmb$ but not in $\tEmb_1$, then $a'$ (resp. $b'$) is an arbitrary flag of $\tEmb$ contained in a vertex with label $\ell$.
\end{observation}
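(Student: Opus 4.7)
My plan is to proceed by a case analysis on the special values taken by $a$ and $b$, leveraging the fact that both the merge and the drawing modify only the involution $\Fperm$ (besides the fresh flags and their $\Eperm$- and $\Vperm$-orbits introduced by drawing). The merge splices together, for each label $\ell$ shared between $\tEmb_1$ and $\tEmb_2$, the cycles of the corresponding vertices, touching $\Fperm$ only on flags of such vertices. The drawing at $(a,b)$ modifies $\Fperm$ only at $a$, $\Fperm(a)$, $b$, $\Fperm(b)$, whenever these are actual flags. It therefore suffices to check, in each case, that the resulting embedding satisfies the merge condition: each merged vertex cycle of the output restricts to the corresponding cycles of $\tEmb_1'$ and of $\tEmb_2$.

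In the generic case, when $a$ and $b$ are ordinary flags of $\tEmb_1$ or take one of the special values $\bot$, $\top$, $\top'$, or $\bot_\ell$ with $\ell$ unused in $\tEmb$, I set $a' = a$ and $b' = b$. Drawing at $(a',b')$ in $\tEmb$ inserts the new flags of the new edge into the $\subgroup{\Vperm,\Fperm}$-cycle of the vertex containing $a$ immediately next to $a$, and analogously for $b$. Restricting the resulting cycle to $\tEmb_2$-flags does not see this insertion, so it agrees with the original $\tEmb_2$-cycle; restricting to the $\tEmb_1$-flags plus the new flags places them right after $a$ in the $\tEmb_1$-cycle, matching the cycle of $\tEmb_1'$ by the definition of drawing. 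The same restriction property holds at $b$. Hence the drawn embedding is a valid merge of $\tEmb_1'$ and $\tEmb_2$.

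The remaining subtlety is when $a = \bot_\ell$ for a label $\ell$ used in $\tEmb$ (equivalently, in $\tEmb_2$) but not in $\tEmb_1$; the $b$ case is symmetric. In $\tEmb_1'$, drawing at $\bot_\ell$ creates a fresh size-$2$ vertex with flags $x, x'$ labelled $\ell$. A merge of $\tEmb_1'$ with $\tEmb_2$ at the label $\ell$ must interleave this two-element cycle with the cycle of the vertex labelled $\ell$ in $\tEmb_2$, which necessarily amounts to inserting $x$ and $x'$ as adjacent elements at some position of the $\tEmb_2$-cycle. Correspondingly, choosing $a'$ to be an arbitrary flag of that vertex in $\tEmb$ and drawing at $(a',b')$ inserts $x, x'$ next to $a'$, producing exactly such a merge. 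The main technical obstacle is the restriction check in the generic case: although the cycle of the vertex containing $a$ in $\tEmb$ has been reshaped by merging, the strictly local insertion beside $a$ still yields the correct restrictions, because the next $\tEmb_1$-flag following $a$ in the merged cycle is precisely $\Fperm_1(a)$, while the cyclic order of the $\tEmb_2$-flags is untouched by the insertion.
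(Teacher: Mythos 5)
Your proof is correct and follows exactly the reasoning the paper leaves implicit: the statement is given as an observation without proof, justified precisely by the locality you exploit (both merging and drawing alter only $\Fperm$, and only at the specified flags), together with the restriction property defining a merge, and your treatment of the $\bot_\ell$ case matches the intended one. The only slight overstatement is that a merge of $\tEmb_1'$ and $\tEmb_2$ ``necessarily'' keeps $x$ and $x'$ adjacent --- this holds for embedding merges because $\Vperm$ is untouched, but in any case it is not needed, since the claim is existential and your construction visibly produces a valid merge.
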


\subsection{Nice embeddings}
We say that a vertex  or a face is \emph{isolated} if its set of flags is a whole connected component. 
\begin{definition}[nice ($t$-boundaried) embedding]
A $t$-boundaried embedding $\tEmb = (\Emb, t, \Label)$ with $\Emb = (\Flags, \Eperm, \Vperm, \Fperm)$ is \emph{nice} if the following two conditions hold:
\begin{enumerate}
\item If an unlabelled vertex consists of less than $6$ flags, then it is isolated.
\item If an edge $e$ is incident to two faces, then for every face $f$ incident with $e$, if we denote by $x$ and $y = \Eperm(x)$ the two flags contained both in $f$ and $e$,
  then there is a flag $z\in f\setminus \{x,y,\Fperm(x),\Fperm(y)\}$ that is contained in a labelled vertex.
\end{enumerate}
\end{definition}

Our goal in this section is to show that any embedding can be turned into an equivalent nice one.
The main motivation for such a cleaning step is that a nice embedding enjoys a good size bound due to the Euler formula-style estimations, presented in the next section.

\subsubsection{Nice embeddings are small}

\begin{lemma}\label{l:nice-bound}
A $t$-boundaried nice embedding $\tEmb = (\Emb, t, \Label)$ of genus $\genus$ satisfies
$$|\Flags(\Emb)| \leq 48t+24\genus_\tEmb.$$
\end{lemma}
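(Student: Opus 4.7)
The plan is to use Euler's formula to turn the target inequality into a counting problem, and then to analyze the connected components of $\Emb$ separately, with the main effort going into components that contain a labelled vertex. Using $|\Flags(\Emb)|=4|\orb(\Edges_\Emb)|$ together with~\eqref{eq:genus}, the desired inequality $|\Flags(\Emb)|\leq 48t+24\genus_\tEmb$ is equivalent to $V+F-2C\leq 12t+5\genus_\tEmb$, where $V,F,C$ abbreviate $|\orb(\Vertices_\Emb)|$, $|\orb(\Faces_\Emb)|$, and $|\orb(\conncomp_\Emb)|$. The niceness hypotheses yield the following size constraints: at most $t$ labelled vertices, each of size at least $2$; each unlabelled non-isolated vertex has size at least $6$ (condition~1); every face has size at least $4$ (condition~2 rules out size-$2$ faces, since these would need a witness $z$ that cannot fit); and every face incident to a 2-face-incident edge has size at least $6$ and contains a labelled-vertex flag.

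I would handle components without any labelled vertex via the following structural observation: condition~2 forces every edge in such a component to be 1-face-incident (otherwise its face would require a nonexistent labelled witness), and whenever every edge is 1-face-incident, the four flags of each edge lie in a single face, so the face-orbit $\subgroup{\Eperm,\Fperm}$ is additionally closed under $\Vperm$ and hence equals the full connected-component orbit $\subgroup{\Eperm,\Vperm,\Fperm}$, giving $F^c=1$. For a single-vertex isolated component (a bouquet of loops), Euler then gives $\genus^c=E^c$, whence $|\Flags^c|=4\genus^c\leq 24\genus^c$. For a component with at least two vertices, combining $F^c=1$ and $C^c=1$ in Euler's formula with the vertex-size bound $6V^c\leq 4E^c$ yields $V^c\leq 2\genus^c-2$ and $|\Flags^c|\leq 12\genus^c-12\leq 24\genus^c$.

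For components with $\ell\geq 1$ labelled vertices, the per-component target becomes $|\Flags^c|\leq 48\ell+24\genus^c$. Here I would use the vertex bound $V^c\leq (2\ell+2E^c)/3$ together with a face bound that splits the faces into those touching a 2-face-incident edge (these have a labelled-flag witness by condition~2, and are bounded by counting labelled flags) and those with only 1-face-incident edges (at most one per connected component of the 1-face-incident subgraph, hence at most $V^c$); substituting into Euler's formula and chasing the resulting inequalities gives the claim. Summing over components yields $|\Flags(\Emb)|\leq \sum_c(48 V_L^c+24\genus^c)\leq 48t+24\genus_\tEmb$. The main technical obstacle is exactly this labelled case, where the interplay between 1-face-incident and 2-face-incident edges together with the labelled-flag witnesses of condition~2 must be tracked carefully; the two cases without labelled vertices are comparatively clean consequences of the structural lemma that an all-1-face-incident connected embedding has a single face.
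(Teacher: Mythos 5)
Your reduction to $V+F-2C\leq 12t+5\genus_\tEmb$, the preliminary observations (a nice embedding has no size-$2$ face; a face meeting an edge that is incident to two faces has size at least $6$ and contains a labelled flag; a face all of whose edges are incident to one face only is closed under $\Vperm$ and hence is an entire connected component), and the two unlabelled-component cases are all correct, and this per-component route genuinely differs from the paper, which runs a single global discharging argument. However, the case you yourself flag as the main obstacle --- components containing labelled vertices --- is not actually carried out, and the bounds you state there do not suffice. Write $L$ for the number of flags lying in labelled vertices of the component. Your vertex bound is $V^c\leq \ell+(4E^c-L)/6$ and your face bound is $F^c\leq L+V^c$ (faces touching a $2$-face-incident edge counted via labelled flags, plus the remaining faces). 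Substituting into $\genus^c=E^c-V^c-F^c+2$ gives at best $\genus^c\geq \tfrac13 E^c-\ell-\tfrac56 L+1$ (and with the cruder $F^c\leq L+V^c$ the coefficient of $E^c$ even becomes negative). Since $L$ is not bounded in terms of $\ell$ and $\genus^c$ --- a single labelled vertex may carry arbitrarily many flags --- no inequality of the form $4E^c\leq 48\ell+24\genus^c$ follows, so ``chasing the resulting inequalities'' does not give the claim.

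What is missing is exactly the finer structure that the paper's discharging rules encode: a face may not merely ``have a labelled flag''; its labelled flags must be spread out, or else the face is of a very restricted type. Concretely, if the only labelled flags of a face $f$ are one $\Fperm$-orbit $\{x,\Fperm(x)\}$ and $e$ is the edge containing $x$, then condition~2 forces $e$ to be incident to $f$ only, hence $\Vperm(x)\in f$ is labelled, hence $\Vperm(x)=\Fperm(x)$: the labelled vertex has size $2$ and $f$ is its unique incident face, so there are at most $\ell$ such faces in the component. Similarly, a face whose labelled flags are exactly the four consecutive flags $x,\Eperm(x),\Fperm(x),\Fperm(\Eperm(x))$ must be an isolated size-$4$ face, while a face with two nonconsecutive labelled corners has at least eight flags on edges meeting labelled vertices, which is what makes a $\tfrac{1}{8}$-per-flag (i.e.\ proportional-to-$E^c$) accounting possible. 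Without these facts, the counting $F_a\leq L$ admits configurations --- for instance, one huge labelled vertex surrounded by many faces each containing a single one of its corners --- that your inequalities cannot rule out even though niceness does (such a ``wheel'' violates condition~2). So the unlabelled cases stand, but the labelled case needs essentially the paper's face case analysis to be reproved inside your per-component framework before the proof is complete.
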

\begin{proof}
Let $\Emb = (\Flags, \Eperm, \Vperm, \Fperm)$. We perform a discharging argument. 
The setup is as follows:
\begin{itemize}
\item every labelled vertex receives a charge of $2$;
\item every isolated vertex receives a charge of 1;
\item every isolated face receives a charge of 1;
\item every edge receives a charge of $\frac56$.
\end{itemize}
The total initial charge is at most
$$2t+2|\orb(\orb(\conncomp_\Emb))|+\tfrac56|orb(\Edges_\Emb)|.$$

Then we move the charge according to the following rules:
\begin{enumerate}
\item Every labelled vertex that is incident to only one face, sends a charge of $1$ to the face it is incident with.
\item Every edge that is incident only to labelled vertices, divides a charge of $\frac23$ equally among the faces it is incident with.
In other words, every flag in such an edge sends a charge of $\frac16$ to the face it is contained in.
\item Every edge that is incident with two vertices, one labelled and one unlabelled, gives a charge of $\frac13$ to the unlabelled incident vertex, and divides the remaining charge of $\frac12$ equally between
the faces it is incident with.
In other words, in the first part every flag in such an edge contained in unlabelled vertex sends a charge of $\frac16$ to the vertex it is contained in.
In the second part, every flag in such an edge sends a charge of $\frac18$ to the face it is contained in.
\item Every edge that is incident only to unlabelled vertices, divides the charge of $\frac23$ equally among the vertices it is incident with.
In other words, every flag in such an edge sends a charge of $\frac16$ to the vertex it is contained in.
\end{enumerate}
Clearly, every edge is left with non-negative charge ($0$ or $\frac16$). We now show that at the end of the process, every vertex and every face has charge of at least one.

First, let us consider a vertex $v$. If $v$ is labelled or isolated, $1$ out of its initial charge remained, and the claim is straightforward. 
Next, we assume that $v$ is unlabelled and not isolated. 
By the first property of a nice embedding, $v$ is of size at least $6$. 
It received exactly $\frac16$ from each of its flags, i.e., at least 1 in total.

Consider now a face $f$. We make case distinction depending on how many flags of $f$ belong to labelled vertices
and how these flags are located on the cycle corresponding to $f$.

\medskip

 \noindent\textbf{No flags of $f$ belong to a labelled vertex.} 
 Observe that every edge $e$ incident with $f$ is incident only with one face, as otherwise it would contradict the second property of a nice embedding.
Consequently, $f$ is isolated an it received an initial charge of 1.

\noindent\textbf{Exactly 2 flags of $f$ belong to a labelled vertex.} Suppose $f$ is incident with only one labelled vertex $v$ and shares two flags $x$ and $y = \Fperm(x)$ with $v$.
Let $e$ be the edge containing $x$. By the second property of a nice embedding, $f$ is the only face incident with $e$.
Consequently, $\Vperm(x)$ belongs to $f$. Since $\Vperm(x)$ belongs to $v$, which is a labelled vertex, we infer that $y = \Vperm(x)$, the vertex $v$ is of size $2$, and $f$ is the only face
$v$ is incident with. Thus, $f$ received a charge of $1$ from $v$.

\noindent\textbf{Exactly $4$ flags of $f$ belong to labelled vertices and they are consecutive along $f$.}
Let $x$, $y = \Eperm(x)$, $x' = \Fperm(x)$, and $y' = \Fperm(y)$ be these four flags, and let $e$ be the edge containing $x$ and $y$.
If $e$ is incident to two faces, then it violates the last property of a nice embedding. 
Otherwise, the orbit $\{\Vperm(x),\Vperm(y)\}$ of $\Eperm$ appears on $f$. Since only $4$ flags of $f$ belong to labelled vertices, we need to have $\{\Vperm(x),\Vperm(y)\} = \{x',y'\}$,
  and $f$ is an isolated face of size $4$. Hence, it received an initial charge of 1.
  
\noindent\textbf{The face $f$ consists of $6$ flags and they all belong to labelled vertices.}
In this case all flags in $f$ are contained in edges incident only to labelled vertices. 
Thus, $f$ receives a charge of $\frac16$ from each of these flags, which is $1$ in total. 

\noindent\textbf{Along $f$ there exist two nonconsecutive orbits of $\Fperm$ that are included in labelled vertices.}
Let $\{x,x'\}$ and $\{y,y'\}$ be these orbits. Since they are nonconsecutive, $x,x',\Eperm(x),\Eperm(x'),y,y',\Eperm(y),\Eperm(y')$ are eight pairwise distinct flags in $f$,
each sending a charge of at least $\frac18$ to the face $f$.

Note that the last three cases case cover the case of $f$ having at least $4$ flags contained in labelled vertices.

\medskip

We have shown that there was enough charge so that every vertex and every face received a charge of at least one. Consequently,
$$|\orb(\Faces_\Emb)| + |\orb(\Vertices_\Emb)| \leq 2t+2|\orb(\conncomp_\Emb)|+ \tfrac56|\orb(\Edges_\Emb)|.$$
Together with~\eqref{eq:genus}, it implies that
$$\genus_{\tEmb}=|\orb(\Edges_\Emb)|-|\orb(\Faces_\Emb)|-|\orb(\Vertices_\Emb)|+2|\orb(\conncomp_\Emb)| \geq \tfrac16|\orb(\Edges_\Emb)|-2t,$$
i.e.,
$$|\Flags|=4|\orb(\Edges_\Emb)| \leq 48t + 24\genus_\tEmb.$$
\end{proof}

\begin{corollary}\label{cor:few-embeddings}
There are $2^{\Oh((t+\genus) \log (t+\genus))}$ nice $t$-boundaried embeddings of genus at most $\genus$,
and they can be enumerated in time $2^{\Oh((t+\genus) \log (t+\genus))}$.
\end{corollary}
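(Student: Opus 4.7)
The plan is to reduce the enumeration question to a brute-force count over all possible tuples of involutions on a bounded flag set, using Lemma~\ref{l:nice-bound} as the size bound.

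Set $M = 48t + 24\genus$, so that by Lemma~\ref{l:nice-bound} any nice $t$-boundaried embedding of genus at most $\genus$ has at most $M$ flags. Since any such embedding is determined, up to relabeling of flags, by the choice of a flag set $\Flags$ together with the three involutions $\Eperm, \Vperm, \Fperm$ and the labelling $\Label$, I would fix $\Flags = [M']$ for each even $M' \in \{0, 2, \ldots, M\}$ and enumerate all possible tuples on this flag set. The number of fixed-point free involutions on $[M']$ is $(M'-1)!! \leq (M')^{M'/2} = 2^{\Oh(M' \log M')}$, so there are at most $2^{\Oh(M \log M)}$ triples of involutions to consider. For each such triple, I would verify the graph embedding condition ($\Eperm$ and $\Vperm$ commute and their orbits are distinct), compute $\orb(\Vertices)$, $\orb(\Edges)$, $\orb(\Faces)$, $\orb(\conncomp)$, evaluate the genus via~\eqref{eq:genus}, and discard if the genus exceeds $\genus$.

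Next, for each surviving embedding, I would enumerate the labellings $\Label$: choose a subset of the at most $M/2$ vertex orbits to be labelled and pick an injection into $[t]$. This contributes at most $(t+1)^{M/2} = 2^{\Oh(M \log t)}$ further choices. For each resulting $t$-boundaried embedding, I would check the two niceness conditions directly from the definition; both conditions are local statements about flags, edges, faces and vertices, and can be verified in time polynomial in $M$.

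Multiplying the bounds gives a total of $2^{\Oh(M \log M)} = 2^{\Oh((t+\genus) \log(t+\genus))}$ candidates, each processed in time polynomial in $t + \genus$, which yields both the claimed count and the claimed enumeration time. There is no substantial obstacle here: the only mild subtlety is to enumerate \emph{up to relabeling of flags} (by fixing the flag set to $[M']$), so that the count is genuinely finite and matches the intended use of the corollary in the subsequent dynamic programming.
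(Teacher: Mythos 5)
Your proposal is correct and is essentially the intended argument: the paper derives Corollary~\ref{cor:few-embeddings} directly from the flag bound $|\Flags|\leq 48t+24\genus_\tEmb$ of Lemma~\ref{l:nice-bound} by brute-force enumeration of the involutions and labellings on a flag set of that size. Your accounting of the $(M'-1)!!$ involutions, the $2^{\Oh(M\log t)}$ labellings, and the polynomial-time genus and niceness checks matches what the paper leaves implicit.
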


\subsubsection{Making an embedding nice}

Let $\tEmb = (\Emb, t, \Label)$ be a $t$-boundaried embedding with $\Emb = (\Flags, \Eperm, \Vperm, \Fperm)$.
Our goal now is to obtain an equivalent nice embedding. 
To this end, we show that the following three operations lead to equivalent embeddings:
\begin{enumerate}
\item deleting an edge incident with an unlabelled size-$2$ vertex;
\item deleting an edge violating the last property of the definition of a nice embedding;
\item suppressing a size-$4$ unlabelled vertex that is not isolated.
\end{enumerate}
We will henceforth call them \emph{simplifying operations}.

\begin{lemma}[deleting an edge incident to a size-$2$ unlabelled vertex]\label{lem:del-e-size2v}
If $e$ is an edge of $\tEmb$ incident to an unlabelled vertex of size $2$,
then $\tEmb-e$ and $\tEmb$ are equivalent.
\end{lemma}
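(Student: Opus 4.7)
\noindent\emph{Proof sketch.}
The plan is to establish the equivalence by exhibiting, for every $t$-boundaried embedding $\tEmb'$, a genus-preserving correspondence between the merges of $\tEmb'$ with $\tEmb$ and the merges of $\tEmb'$ with $\tEmb-e$, in both directions. The structural fact I will lean on is that the unlabelled size-$2$ vertex $v$ incident to $e$ has both of its flags inside $e$, so $e$ is the only edge incident to $v$ and, by \cref{obs:edge-del-vertices-stay}, $v$ vanishes entirely from $\tEmb-e$. Symmetrically, $\tEmb$ is recovered from $\tEmb-e$ by drawing $e$ back at $p_\Emb(x)=(\bot, b)$, where the $\bot$ recreates $v$ as an unlabelled size-$2$ vertex and $b$ is either $\bot$ or a flag of $\tEmb-e$; the exceptional case $\Fperm(y)\in\{x,x'\}$ in the definition of $p_\Emb$ cannot occur here, because $\{x,x'\}$ is already an orbit of $\Fperm$ inside $\Emb$.

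For the forward inequality I will take an arbitrary merge $\tEmb''$ of $\tEmb$ and $\tEmb'$ and use the fact that the merge operation modifies $\Fperm$ only on flags of labelled vertices. Since $v$ is unlabelled it persists intact in $\tEmb''$ as a size-$2$ vertex still incident to $e$. \Cref{obs:merge-deletion} then says that $\tEmb''-e$ is a merge of $\tEmb-e$ and $\tEmb'$, and the size-$2$-vertex clause of \cref{lem:edge-del} gives $\genus_{\tEmb''}=\genus_{\tEmb''-e}$.

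For the reverse inequality I will take an arbitrary merge $\tEmb''_0$ of $\tEmb-e$ and $\tEmb'$ and draw $e$ back inside $\tEmb''_0$ at the very same position $(\bot, b)$ that recovered $\tEmb$ from $\tEmb-e$; note that the flag $b$ (whenever it is a flag at all) is preserved by the merge operation. Because $a=\bot$ rather than $\bot_\ell$ for any label $\ell$, the exceptional clause of \cref{obs:merge-draw} is avoided and the result is a bona fide merge of $\tEmb$ and $\tEmb'$, whose genus matches that of $\tEmb''_0$ by the $a\in\{\bot,\top\}$ clause of \cref{lem:edge-draw}. Combining both directions shows that the genus-minimum merges agree, which is the claimed equivalence. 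I do not foresee a substantial obstacle: the only technicalities are verifying that $v$ passes unscathed through merging and that the side-conditions of \cref{obs:merge-draw} and \cref{lem:edge-draw} fire as expected, both of which follow directly from the definitions.
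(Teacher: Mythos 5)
Your proposal is correct and follows essentially the same route as the paper: the forward inequality via \cref{obs:merge-deletion} together with \cref{lem:edge-del}, and the reverse inequality by recovering $\tEmb$ from $\tEmb-e$ through drawing $e$ at $(\bot,b)$ and invoking \cref{obs:merge-draw} and the $a=\bot$ clause of \cref{lem:edge-draw}. The extra observations you make (that $v$ survives merging untouched, giving genus equality rather than mere non-increase in the forward direction) are correct but not needed.
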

\begin{proof}
We consider merges $\tEmb$ and $\tEmb-e$ with an arbitrary $t$-boundaried embedding $\tEmb_1$.  

First, let $\tEmb_M$ be a merge of $\tEmb_1$ and $\tEmb$. 
\cref{obs:merge-deletion} yields that $\tEmb_M-e$ is a merge of $\tEmb-e$ and $\tEmb_1$
and \cref{lem:edge-del} implies that its genus does not exceed the genus of $\tEmb_M$.
Consequently, the genus of a genus-minimum merge of $\tEmb_1$ and $\tEmb-e$ is not larger than the genus of a genus-minimum merge of $\tEmb_1$ and $\tEmb$.

In the other direction, let $\tEmb_M'$ be a merge of $\tEmb-e$ and $\tEmb_1$.
Let $e=(x,y,y',x')$ where $\{x,x'\}$ is contained in an unlabelled size-2 vertex so that $p_\tEmb(x)$ is of the form $(x',b)$.
Consequently, $\tEmb$ can be obtained from $\tEmb-e$ by drawing a new edge $e$ at $(\bot,b)$ for some $b$.
By \cref{obs:merge-draw}, a merge $\tEmb_M$ of $\tEmb_1$ and $\tEmb$ can be obtained from $\tEmb_M'$ by drawing a new edge at $(a',b')$.
Since $a=\bot$, we have $a'=\bot$, so the genera of $\tEmb_M$ and $\tEmb_M'$ are equal due to \cref{lem:edge-draw}.
Consequently, the genus of a genus-minimum merge of $\tEmb_1$ and $\tEmb$ is not larger than the genus of a genus-minimum merge of $\tEmb_1$ and $\tEmb-e$.
This completes the proof of the lemma.
\end{proof}

\begin{lemma}[deleting an edge violating the last property of the definition of nice embedding]\label{lem:del-e-empty-f}
Let $e$ be an edge in a $t$-boundaried embedding $\tEmb$ that is incident to two faces.
Furthermore, assume that one face $f$ incident with $e$ has the following property: if $x$ and $y = \Eperm(x)$ are the two flags shared between $e$ and $f$, then
each flag $z \notin \{x,\Fperm(x),y,\Fperm(y)\}$ on $f$ belongs to an unlabelled vertex.
Then $\tEmb$ is equivalent with $\tEmb-e$.
\end{lemma}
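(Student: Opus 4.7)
My approach will follow the template of Lemma~\ref{lem:del-e-size2v}. Fix an arbitrary $t$-boundaried embedding $\tEmb_1$; I will establish equivalence by proving both inequalities on the genus of genus-minimum merges against $\tEmb_1$.

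For the easy direction, given a merge $\tEmb_M$ of $\tEmb$ and $\tEmb_1$, Observation~\ref{obs:merge-deletion} yields that $\tEmb_M - e$ is a merge of $\tEmb - e$ and $\tEmb_1$, and Lemma~\ref{lem:edge-del} gives $\genus_{\tEmb_M - e} \le \genus_{\tEmb_M}$. Hence the minimum merge genus against $\tEmb - e$ is bounded by that against $\tEmb$.

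For the reverse direction, I will start from a merge $\tEmb_M'$ of $\tEmb - e$ and $\tEmb_1$ and reintroduce $e$. Writing $e = (x, y, y', x')$, generically we have $p_\tEmb(x) = (\Fperm(x), \Fperm(y))$; set $a := \Fperm(x)$ and $b := \Fperm(y)$. Since these are bona fide flags of $\tEmb - e$ (not boundary placeholders of the form $\bot_\ell$), Observation~\ref{obs:merge-draw} guarantees that drawing $e$ at $(a, b)$ in $\tEmb_M'$ produces a merge of $\tEmb$ and $\tEmb_1$. By Lemma~\ref{lem:edge-draw}, the genus will be preserved provided this drawing is along a face boundary of $\tEmb_M'$, which is the content of the next step and the heart of the argument.

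The key face-boundary claim unfolds as follows. In $\tEmb$, the face $f$ reads cyclically as $x, y, b, z_1, z_2, \ldots, z_k, a$, where the intermediate flags $z_1, \ldots, z_k$ are generated by alternate applications of $\Eperm$ and $\Fperm$; by the lemma's hypothesis, each $z_i$ lies at an unlabelled vertex. Deleting $e$ merges $f$ with the other incident face $f'$, and the sub-arc $b, z_1, \ldots, z_k, a$ remains intact inside the resulting face of $\tEmb - e$. Because the merge step modifies $\Fperm$ only at flags belonging to labelled vertices, none of $z_1, \ldots, z_k$ are touched when passing from $\tEmb - e$ to $\tEmb_M'$, so this sub-arc survives verbatim as a contiguous piece of a face $F$ of $\tEmb_M'$. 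Consequently $a$ and $b$ both lie on $F$, and the rest of the cycle of $F$ supplies the required tails, yielding the pattern $\Fperm_{\tEmb_M'}(a), a, P, b, \Fperm_{\tEmb_M'}(b), P'$ demanded by the definition of drawing along a boundary.

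The main obstacle I anticipate is a careful treatment of the degenerate cases: faces $f$ of size $2$ or $4$, where the sub-arc $z_1, \ldots, z_k$ collapses; the exceptional alternative for $p_\tEmb(x)$ arising when $\Fperm(y) \in \{x, x'\}$; and the fact that the flags $a, b, \Fperm(a), \Fperm(b)$ may themselves sit at labelled vertices and hence have their $\Fperm$-images altered by the merge. The last concern is harmless, since the argument above invokes $\Fperm_{\tEmb_M'}$ only on the unlabelled interior $z_1, \ldots, z_k$. Most of the small-face situations either become vacuous (e.g., a two-flag face forces $f$ to be isolated) or reduce to the generic argument after a short direct inspection, which I expect to dispatch uniformly at the end of the proof.
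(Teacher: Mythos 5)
Your main-case argument coincides with the paper's: the easy direction via \cref{obs:merge-deletion} and \cref{lem:edge-del}, and the reverse direction by redrawing $e$ at $p_\tEmb(x)=(\Fperm(x),\Fperm(y))$ inside a merge of $\tEmb-e$ and $\tEmb_1$, noting that the sub-arc of $f$ strictly between $\Fperm(y)$ and $\Fperm(x)$ consists of flags of unlabelled vertices, is attached to its endpoints by $\Eperm$-pairs, and is therefore untouched by the deletion and by the merge (which alters $\Fperm$ only on flags of labelled vertices); hence the edge is redrawn along a face boundary and \cref{lem:edge-draw} preserves the genus. That part is sound, and your observation that possible relabelling-induced changes of $\Fperm$ at $\Fperm(x)$, $\Fperm(y)$ and their partners are harmless is correct; likewise size-$4$ faces are covered by the generic argument with $P$ empty, which the definition of drawing along a boundary explicitly permits.

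The genuine gap is the deferred degenerate case $\Fperm(y)\in\{x,x'\}$. Note first that $\Fperm(y)=x'$ cannot occur: then all four flags of $e$ would lie on one face, contradicting the hypothesis that $e$ is incident to two faces. The real case is $\Fperm(y)=x$, i.e., $f=\{x,y\}$ is a size-$2$ face, for which the lemma's hypothesis is vacuously true, so the case cannot be excluded. Your two suggested ways of dispatching it both fail: a two-flag face is not isolated (its connected component contains all four flags of $e$), so the case is not vacuous; and it does not reduce to the generic argument, because there $a=\Fperm(x)=y$ is a flag of the deleted edge and hence not a flag of $\tEmb-e$ at all, so one cannot draw at $(a,b)$ along a face boundary of the merged embedding. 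The paper resolves this case by a different mechanism: here $p_\tEmb(x)$ takes its exceptional form $(y,b)$, so $\tEmb$ is recovered from $\tEmb-e$ by drawing at $(\top,b)$; \cref{obs:merge-draw} transfers this to the merge with $a'=\top$, and the clause of \cref{lem:edge-draw} asserting that drawing at $(a,b)$ with $a\in\{\bot,\top\}$ preserves the genus closes the argument. Some step of this kind must be added to your proof; as written, it does not cover the size-$2$ face case.
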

\begin{proof}
Again, we consider merges of $\tEmb$ and $\tEmb-e$ with an arbitrary $t$-boundaried embedding $\tEmb_1$.  

In one direction, the proof is the same as in the proof of the previous lemma:
\cref{obs:merge-deletion} and \cref{lem:edge-del} imply that the genus of a genus-minimum merge of $\tEmb_1$ and $\tEmb-e$ is not larger than the genus of a genus-minimum merge of $\tEmb_1$ and $\tEmb$.

In the other direction, let $\tEmb_M'$ be a merge of $\tEmb_1$ and $\tEmb-e$.
We consider two cases depending on whether $f$ is of size 2.
If so, then $p_{\Emb(x)}$ is of the form $(y,b)$, so $\tEmb$ can be obtained from $\tEmb-e$ by drawing an new edge at $(\top, b)$ for some $b$.
By \cref{obs:merge-draw}, a merge $\tEmb_M$ of $\tEmb_1$ and $\tEmb$ can be obtained from $\tEmb_M'$ by drawing a new edge at $(a',b')$.
Since $a=\top$, we have $a'=\top$, so the genera of $\tEmb_M$ and $\tEmb_M'$ are equal due to \cref{lem:edge-draw}.

Next, suppose that $f$ contains a flag $z\notin \{x,y\}$. 
Since $e$ is incident to 2 faces, we conclude that $\Vperm(x)$ and $\Vperm(y)$ do not belong to $e$,
and $\tEmb$ can be obtained from $\tEmb-e$ by drawing a new edge at $(a,b)$ for $a=\Vperm(x)$ and $b=\Vperm(y)$.
By \cref{obs:merge-draw}, a merge $\tEmb_M$ of $\tEmb$ and $\tEmb_1$ can be obtained from $\tEmb_M'$ by drawing a new edge at $(a,b)$.
Let the cycle of $f$ be $x,a,P,b,y$ for some (possibly empty) sequence of flags $P$. By the assumptions of the lemma, every flag of $P$ belongs to an unlabelled vertex.
Consequently, in $\tEmb_M'$ there exists a face $f'$ whose cycle contains consecutive flags $a, P, b$ on its cycle, as no orbit of $\Fperm$ or $\Eperm$ on $a, P, b$ has been altered
This means that a new edge drawn at $(a,b)$ is actually drawn at a face boundary. Hence, the genera of $\tEmb_M$ and $\tEmb_M'$ are equal due to \cref{lem:edge-draw}. 

Consequently, the genus of a genus-minimum merge of $\tEmb_1$ and $\tEmb$ is not larger than the genus of a genus-minimum merge of $\tEmb_1$ and $\tEmb-e$,
which concludes the proof of the lemma.
\end{proof}

For the last basic operation, we need to formally define it.
Let $\tEmb = (\Emb, t, \Label)$ be a $t$-boundaried embedding with an unlabelled size-$4$ vertex $v$ that is not isolated.
Since $v$ is not isolated, $v$ is incident with two edges $e_1$ and $e_2$, and each $e_i$ is incident with $v$ and a vertex $v_i \neq v$. Note that it is possible that $v_1 = v_2$.

Let $x_1$ and $x_2 = \Fperm(x_1)$ be two flags in $v$ such that $x_i$ belongs to $e_i$. Furthermore, let $y_i = \Eperm(x_i)$; note that $y_i$ lies in $e_i$.
We delete all four flags of $v$, and replace the orbits of $\Eperm$ on $e_1$ and $e_2$ with $\{y_1, y_2\}$ and $\{\Vperm(y_1), \Vperm(y_2)\}$.
Clearly, we have replaced $e_1$ and $e_2$ with a new edge with flags $y_1, y_2, \Vperm(y_1), \Vperm(y_2)$. We now formally verify that the output embedding
is an equivalent one.

\begin{lemma}[suppressing a size-$4$ unlabelled vertex that is not isolated]\label{lem:supp-4}
The operation of suppressing a size-$4$ unlabelled vertex leads to an equivalent embedding.
\end{lemma}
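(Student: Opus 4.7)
The plan is to reduce the equivalence of $\tEmb$ and its suppression $\tEmb'$ to the assertion that the suppression operation preserves the genus of any embedding in which $v$ appears as an unlabelled, non-isolated, size-$4$ vertex. Because $v$ is unlabelled, the merging procedure between $\tEmb$ and any $t$-boundaried embedding $\tEmb_1$ never modifies $\Fperm$ on the flags of $v$ nor adds new flags to $v$ (it only modifies $\Fperm$ on cycles of labelled vertices that share a label). Hence, in any merge $\tEmb_M$ of $\tEmb_1$ and $\tEmb$, the vertex $v$ remains an unlabelled size-$4$ vertex, and since $e_1, e_2$ persist in $\tEmb_M$, it is still not isolated. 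We can thus suppress $v$ inside $\tEmb_M$. A routine check shows that this sets up a bijection between merges of $(\tEmb_1, \tEmb)$ and merges of $(\tEmb_1, \tEmb')$ where corresponding pairs differ exactly by the suppression of $v$. Granting genus preservation under suppression, the two genus-minimum merges attain equal values, which is precisely equivalence.

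To show that suppressing $v$ preserves the genus of an embedding, we inspect the four terms of~\eqref{eq:genus}. The edge count drops by one (the two edges $e_1, e_2$ collapse into a single edge), the vertex count drops by one ($v$ disappears and no other vertex is affected, since the rewiring only touches $\Eperm$ on the flags $y_1, y_2, \Vperm(y_1), \Vperm(y_2)$ while leaving their $\Vperm$- and $\Fperm$-partners---and hence their vertex membership---unchanged), and the number of connected components is unchanged ($v_1$ and $v_2$ were previously connected through $v$ and are now joined directly by the new edge, loops being trivial for connectivity). It remains to show that the face count is invariant. Let $f_1$ denote the face containing $x_1$ (and hence $x_2$, as $\{x_1,x_2\}$ is an $\Fperm$-orbit) and $f_2$ the face containing $x_3$. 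The cycle of $f_1$ contains the four-flag segment $y_1 \stackrel{\Eperm}{-} x_1 \stackrel{\Fperm}{-} x_2 \stackrel{\Eperm}{-} y_2$. After suppression, $x_1,x_2$ are gone, $\Fperm$ is untouched outside $v$, and the new edge supplies the $\Eperm$-orbit $\{y_1,y_2\}$, so this segment collapses into the single step $y_1 \stackrel{\Eperm}{-} y_2$. An identical statement holds for $f_2$.

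The hard part is the face count in the case $f_1 = f_2$, together with degenerate configurations such as $v_1 = v_2$ or $v_1, v_2$ having size~$2$. What must be ruled out is that the double shortening of a single face cycle might split it into two. We would handle this by fixing a cycle permutation of $f_1 = f_2$ and verifying that suppression merely excises the four flags of $v$ from its cyclic sequence and bridges the two resulting gaps with the new $\Eperm$-orbits of the merged edge; since each bridging turns two former $\Eperm$-neighbours of a deleted $x_i$ into a direct $\Eperm$-adjacency, the cyclic sequence stays a single orbit of $\subgroup{\Eperm, \Fperm}$. Assembling the bookkeeping gives $\Delta \genus = -1 - (-1) - 0 + 0 = 0$, which combined with the reduction of the first paragraph yields the lemma.
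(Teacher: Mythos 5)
Your proof is correct, but it takes a genuinely different route from the paper. The paper never computes the genus change of suppression directly: it factors the operation as drawing an auxiliary edge $e_f$ along the boundary of the face containing $y_1,x_1,x_2,y_2$ (equivalent by Lemma~\ref{lem:del-e-empty-f}, since deleting $e_f$ recovers the original embedding) and then deleting $e_1$ and $e_2$, handling merges via \cref{obs:merge-deletion}, \cref{obs:merge-draw} and Lemmas~\ref{lem:edge-del} and~\ref{lem:edge-draw}; the auxiliary edge exists precisely so that re-drawing $e_1$ in an arbitrary merge happens along a face boundary. You instead argue (a) that suppression commutes with merging because $v$ is unlabelled --- a merge only modifies $\Fperm$ on the flags of shared labelled cycles, and suppression leaves every other vertex's cycle intact, so the same merge choices are available for $\tEmb$ and $\tEmb'$ and corresponding merges differ exactly by suppressing $v$ --- and (b) that suppression preserves the genus by evaluating the four terms of~\eqref{eq:genus} directly, the only delicate term being the face count, where the correct key point is the one you name: each of the two gaps (the $\Fperm$-orbit $\{x_1,x_2\}$ flanked by $y_1,y_2$, and $\{x_1',x_2'\}$ flanked by $\Vperm(y_1),\Vperm(y_2)$) is bridged across itself by the new $\Eperm$-orbits, so a face cycle is only shortened, never split or merged, even when both gaps lie on the same face. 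Your route is more self-contained and arguably more transparent, at the price of a hands-on case analysis on face cycles; the paper's route avoids all face bookkeeping by reusing the machinery already built for the other two simplifying operations (and that machinery is genuinely needed there, since deleting an edge does alter labelled vertex cycles, so your commuting argument would not transfer to Lemmas~\ref{lem:del-e-size2v} and~\ref{lem:del-e-empty-f}). Two presentational remarks: the flag you call $x_3$ is undefined (you mean $\Vperm(x_1)$), and the two spots you label ``a routine check shows'' and ``we would handle this by verifying'' are exactly the places a full write-up must spell out --- that the labelled vertices and their cycles coincide in $\tEmb$ and $\tEmb'$ so merges correspond bijectively and the modifications act on disjoint data ($\Fperm$ on labelled cycles versus $\Eperm$ on the $y$-flags plus deletion of the unlabelled flags of $v$), and the one-cycle-stays-one-cycle verification for a face containing both gaps --- but both complete routinely along the lines you indicate.
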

\begin{proof}
We interpret the suppressing operation as a sequence of one edge drawing and two edge deletions.

Observe that $y_1,x_1,x_2,y_2$ are four distinct flags that lie on the same face $f$ and, furthermore, they are consecutive in this order along the cycle of $f$.
Let us draw a new edge $e_f$ along the boundary of $f$, at $y_1$ and $y_2$, obtaining an embedding $\tEmb_f$.
Note that in $\tEmb_f$ there is a new face $f'$ with cycle $y_1,x_1,x_2,y_2,z_2,z_1$, where $\{z_1,z_2\}$ is a new orbit of $\Eperm$ contained in the edge $e_f$.
Furthermore, the edge $e_f$ with face $f'$ fulfills the assumptions of Lemma~\ref{lem:del-e-empty-f}, and its deletion from $\tEmb_f$ gives $\tEmb$.
Consequently, $\tEmb_f$ and $\tEmb$ are equivalent.

Then, we delete edges $e_1$ and $e_2$ from $\tEmb_f$, obtaining an embedding $\tEmb'$. Our goal is to show that such an operation leads to an equivalent embedding;
note that if we rename the flags of $e_f$ to $y_1,y_2,\Vperm(y_1),\Vperm(y_2)$ we obtain the output embedding of the suppressing operation.
The proof is similar to that of Lemma~\ref{lem:del-e-empty-f}, but without most of the special cases due to the existence of the edge $e_f$.

Let $\tEmb_1$ be an arbitrary $t$-boundaried embedding.
By \cref{obs:merge-deletion} and \cref{lem:edge-del} yields that the genus of a genus-minimum merge of $\tEmb_1$ and $\tEmb'$ is not larger than the genus of a genus-minimum merge of $\tEmb_1$ and $\tEmb_f$.

For the other direction, let $\tEmb_M' = (\Emb_M', t, \Label_M')$ be a merge of $\tEmb_1$ and $\tEmb'$.
Note that $p_{\tEmb_f}(x_1)=(x_2,z_1)$ and $p_{\tEmb_f-e_1}(x_2)=(x_2',z_2)$.
Hence, $\tEmb_f$ can be retrieved from $\tEmb'$ by drawing edge $e_2$ at $(\bot,z_2)$ and then edge $e_1$ at $(x_2,z_1)$.
By \cref{obs:merge-draw}, a merge $\tEmb_M$ of $\tEmb_f$ and $\tEmb_1$ can be retrieved from $\tEmb_M'$ by drawing an edge $e_2$
at $(\bot,z_2)$ and then $e_1$ at $(x_2,z_1)$. \cref{lem:edge-draw} yields that the first of these operations asserts that the genus remains unchanged.
As for the second operation, we observe after drawing $e_2$, we have that $x_2, y_2, z_2,z_1$ is a fragment of face boundary and that $e_1$ is drawn along it.
Hence, \cref{lem:edge-draw} also implies that the genera of $\tEmb_M'$ and $\tEmb_M$ are equal.
We infer that the genus of a genus-minimum merge of $\tEmb_1$ and $\tEmb_f$ is not larger than the genus of a genus-minimum merge of $\tEmb_1$ and $\tEmb'$.
This concludes the proof of the lemma.
\end{proof}

Armed with the three basic operations, we are now ready to prove the following statement.
\begin{lemma}\label{lem:make-nice}
There exists a polynomial-time algorithm that, given an arbitrary $t$-boundaried embedding, computes an equivalent nice one
by successive applications of the simplifying operations.
\end{lemma}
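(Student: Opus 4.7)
The plan is to run the obvious greedy algorithm: while the current $t$-boundaried embedding $\tEmb=(\Emb,t,\Label)$ fails to be nice, locate a violation and dispose of it by the corresponding simplifying operation, invoking \cref{lem:del-e-size2v,lem:del-e-empty-f,lem:supp-4} to conclude that the resulting embedding is equivalent to the previous one, and hence to the input.

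The key observation that guarantees termination is that each of the three simplifying operations strictly decreases $|\Flags(\Emb)|$ by exactly $4$: operations 1 and 2 delete a single edge (four flags), and the suppression operation replaces two edges and a size-$4$ vertex (twelve flags) with one new edge (four flags). Since $|\Flags|$ is a nonnegative integer, the procedure halts after at most $|\Flags(\Emb)|/4$ iterations, and locating a violation and executing the corresponding simplifying operation can obviously be done in polynomial time; this gives the claimed polynomial bound.

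Next I verify that once the algorithm halts, the embedding is nice, that is, every possible violation can be removed by one of the three simplifying operations. A violation of the second niceness condition is immediately addressed by operation 2 (\cref{lem:del-e-empty-f}). For the first condition, let $v$ be an unlabelled non-isolated vertex with fewer than $6$ flags; by parity, $v$ has either $2$ or $4$ flags. If $|v|=4$, then operation 3 (\cref{lem:supp-4}) applies. If $|v|=2$, say $v=\{x,\Vperm(x)\}$, then $v$ is incident with the edge $e$ containing $x$; since $v$ is not isolated, $e$ is a legitimate edge incident with the unlabelled size-$2$ vertex $v$, and operation 1 (\cref{lem:del-e-size2v}) applies. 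Thus the absence of applicable simplifying operations is equivalent to niceness.

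Finally, equivalence of the input and the output follows by transitivity: each simplifying operation produces an equivalent $t$-boundaried embedding by \cref{lem:del-e-size2v,lem:del-e-empty-f,lem:supp-4}, and equivalence of $t$-boundaried embeddings is an equivalence relation (being defined through equality of an invariant assigned to each pair with a third embedding). The only mildly delicate point is making sure each operation is genuinely applicable when identified---e.g.\ that the size-$2$ unlabelled vertex of a non-isolated violator really bears an incident edge, as discussed above---but no additional combinatorial insight is required, so I would expect no real obstacle beyond this routine bookkeeping.
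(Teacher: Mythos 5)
Your proposal is correct and takes essentially the same route as the paper: greedily apply a simplifying operation to any niceness violation (each of \cref{lem:del-e-size2v,lem:del-e-empty-f,lem:supp-4} gives equivalence, which is transitive), and terminate because every operation strictly decreases the number of flags. One small bookkeeping slip: suppressing a size-$4$ vertex removes only the four flags of that vertex (the vertex's flags already lie inside the two incident edges, so the count is not ``twelve to four''), but the strict decrease you need---and which the paper also relies on---still holds.
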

\begin{proof}
Let $\tEmb$ be a $t$-boundaried embedding. 
If $\tEmb$ is nice, we can just return $\tEmb$. Otherwise, observe that any object that violates the niceness of $\tEmb$
gives rise to an application of one of the simplifying operations.
An edge violating the last property of a nice embedding can be deleted from $\tEmb$; the resulting embedding is equivalent due to Lemma~\ref{lem:del-e-empty-f}.
Similarly, if there exists an unlabelled size-$2$ vertex, then Lemma~\ref{lem:del-e-size2v} allows us to safely delete the incident edge,
and a size-$4$ unlabelled vertex that is not isolated can be also suppressed by Lemma~\ref{lem:supp-4}.

Finally, note that each execution of a simplifying operation strictly decreases the number of flags in the embedding. 
Consequently, in polynomial time we obtain an equivalent nice embedding.
\end{proof}

\section{Bounded treewidth graphs}\label{sec:tw}

In this section we prove Theorem~\ref{thm:genusVDtw}. Without loss of generality, we can assume that the input tree decomposition $(T,\tdbag)$ of the input graph $G$
is a nice tree decomposition of width \emph{less} than $k$, that is, every bag of $(T,\tdbag)$ is of size at most $k$.

We first refine $(T,\beta)$ as follows. First, with every node $t$ we associate a graph $G(t)$, initialized as $G(t) := \tdG(t)$. 
In the refinement process, we will maintain the invariant that at every node $t$, the graph $G(t)$ is a subgraph of $G[\tdunder(t)]$ and a supergraph 
of $\tdG(t)$; in particular, $V(G(t)) = V(\tdG(t))$.
Note that one needs only $\Oh(t^2)$ bits to store $G(t)$ at every node $t$, as one needs only to store which edges of $G[\tdbag(t)]$ belong to $G(t)$.

For every forget node $t$ we perform the following refinement process. Let $t'$ be the child of $t$, and let $v$ be the forgotten vertex, i.e., $\{v\} = \tdbag(t') \setminus \tdbag(t)$. 
Let $e_1,e_2,\ldots,e_\ell$ be the edges incident with $v$ that have their second endpoint in $\tdbag(t)$; note that $\{e_1,e_2,\ldots,e_\ell\}$
are exactly the edges that are present in $\tdG(t)$ but are not present in $\tdG(t')$. 
Our goal is to refine the edge $tt'$ of $T$ by inserting a number of vertices on this edge so that the transition from $\tdG(t')$ to $\tdG(t)$ is more smooth.

More precisely, we subdivide the edge $tt'$ $\ell$ times, inserting vertices $t_1,t_2,\ldots,t_\ell$ in this order, such that $t_1$ is adjacent with $t$, and $t_\ell$
is adjacent with $t'$. Furthermore, to every node $t_i$ add a second child $s_i$ that is a leaf of the tree $T$.
The new node $s_i$ is of a new type, namely \textbf{edge leaf}: we set $\tdbag(s_i) = \tdbag(t')$, $V(G(s_i)) = \tdbag(t')$, and $E(G(s_i)) = \{e_i\}$.
The node $t_i$ is a \textbf{join} node with $G(t_i) = \tdG(t')-\{e_1,e_2,\ldots,e_{i-1}\}$. 

This completes the description of the tree decomposition refinement.
By this step, we have obtained the following properties: at every introduce or forget node $t$ with child $t'$ we have $G(t) = G(t')$,
while at every join node $t$ with children $t_1$ and $t_2$ we have $E(G(t)) = E(G(t_1)) \uplus E(G(t_2))$.
Furthermore, note that we have introduced two nodes for every edge of $G$, giving $\Oh(nk)$ new nodes in total.

We compute a labelling function $\Lambda: V(G) \to [k]$ such that $\Lambda$ is injective on every bag of $(T,\tdbag)$; such a labelling $\Lambda$
is straightforward to compute in a top-to-bottom fashion, using the fact that every bag of $(T,\tdbag)$ is of size at most $k$.

With the refined tree decomposition, we perform a bottom-up dynamic programming algorithm.
Fix a node $t \in V(T)$. For a subset $A \subseteq \tdunder(t)$ and an embedding $\Emb$ of $G(t)[A]$, we can naturally equip $\Emb$ with a structure
of a $k$-boundaried embedding by assigning labels $\Lambda|_{\tdbag(t)}$ (which is an injective function).
We will denote such a $k$-boundaried embedding at node $t$ as $\tdemb(t, \Emb)$

At $t$, we maintain the following DP table. 
For every set $X \subseteq \tdbag(t)$ and for every nice $k$-boundaried embedding $\tEmb$ of genus at most $g$ that uses only labels of $\Lambda(\tdbag(t) \setminus X)$,
we remember a value $\DP[t, X, \tEmb] \in \{0,1,\ldots\} \cup \{+\infty\}$.
Intuitively, we would like the value $\DP[t, X, \tEmb]$ to indicate the minimum size of a deletion set $Y \subseteq \tdunder(t) \setminus \tdbag(t)$
such that $G(t)-(X \cup Y)$ can be embedded equivalently with $\tEmb$.
However, for sake of the formal argument, we strip the above intuition into only its essential parts, and define the following two properties
that we maintain:
\begin{description}
\item[(a finite value yields small deletion set)]
If $\DP[t, X,\tEmb] \neq +\infty$, then
for every $k$ boundaried embedding $\tEmb_0$ that uses only labels of $\Lambda(\tdbag(t)\setminus X)$
and for every merge $\tEmb_{0M}$ of $\tEmb$ and $\tEmb_0$,
there exists a set $Y \subseteq \tdunder(t) \setminus \tdbag(t)$ of size at most $\DP[t, X, \tEmb]$,
an embedding $\Emb$ of $G(t)-(X \cup Y)$, 
and a merge $\tEmb_M$ of $\tdemb(t, \Emb)$ and $\tEmb_0$
of genus not larger than the genus of $\tEmb_{0M}$.
\item[(a good deletion set is represented in the table)]
for every set $Y \subseteq \tdunder(t) \setminus \tdbag(t)$,
every embedding $\Emb$ of $G(t)-(X \cup Y)$,
every $k$-boundaried embedding $\tEmb_0$ that uses only labels of $\Lambda(\tdbag(t)\setminus X)$,
and every merge $\tEmb_{0M}$ of $\tdemb(t, \Emb)$ and $\tEmb_0$,
there exists a nice $k$-boundaried embedding $\tEmb$
and a merge $\tEmb_M$ of $\tEmb$ and $\tEmb_0$
such that $\DP[t, X, \tEmb] \leq |Y|$
and the genus of $\tEmb_M$ is not larger than the genus of $\tEmb_{0M}$.
\end{description}
The above invariants include only one inequality from the definition of equivalence:
the one that is needed for the proof of the correctness of the algorithm.

In particular, 
since the root node $r$ has an empty bag, the minimum value of $\DP[r, \emptyset, \cdot]$ at the root node is the size of the minimum solution to \genusVD{} on $(G,g)$.
Indeed, 
by the second invariant, an optimum solution $Y^\ast$ and the corresponding embedding $\Emb^\ast$ has its corresponding entry $\tEmb^\ast$ when merged
with an empty embedding $\tEmb_0$ (so that $\tEmb_{0M} = \tdemb(r, \Emb) = (\Emb, k, \emptyset)$), and we have $|Y^\ast| \geq \DP[r, \emptyset, \tEmb^\ast]$.
In the other direction, the first invariant ensures that for every entry $\DP[r, \emptyset, \tEmb]$ and again an empty embedding $\tEmb_0$
and the only possible merge $\tEmb_{0M} = \tEmb$, there exists a deletion set $Y$ of size at most $\DP[r, \emptyset, \tEmb]$ and an embedding $\Emb$ of $G-Y$
such that $\tdemb(r, \Emb)$ has genus at most the genus of $\tEmb_{0M} = \tEmb$, which is at most $g$.

We now show how to compute the values $\DP[t, X,\tEmb]$ in a bottom-to-up fashion.

The computation is trivial at leaf nodes (empty bag, empty graph $G(t)$) and at edge leaf nodes (a single edge with distinct endpoints).
Furthermore, observe that at introduce nodes, we just copy the corresponding entries in $\DP$, as the new vertex is isolated in $G(t)$ and the
notion of an embedding ignores isolated vertices.
Thus, we are left with tackling forget and join nodes.

To this end, we define an operation of \emph{updating} a cell $\DP[t, X, \tEmb]$ with a value $a$: we set $\DP[t, X, \tEmb] := \max(a, \DP[t, X, \tEmb])$.
We say that the update was successful if the stored value changed.

\subsection{Forget nodes}
Let $t$ be a forget node with child $t'$ and let $v$ be the forgotten vertex. Note that, due to the refinement step,
we have $G(t') = G(t)$. The only difference between the nodes $t$ and $t'$ is that if we consider some embedding $\Emb$ of a subgraph of $G(t)$,
then $\tdemb(t', \Emb)$ assigns a label $\Lambda(v)$ to the vertex $v$, while in $\tdemb(t, \Emb)$ the vertex $v$ remains unlabelled.

This is straightforward to implement in our DP tables. We start by setting all values $\DP[t, X, \tEmb]$ to $+\infty$.
For every cell $\DP[t', X, \tEmb]$ with $v \in X$, we update the cell $\DP[t, X \setminus \{v\}, \tEmb]$ with value $\DP[t', X, \tEmb]+1$: when $v$ is deleted,
the embedding does not change, but we need to account for the vertex $v$ in the value of the cell.
For every cell $\DP[t', X, \tEmb']$ with $v \notin X$, we create an embedding $\tEmb$ from $\tEmb'$ by first forgetting the label of the vertex $v$ (so it becomes
    an unlabelled one), and then by applying Lemma~\ref{lem:make-nice} to obtain an equivalent embedding. We update the cell $\DP[t, X, \tEmb]$ with value
$\DP[t', X, \tEmb']$. 

We now formally verify that the entries for node $t$ satisfy the required properties.

\paragraph{First property.}
Take a cell $\DP[t, X, \tEmb]$ that is less than $+\infty$, a $k$-boundaried embedding $\tEmb_0$ and a merge $\tEmb_{0M}$ of $\tEmb$ and $\tEmb_0$.
Let $\DP[t', X', \tEmb']$ be the last cell that caused an update in the value of $\DP[t, X, \tEmb]$.

First, assume $v \in X'$; then $X' = X \cup \{v\}$ and $\tEmb' = \tEmb$. By the first property, applied inductively to the cell $\DP[t', X', \tEmb]$,
the embedding $\tEmb_0$ and the merge $\tEmb_{0M}$, we obtain a set $Y'$ of size at most $\DP[t', X', \tEmb]$, an embedding $\Emb'$
of $G(t')-(X' \cup Y')$ and a merge $\tEmb_M'$ of $\tdemb(t', \Emb')$ and $\tEmb_0$ of genus at most the genus of $\tEmb_{0M}$.
Set $Y = Y' \cup \{v\}$; then $X \cup Y = X' \cup Y'$ while $|Y| = |Y'| + 1 \leq \DP[t, X, \tEmb]$.
Consequently, $\Emb'$ is an embedding of $G(t)-(X \cup Y)$, and together with $Y$ and $\tEmb_M'$ satisfies the requirements of the first property.

In the second case, we have $v \notin X'$, and $X = X'$.
Let $\tEmb^\circ$ be the embedding $\tEmb'$ with the label of $v$ dropped; recall that $\tEmb$ is a result of application of Lemma~\ref{lem:make-nice}
to the embedding $\tEmb^\circ$. 
Since $\tEmb^\circ$ is equivalent to $\tEmb$, the genus-minimum merge $\tEmb_M^\circ$ of $\tEmb^\circ$ and $\tEmb_0$ has genus not larger than the genus
of $\tEmb_{0M}$.
Since neither $\tEmb^\circ$ nor $\tEmb_0$ uses the label $\Lambda(v)$, 
we can define $\tEmb_M'$ to be $\tEmb_M^\circ$ with additionally label $\Lambda(v)$ assigned to the vertex $v$;
note that this operation does not change the genus of the embedding.
As $\tEmb_0$ does not use the label $\Lambda(v)$, $\tEmb_M'$ is a merge of $\tEmb'$ and $\tEmb_0$. 
The first property, applied to the cell $\DP[t', X', \tEmb']$, $\tEmb_0$, and the merge $\tEmb_M'$,
yields existence of a set $Y' \subseteq \tdunder(t') \setminus \tdbag(t')$, an embedding $\Emb$ of $G(t')-(X'\cup Y')$, and a merge $\tEmb_M$
of $\tdemb(t', \Emb)$ and $\tEmb_0$ of genus at most the genus of $\tEmb_M'$.
Since $G(t') = G(t)$, $X' = X$, and $\tEmb_0$ does not use the label $\Lambda(v)$, $\tEmb_M$ is a merge of $\tdemb(t, \Emb)$ and $\tEmb_0$.
Furthermore, since the genus of $\tEmb_M'$ is at most the genus of $\tEmb_{0M}$, the tuple $(Y', \Emb, \tEmb_M)$ fulfills the first property.

\paragraph{Second property.}
Take a set $Y$, $\Emb$, $\tEmb_0$, and $\tEmb_{0M}$ as in the statement of the second property.
If $v \in Y$, take $Y' = Y \setminus \{v\}$ and $X' = X \cup \{v\}$, and otherwise take $Y' = Y$ and $X' = X$. In both cases, we have $X \cup Y = X' \cup Y'$,
$X' \subseteq \tdbag(t')$, and $Y' \subseteq \tdunder(t') \setminus \tdbag(t')$.

Since $\tEmb_0$ does not use the label $\Lambda(v)$, $\tEmb_{0M}$ is also a merge of $\tdemb(t', \Emb)$ and $\tEmb_0$.
By the second property for $t'$, $X'$, $Y'$, $\tEmb_0$, and $\tEmb_{0M}$, we obtain a $k$-boundaried embedding $\tEmb'$
and a merge $\tEmb_M'$ of $\tEmb'$ and $\tEmb_0$ of genus at most the genus of $\tEmb_{0M}$ such that $\DP[t', X', \tEmb'] \leq |Y'|$.
While processing the cell $\DP[t', X', \tEmb']$ the algorithm attempts an update on a cell $\DP[t, X, \tEmb]$ for $\tEmb$ being a result
of an application of Lemma~\ref{lem:make-nice} to the embedding $\tEmb^\circ$ being the embedding $\tEmb'$ with the label of $v$ dropped.
A direct check shows that in regardless of whether $v$ belongs to $Y$ or not, it follows that $\DP[t, X, \tEmb] \leq |Y|$.

Let $\tEmb_M^\circ$ be the embedding $\tEmb_M'$ with the label $\Lambda(v)$ dropped if present; note that the genera of $\tEmb_M^\circ$ and $\tEmb_M'$ are equal.
Since the embedding $\tEmb_0$ does not use the label $\Lambda(v)$, the merge $\tEmb_M^\circ$ is a merge of $\tEmb^\circ$ and $\tEmb_0$.
By the equivalence of $\tEmb^\circ$ and $\tEmb$, the genus-minimum merge $\tEmb_M$ of $\tEmb$ and $\tEmb_0$ is of genus at most the genus of $\tEmb_M^\circ$,
which in turn is at most the genus of $\tEmb_{0M}$. Thus, $\tEmb$ and $\tEmb_M$ fulfill the second property for $Y$, $\Emb$, $\tEmb_0$, and $\tEmb_{0M}$.

This finishes the description and the proof of correctness of the computations at forget nodes.

\subsection{Join nodes}
Let $t$ be a join node with children $t_1$ and $t_2$. Note that we have $V(G(t_1)) \cap V(G(t_2)) = \tdbag(t)$, $V(G(t_1)) \cup V(G(t_2)) = V(G(t))$
and $E(G(t)) = E(G(t_1)) \uplus E(G(t_2))$. 

We iterate over every set $X \subseteq \tdbag(t)$ and every pair of cells $\DP[t_1, X, \tEmb_1]$ and $\DP[t_2, X, \tEmb_2]$.
Note that both $\tEmb_1$ and $\tEmb_2$ are $k$-boundaried embeddings with labels on vertices of $\tdbag(t) \setminus X$. 
For every $k$-boundaried embedding $\tEmb'$ that is a merge of $\tEmb_1$ and $\tEmb_2$, we use Lemma~\ref{lem:make-nice} to compute a nice embedding
equivalent to $\tEmb'$, and update $\DP[t, X, \tEmb]$ with $\DP[t_1, X, \tEmb_1] + \DP[t_2, X, \tEmb_2]$.

We now formally verify that the entries for node $t$ satisfy the required properties.

\paragraph{First property.}
Take a cell $\DP[t, X, \tEmb]$ that is less than $+\infty$, a $k$-boundaried embedding $\tEmb_0$ and a merge $\tEmb_{0M}$ of $\tEmb$ and $\tEmb_0$.
Let $g_0$ be the genus of $\tEmb_{0M}$.
Assume the value of $\DP[t, X, \tEmb]$ comes from considering cells $\DP[t_i, X, \tEmb_i]$ and a merge $\tEmb'$ of $\tEmb_1$ and $\tEmb_2$.

By the equivalence of $\tEmb'$ and $\tEmb$, the genus-minimum merge $\tEmb_M^A$ of $\tEmb'$ and $\tEmb_0$ is of genus at most $g_0$.
Since $\tEmb'$ is a merge of $\tEmb_1$ and $\tEmb_2$, if we delete the edges of $\tEmb_2$ from the embedding $\tEmb_M^A$, we obtain
an embedding $\tEmb_1^A$ of not larger genus that is a merge of $\tEmb_1$ and $\tEmb_0$. Furthermore, observe that $\tEmb_M^A$ is a merge of $\tEmb_1^A$ and
$\tEmb_2$.

We apply the first property to the cell $\DP[t_2, X, \tEmb_2]$ with $\tEmb_1^A$ and $\tEmb_M^A$, obtaining a set $Y_2$, an embedding $\Emb_2$,
and a merge $\tEmb_M^B$ of $\tdemb(t_2, \Emb_2)$ and $\tEmb_1^A$ with genus at most $g_0$.
Recall that $\tEmb_1^A$ is a merge of $\tEmb_1$ and $\tEmb_0$. Consequently, if we delete the edges of $\tEmb_1$ from $\tEmb_M^B$ we obtain
an embedding $\tEmb_2^B$ that is a merge of $\tdemb(t_2, \Emb_2)$ and $\tEmb_0$. Furthermore, observe that $\tEmb_M^B$ is a merge of $\tEmb_2^B$
and $\tEmb_1$.

We now apply the first property to the cell $\DP[t_1,X, \tEmb_1]$ with $\tEmb_2^B$ and $\tEmb_M^B$, obtaining a set $Y_1$, an embedding $\Emb_1$,
and a merge $\tEmb_M$ of $\tdemb(t_1, \Emb_1)$ and $\tEmb_2^B$ of genus at most $g_0$.
Since $\tEmb_2^B$ is a merge of $\tdemb(t_2, \Emb_2)$ and $\tEmb_0$, if we delete the edges of $\tEmb_0$ from $\tEmb_M$,
we obtain an embedding $\tEmb_{12}$ that is a merge of $\tdemb(t_1, \Emb_1)$ and $\tdemb(t_2, \Emb_2)$. 
Let $\Emb$ be the embedding underlying $\tEmb_{12}$ and let $Y = Y_1 \cup Y_2$.
Then, $\Emb$ is an embedding of $G(t)-(X \cup Y)$ such that $\tEmb_M$ is a merge of $\tdemb(t, \Emb)$ with $\tEmb_0$.
Since the genus of $\tEmb_M$ is at most $g_0$, the proof of the first property is finished.

\paragraph{Second property.}
Take a set $Y$, $\Emb$, $\tEmb_0$, and $\tEmb_{0M}$ as in the statement of the second property, and let $g_0$ be the genus of $\tEmb_{0M}$.
For $i=1,2$, take $Y_i = Y \cap \tdunder(t_i)$ and $\Emb_i$ to be the embedding $\Emb$ restricted to the edges of $G(t_i)$
(i.e., with the edges of $G(t_{3-i})$ deleted).
Note that $\tdemb(t, \Emb)$ is a merge of $\tdemb(t_1, \Emb_1)$ and $\tdemb(t_2,\Emb_2)$.

First, let $\tEmb_1^A$ be the embedding $\tEmb_{0M}$ with the edges of $\Emb_2$ deleted. Observe that $\tEmb_1^A$ is a merge of $\tdemb(t_1, \Emb_1)$
and $\tEmb_0$, while $\tEmb_{0M}$ is a merge of $\tEmb_1^A$ and $\tdemb(t_2, \Emb_2)$.
By the second property, applied to $t_2$, $X$, $Y_2$, $\Emb_2$, $\tEmb_1^A$, and $\tEmb_{0M}$,
we obtain a nice $k$-boundaried embedding $\tEmb_2$ and a merge $\tEmb_M^B$ of $\tEmb_2$ and $\tEmb_1^A$ of genus at most $g_0$
such that $\DP[t_2, X, \tEmb_2] \leq |Y_2|$.

Now, if we delete the edges of $\Emb_1$ from $\tEmb_M^B$, we obtain an embedding $\tEmb_2^B$. Observe that
$\tEmb_2^B$ is a merge of $\tEmb_0$ and $\tEmb_2$, while $\tEmb_M^B$ is a merge of $\tdemb(t_1, \Emb_1)$ and $\tEmb_2^B$.
We apply the second property to $t_1$, $X$, $Y_1$, $\Emb_1$, $\tEmb_2^B$, and $\tEmb_M^B$,
obtaining a nice $k$-boundaried embedding $\tEmb_1$ and a merge $\tEmb_M$ of $\tEmb_1$ and $\tEmb_2^B$ of genus at most $g_0$
such that $\DP[t_1, X, \tEmb_1] \leq |Y_1|$.

Let $\tEmb$ be the embedding $\tEmb_M$ with the edges of $\tEmb_0$ deleted.
Since $\tEmb_M$ is a merge of $\tEmb_1$ and $\tEmb_2^B$, which in turn is a merge of $\tEmb_2$ and $\tEmb_0$,
we have that $\tEmb$ is a merge of $\tEmb_1$ and $\tEmb_2$ and $\tEmb_M$ is a merge of $\tEmb$ and $\tEmb_0$.
Consequently, the algorithm attempts an update on the cell $\DP[t, X, \tEmb]$ with the value
$$\DP[t_1, X, \tEmb_1] + \DP[t_2, X, \tEmb_2] \leq |Y_1| + |Y_2| \leq |Y|.$$
Since the genus of $\tEmb_M$ is at most $g_0$, the pair $\tEmb$ and $\tEmb_M$ fulfills the second property for $Y$, $\Emb$, $\tEmb_0$, and $\tEmb_{0M}$.

\subsection{Summary}
We have concluded description and correctness proofs of the operations at the nodes of the tree decomposition.
At every node $t$, a straightforward implementation takes time polynomial in the number of entries $\DP[t, X, \tEmb]$, which, by Corollary~\ref{cor:few-embeddings},
is $2^{\Oh((k+g)\log(k+g))}$. This finishes the proof of Theorem~\ref{thm:genusVDtw}.

\section{Irrelevant vertex}\label{sec:irr}

In this section, we prove Theorem~\ref{thm:irr}:
\irr*
The argumentation is heavily inspired by the corresponding planar case by Marx and Schlotter~\cite{ildi}. As in most irrelevant vertex arguments, we follow typical
outline:
\begin{enumerate}
\item Run an approximation algorithm for treewidth and, in case it returns
that the treewidth of the graph is larger than the required threshold, find a large
grid minor. Here, the linear dependency on the grid size and treewidth
in bounded genus graphs is known, and one can find the corresponding grid minor efficiently.
\item
Show that a vertex $w \in M$ that is connected to many places in the grid
that are far apart needs to be included in every solution, exactly
as it is in the planar case.
In the absence of such a vertex, a large part of the grid minor is flat, that is, embeds planarly
and does not have any internal connections to the modulator $M$.
\item Prove that a middle vertex of such a flat part is irrelevant. 
Here, the challenge is to argue that in every solution there exists an embedding of
the remaining part that draws the flat part indeed in a flat manner.
\end{enumerate}

We repeatedly use the following auxiliary result throughout this section:
\begin{lemma}[{\cite[Lemma B.6]{diestel}}]\label{lem:diestel}
Let $\Gamma$ be a surface of Euler genus $g$ and let $\mathcal{C}$ be a set of $g+1$ disjoint circles in~$\Gamma$. 
If $\Gamma \setminus \bigcup \mathcal{C}$ has a component $D_0$
whose closure in $\Gamma$ meets every circle in $\mathcal{C}$,
then at least one of the circles in $\mathcal{C}$ bounds a disc in $\Gamma$ that is disjoint from $D_0$.
\end{lemma}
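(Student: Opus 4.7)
The plan is to prove the lemma by an Euler-characteristic count on the compact bordered surface $\Gamma'$ obtained by cutting $\Gamma$ along all circles of $\mathcal{C}$. Cutting along disjoint simple closed curves --- whether two-sided or one-sided --- preserves Euler characteristic, so $\chi(\Gamma') = \chi(\Gamma) = 2 - g$. The components of $\Gamma'$ are in natural bijection with the components $D_0, D_1, \ldots, D_m$ of $\Gamma \setminus \bigcup \mathcal{C}$; write $D_j'$ for the component of $\Gamma'$ arising from $D_j$. Each $D_j'$ is a compact connected surface with at least one boundary circle, and every such boundary circle comes from one ``side'' of a circle in $\mathcal{C}$ (two sides per two-sided circle, one side per one-sided circle).

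The argument rests on two bounds. First, since $\bar D_0$ meets every circle of $\mathcal{C}$, the piece $D_0'$ inherits at least one boundary circle from each of the $g+1$ circles; combined with the standard fact that a compact connected surface with $b$ boundary circles has $\chi \leq 2 - b$, this yields $\chi(D_0') \leq 2 - (g+1) = 1 - g$. Second, any compact connected bordered surface that is not a disk satisfies $\chi \leq 0$: the disk is precisely the orientable genus-$0$ surface with one boundary circle and $\chi = 1$, and adding either another boundary circle or any (orientable or non-orientable) handle drops $\chi$ to at most $0$.

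Assuming for contradiction that no $D_j$ with $j \neq 0$ is a disk in $\Gamma$, the corresponding $D_j'$ are non-disk bordered surfaces, so $\chi(D_j') \leq 0$. Additivity of $\chi$ over components then gives
\[
2 - g \;=\; \chi(\Gamma') \;=\; \chi(D_0') + \sum_{j \neq 0} \chi(D_j') \;\leq\; (1 - g) + 0 \;=\; 1 - g,
\]
a contradiction. Hence some $D_j$ with $j \neq 0$ is a disk; its closure $\bar D_j$ in $\Gamma$ is then a disc bounded by a single circle $C \in \mathcal{C}$, which must be two-sided (otherwise a Möbius-band neighbourhood of $C$ would embed in $\bar D_j$, producing an essential loop inside a disk). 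Finally, $\bar D_j$ is disjoint from $D_0$ because $D_j$ and $D_0$ are distinct components of the complement and $C \subseteq \bigcup \mathcal{C}$ is itself disjoint from the open set $D_0$. The main delicate point I expect to need care is verifying that cutting along one-sided curves preserves $\chi$ (replacing a Möbius-band neighbourhood by an annulus does) and that the boundary circle of the disk component cannot itself be one-sided; once these local checks are in place, the Euler-characteristic inequality closes the argument.
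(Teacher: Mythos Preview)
The paper does not provide its own proof of this lemma: it is quoted from Diestel's textbook (the citation \texttt{[Lemma~B.6]{diestel}}) and used as a black box throughout Section~\ref{sec:irr}. So there is nothing in the paper to compare your argument against.

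That said, your Euler-characteristic argument is correct and is essentially the standard proof. The key steps are all valid: cutting along disjoint simple closed curves (two-sided or one-sided) preserves $\chi$; the piece $D_0'$ acquires at least one boundary circle from each of the $g+1$ circles in $\mathcal{C}$ and hence has $\chi(D_0') \le 1-g$; every compact connected bordered surface other than the disk has $\chi \le 0$; and additivity then forces $2-g \le 1-g$ unless some $D_j'$ with $j\neq 0$ is a disk.

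One small streamlining: your separate verification that the bounding circle $C$ of $\bar D_j$ is two-sided can be shortened. A one-sided circle has a connected deleted neighbourhood, so exactly one component of $\Gamma \setminus \bigcup\mathcal{C}$ is adjacent to it---and by hypothesis that component is $D_0$. Hence for $j\neq 0$ every boundary circle of $D_j'$ already arises from a two-sided member of $\mathcal{C}$, the map $\partial D_j' \to C$ is a homeomorphism, and $\bar D_j$ is genuinely an embedded disk in $\Gamma$.
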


Let $G$, $k$, $g$, and $M$ be such as in the statement of Theorem~\ref{thm:irr}.
We start by computing an embedding $\Emb_0$ of $G-M$ into a surface of Euler genus at most $g$, using either the
algorithm of Kawarabayashi, Mohar, and Reed~\cite{mohar2} or the older algorithm of Mohar~\cite{mohar1}.\footnote{The work~\cite{mohar2} appeared so far only as an extended abstract in conference proceedings.} This takes time $C_g^1 n$ for some constant $C_g^1$ depending only on $g$.

\subsection{Finding a large grid}

We apply a constant-factor approximation algorithm for treewidth and largest
excluded grid minor in graphs with a fixed excluded minor~\cite[Theorem~3.11]{bidim-algo}.
The algorithm does not need to compute a near-embedding of $G$, as $\Emb_0$ serves this purpose.
Thus, the algorithm as described in~\cite[Theorem~3.11]{bidim-algo} runs in time $C_g^2 n^{\Oh(1)}$ for some
constant $C_g^2$ depending only on $g$, and is a $C_g^3$-approximation.

If the resulting tree decomposition is of width at most 
$C_g^4 |M|^{1/2}k^{3/2}$ for some sufficiently large constant $C_g^4$, then we directly return it.
Otherwise,  by choosing the constant $C_g^4$ appropriately, we get a grid minor of sidelength $\Theta(|M|^{1/2}(k+g)^{3/2}g^{1/2})$.
By standard arguments, we henceforth focus on the case when we get a wall $W_0$ of sidelength $\ell_0$ as a subdivision in $G-M$, where $\ell_0= \Theta(|M|^{1/2}(k+g)^{3/2}g^{1/2})$ can be chosen arbitrarily large; see Figure~\ref{fig:wall} for the definition of a wall.

\begin{figure}
\begin{center}
\includegraphics{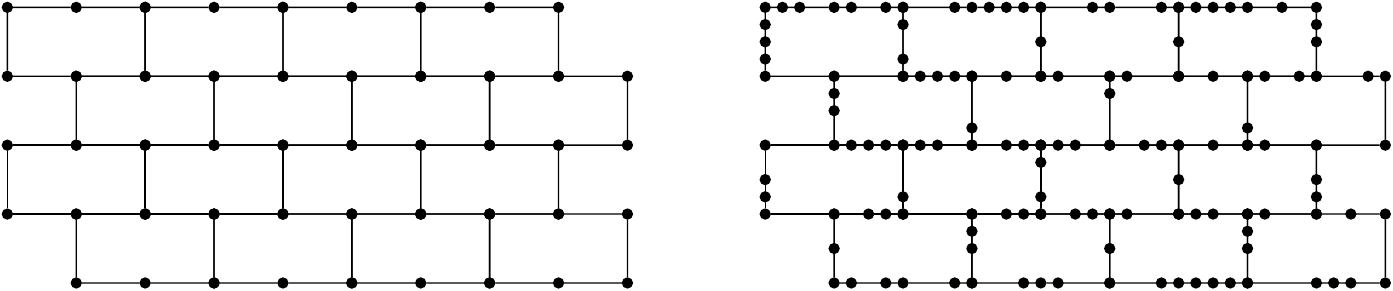}
\caption{A wall of sidelength $4$ and a subdivision of this wall.}
\label{fig:wall}
\end{center}
\end{figure}

In the wall $W_0$, we identify $g+1$ pairwise disjoint subwalls of sidelength
$\ell_1$ for every $\ell_1 = \Theta(\ell_0 /\sqrt{g}) = {\Theta(|M|^{1/2}(k+g)^{3/2})}$, 
leaving enough space between the subwalls so that the part of $W_0$ not contained in any of the subwalls is connected. 
By~\cref{lem:diestel}, for at least one of the 
identified subwalls, its surrounding cycle bounds a disc in the considered embedding of $G-M$. 
Consequently, at least one of the identified subwalls is planarly embedded in $\Emb_0$. 
We denote this subwall as $W_1$ and the part of $G-M$ embedded in the disc separated by the surrounding cycle of $W_1$ by $G_1$. 
In what follows, we mostly focus on $G_1$ and $W_1$.

\subsection{Filtering out the modulator neighbors}

We now show that if a vertex $w \in M$ is adjacent to many scattered
vertices of $G_1$, then it needs to be included in every solution.
We start with constructing a canonical graph of Euler genus larger than $g$.
\begin{lemma}\label{lem:K5-}
For an integer $\ell > 1$, let $B_\ell$ be a graph defined as follows.
We take $\ell$ disjoint copies of $K_5^-$ (i.e., the five-vertex clique $K_5$ with an edge deleted),
and denote by $v_i$ and $w_i$ the endpoints of the removed edge in the $i$-th copy.
We introduce two further vertices $v$ and $w$ and make $v$ adjacent to all vertices $v_i$ and $w$ adjacent to all vertices $w_i$.
Then, the Euler genus of $B_\ell$ is at least $\ell/2$.
\end{lemma}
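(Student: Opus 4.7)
The plan is to show that in any embedding of $B_\ell$ on a surface $\Sigma$ of Euler genus $g$, each of the $\ell$ copies of $K_5^-$ forces topological complexity that is captured by \cref{lem:diestel}; in fact the argument will yield the stronger bound $g\ge \ell$, of which $\ell/2$ is a trivial consequence.

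First I would analyse planar embeddings of $K_5^-$. By Euler's formula $K_5^-$ has $F=6$ faces, and since $2|E|=18=3F$ every face must be a triangle. Because $v_iw_i\notin E(K_5^-)$, no triangular face contains both $v_i$ and $w_i$, so in any planar embedding $v_i$ and $w_i$ lie on disjoint face sets. Consequently, $K_5^-$ together with two pendant arcs attached at $v_i$ and $w_i$ cannot be drawn in a closed disc so that both pendants reach the disc boundary: any such drawing would force both $v_i$ and $w_i$ onto the outer face of a planar $K_5^-$ embedding, which is impossible.

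Now fix an embedding of $B_\ell$ in $\Sigma$. Since $B_\ell-\{v,w\}$ consists of $\ell$ vertex-disjoint copies of $K_5^-$, I would choose pairwise disjoint small regular neighbourhoods $N_i$ of the $i$-th copy, each disjoint from $v$ and $w$. Let $D_0$ be the component of $\Sigma\setminus\bigcup_i N_i$ containing $v$ and $w$; the pendants $vv_i$ and $ww_i$ all emanate from $D_0$ and cross $\partial N_i$. After thickening each $N_i$ along short initial segments of the corresponding $vv_i$ and $ww_i$, I may assume that both pendants cross $\partial N_i$ through a single boundary circle $c_i\subseteq \partial N_i$, yielding $\ell$ pairwise disjoint circles $c_1,\ldots,c_\ell$ whose closures all meet $D_0$.

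Suppose for contradiction that $g<\ell$; then $\ell\ge g+1$, and applying \cref{lem:diestel} to any $g+1$ of the circles $c_i$ produces some $c_i$ that bounds a closed disc $D\subseteq\Sigma$ disjoint from $D_0$. Because $D$ lies entirely on the non-$D_0$ side of $c_i$, it contains the $i$-th copy of $K_5^-$ together with the two pendant arcs from $v_i$ and $w_i$ to $c_i=\partial D$---precisely the configuration ruled out in the first step. This contradiction gives $g\ge \ell$, hence $g\ge \ell/2$.

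The main obstacle is the thickening step: ensuring that for each $i$ both $vv_i$ and $ww_i$ enter $N_i$ through a common boundary component $c_i$. When $N_i$ is a disc this is automatic, but when $N_i$ has more complex topology one must either carefully route the thickening so that no new boundary component is introduced along the absorbed portions of the pendants, or instead make a direct Euler-characteristic accounting (via $\chi(\Sigma)=\sum_i\chi(N_i)+\chi(D_0)$) showing that a non-disc $N_i$ already contributes at least one unit to $g$. Both approaches are routine but demand careful topological bookkeeping.
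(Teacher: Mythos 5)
Your overall route---take disjoint regular neighbourhoods $N_i$ of the copies, extract one circle per copy, apply \cref{lem:diestel}, and use the fact that $K_5^-$ cannot be drawn in a closed disc with both $v_i$ and $w_i$ joined to the boundary---is genuinely different from the paper's proof (which uses for each copy the triangle avoiding $v_i,w_i$, turns the contractible ones into face boundaries, plants a new vertex in each such face, and concludes $\ell\le 2g$). However, your write-up has a real gap exactly at the step you flag, and the fixes you propose do not close it. You cannot in general ``assume that both pendants cross $\partial N_i$ through a single boundary circle $c_i$'': if the $i$-th copy happens to be drawn inside a disc of $\Sigma$ (which you must allow), the boundary circles of $N_i$ correspond to the faces of the planar embedding of $K_5^-$, all triangles, and by your own first observation no face is incident to both $v_i$ and $w_i$; hence $vv_i$ and $ww_i$ necessarily leave $N_i$ through \emph{different} boundary circles. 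Thickening $N_i$ along initial segments of the pendants only attaches a tongue along an arc of one boundary circle and deforms it; it never merges two boundary components, and the Euler-characteristic fallback addresses non-planar neighbourhoods, not this multiple-boundary-circle problem (note also $\chi(N_i)=\chi(K_5^-)=-4$, so ``non-disc $N_i$'' is not the relevant dichotomy). A second unjustified step is taking $D_0$ to be a component of $\Sigma\setminus\bigcup_i N_i$ containing \emph{both} $v$ and $w$; nothing guarantees they lie in the same component, and your final contradiction needs $w\notin D$.

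Both gaps are repairable, but by a different observation than the ones you suggest. Choose $N_i$ so small that each pendant edge meets $\partial N_i$ in exactly one point, let $c_i$ be the circle crossed by $vv_i$, and let $D_0$ be the component of the complement of the chosen circles containing $v$; its closure meets every $c_i$ along the arc $vv_i$. If $\ell\ge g+1$, \cref{lem:diestel} gives a $c_i$ bounding a disc $D$ disjoint from $D_0$. Since the outer collar of $c_i$ lies in $D_0$, the disc $D$ lies on the $N_i$-side, so $N_i\subseteq D$; every other copy is joined to $v$ by an edge avoiding $c_i$, hence lies outside $D$, and therefore (using $\ell\ge 2$ and an edge $ww_j$ with $j\ne i$) $w\notin D$ as well. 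Only now does it follow that $ww_i$ must cross $\partial D=c_i$ at its unique crossing of $\partial N_i$, so inside $D$ you see $K_5^-$ together with arcs from $v_i$ and $w_i$ to $\partial D$, contradicting your first step. With this repair your argument is correct and in fact gives the stronger bound $\ell\le g$; as written, though, the ``single circle per copy'' reduction is a step that would fail.
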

\begin{proof}
Consider an embedding of $B_\ell$ into a surface $\Gamma$ of Euler genus $g$.
For $i=1,2,\ldots,\ell$, let $\gamma_i$ be the triangle in the $i$-copy of $K_5^-$ that does not pass through $v_i$ or $w_i$, treated as a closed curve on $\Gamma$.
Since $\Gamma$ has Euler genus $g$, by~\cref{lem:diestel}, at least $\ell-g$ of the curves $\gamma_i$ must be contractible.

Consider an index $i$ for which $\gamma_i$ is contractible, and let $V(\gamma_i)$ be the three vertices of $G$ that lie on $\gamma_i$.
Since $G-V(\gamma_i)$ is connected, $\gamma_i$ is a boundary of a face of the embedding of $G$.
Augment the graph $G$ by adding a new vertex $x_i$ inside this face, and connect it to the three vertices of $V(\gamma_i)$.

In this manner, the $i$-th copy of $K_5^-$, together with the new vertex $x_i$, forms a nonplanar graph (a supergraph of $K_{3,3}$). Consequently, we cannot perform this operation
in parallel for more than $g$ indices $i$. We infer that $\ell \leq 2g$, and the lemma is proven.
\end{proof}

Recall that the boundary cycle of $W_1$ encloses a disc containing $W_1$ in the considered embedding $\Emb_0$ of $G-M$, and $G_1$ is the part of the graph $G-M$ that is enclosed by this boundary cycle. 
For a vertex $v \in V(G_1)$ and an integer $\ell$, the \emph{radius-$\ell$ ball} $B(v, \ell)$ around $v$ is defined as follows: we take all faces $f$ of $W_1$
(excluding the infinite face) that contain $v$ either inside or on the boundary, mark all faces that lie in face-vertex distance (excluding traversal through the infinite face) at most $\ell$ 
from one of these faces $f$ and put into $B(v, \ell)$ all vertices of $G_1$ that are contained inside or on the boundary of marked faces. 

For every $w \in M$, we create a set $I(w) \subseteq V(G_1)$ as follows. We start with $I(w) = \emptyset$ and
every vertex of $V(G_1)$ unmarked. As long as there exists an unmarked
neighbor $v$ of $w$ in $G_1$, we insert $v$ into $I(w)$ and mark all vertices of $B(v, \ell_2)$ for $\ell_2 = 100(k+g+1)$.
We claim that if $I(w)$ is too large, the vertex $w$ needs to be deleted in any solution to \genusVD{}.
\begin{lemma}\label{lem:force-delete}
Let $w \in M$ be a vertex for which $|I(w)| > k+2g+1$. Then $w$ belongs to every solution to \genusVD{} instance $(G,g,k)$.
\end{lemma}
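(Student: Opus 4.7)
The plan is a proof by contradiction using Lemma~\ref{lem:K5-}. Suppose that some solution $S$ to the \genusVD{} instance $(G,g,k)$ with $|S|\le k$ does not contain $w$; then $G-S$ embeds on a surface of Euler genus at most $g$. I will exhibit $B_{2g+1}$ as a minor of $G-S$ with $w$ serving as one of the two hubs. Since taking minors does not increase Euler genus (minors are obtained by repeated edge deletions and edge contractions, neither of which increase genus) and $B_{2g+1}$ has Euler genus at least $\lceil (2g+1)/2\rceil=g+1$ by Lemma~\ref{lem:K5-}, this will contradict the existence of such an embedding.

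Set $I':=I(w)\setminus S$; from $|I(w)|>k+2g+1$ and $|S|\le k$ we get $|I'|\ge 2g+1$, so we can pick some $I''=\{v_1,\dots,v_{2g+1}\}\subseteq I'$. By the marking rule in the construction of $I(w)$, whenever $v_j$ was added it was unmarked, so the face-vertex distance in $W_1$ between any two vertices of $I''$ exceeds $\ell_2=100(k+g+1)$. With $\ell_3:=\lfloor\ell_2/2\rfloor$, the balls $B(v_i,\ell_3)$ are therefore pairwise vertex-disjoint, and each of them contains a planarly embedded subwall of side $\Theta(\ell_3)$ in $\Emb_0$. Since $\ell_3\gg k$, deleting $S\cap B(v_i,\ell_3)$ still leaves a large subwall in the $i$-th ball, which can host a $K_5^-$ minor $\mathcal{M}_i$ (we use here that $K_5^-$ is planar, so a small subwall suffices to realise it) with two designated ``missing-edge endpoint'' branch sets $P_i$ and $Q_i$, arranged so that $v_i\in P_i$ and $Q_i$ touches the outer rim of $B(v_i,\ell_3)$. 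Outside the balls, the subgraph of $W_1-S$ that remains after removing the interiors of all $B(v_i,\ell_3)$ is still connected (the balls occupy only small regions of $W_1$ and $|S|\le k$ is small), so it contains a connected subgraph $T$ together with an edge from $T$ to each $Q_i$.

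With these pieces in place, the desired $B_{2g+1}$ minor uses $\{w\}$ and $T$ as the two hubs (playing the roles of $v$ and $w$ in the notation of Lemma~\ref{lem:K5-}), and each $\mathcal{M}_i$ as the $i$-th $K_5^-$ copy. The hub-edges on the $\{w\}$-side are realised by the original edges $wv_i\in E(G)$ (recall $v_i\in P_i$), and the hub-edges on the $T$-side are realised by the wall edges between $T$ and each $Q_i$. Disjointness follows because $\{w\}\subseteq M$ is disjoint from $V(G-M)$, the branch set $T$ lies outside every ball interior, and the $\mathcal{M}_i$'s sit inside pairwise disjoint balls. The main technical point I expect to have to justify carefully is the simultaneous construction of the $\mathcal{M}_i$'s and $T$ after removing $S$ from the wall, but the generous slack $\ell_3=50(k+g+1)$ compared with $|S|\le k$ is exactly what gives enough surviving wall structure both inside each ball and in the outer region to carry out the constructions in parallel.
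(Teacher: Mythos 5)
Your overall strategy is the same as the paper's: assume a solution $S$ with $w\notin S$, build a $B_{2g+1}$ minor of $G-S$ with $w$ as one hub by placing a $K_5^-$ in a ball around each of sufficiently many vertices of $I(w)$, and contradict Lemma~\ref{lem:K5-}. However, there is a genuine gap in how you select the balls. You set $I'=I(w)\setminus S$, i.e.\ you only discard centres that themselves lie in $S$, and then you build each $\mathcal{M}_i$ inside a ball $B(v_i,\ell_3)$ that may still intersect $S$, asserting that the surviving subwall ``can host a $K_5^-$ minor \ldots\ arranged so that $v_i\in P_i$.'' That assertion is not justified and can fail: $v_i$ is merely a neighbour of $w$ in $G_1$, not necessarily a vertex of the wall $W_1$, and its attachment to the wall inside the ball may go through very few vertices, all of which can belong to $S$. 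In that case no connected branch set can contain $v_i$ and reach the surviving subwall, and since the only neighbour of $w$ that you are guaranteed inside $B(v_i,\ell_3)$ is $v_i$ itself (all other neighbours of $w$ are confined to the other balls by the marking rule), the hub edge from $w$ to the $i$-th copy cannot be realised. Your counting gives no slack to absorb this: you pick exactly $2g+1$ centres from $I'$, while $S$ may meet up to $k$ of the chosen balls.

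The repair is exactly the paper's choice: since the balls (the paper uses radius $\ell_2/4$) are pairwise vertex-disjoint and $|S|\le k$, at most $k$ balls meet $S$ at all, so defining $I'$ as the set of centres whose \emph{entire} ball is disjoint from $S$ still yields $|I'|>2g+1$ --- this is precisely what the hypothesis $|I(w)|>k+2g+1$ is calibrated for --- and every $K_5^-$ is then built inside an $S$-free ball, where $v_i$ and the wall structure are intact. The remaining differences from the paper are cosmetic: the paper takes one surviving ball, contracted onto its centre $v_0$ together with $S$-avoiding connecting paths, as the second hub, whereas you use a connected subgraph $T$ of the wall outside the balls; either works, but note that your claim that $W_1-S$ minus the ball interiors ``is still connected'' is not literally true (deleting $k$ vertices can detach small pieces of a wall), and the cleaner statement is the paper's: between any two ball boundaries there are $k+1$ vertex-disjoint paths avoiding the balls' interiors, hence at least one avoids $S$.
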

\begin{proof}
Suppose the contrary, and let $S$ be a solution that does not contain $w$. That is, $|S| \leq k$, and $G-S$ admits an embedding $\Emb$ into a surface
of Euler genus at most $g$.

Consider subgraphs $B(v, \ell_2/4)$ for $v \in I(w)$. By the construction of $I(w)$, these subgraphs are vertex-disjoint. 
Furthermore, as $\ell_2/4$ is much larger than $k$, for every $v_1,v_2 \in I(w)$, $v_1 \neq v_2$, there exists a set $\mathcal{P}(v_1,v_2)$ of $k+1$ vertex-disjoint paths
in $G_1$ connecting the outer boundary of $B(v_1, \ell_2/4)$ with the outer boundary of $B(v_2, \ell_2/4)$, without any internal vertex
in any of the subgraphs $B(v, \ell_2/4)$ for $v \in I(w)$. 
Let $I'$ be the set of these vertices $v \in I(w)$ such that $S$ is disjoint with $V(B(v, \ell_2/4))$; since $|S| \leq k$, we have $I' > 2g+1$.

Construct a minor $H$ of $G-S$ as follows. Pick some arbitrary $v_0 \in I'$ and contract $B(v_0, \ell_2/4)$ onto $v_0$.
For every $v \in I' \setminus \{v_0\}$, contract onto $v_0$ a path of $\mathcal{P}(v_0, v)$ that is disjoint with $S$.
For every $v \in I' \setminus \{v_0\}$, contract $B(v, \ell_2/4)$ into a $K_5^-$ (recall that $B(v, \ell_2/4)$ contains a large wall, being part of $W_1$)
such that one of the two nonadjacent vertices is adjacent to $w$ and the other to $v_0$. 
We have obtained that $B_{2g+1}$ (as defined in Lemma~\ref{lem:K5-}) is a minor of $G-S$, a contradiction with Lemma~\ref{lem:K5-}.
\end{proof}
Consequently, if there exists $w \in M$ with $|I(w)| > k +2g+1$, then we can return $w$ as the second result of the algorithm of Theorem~\ref{thm:irr}.
Henceforth we will assume that $|I(w)| \leq k+2g+1$ for every $w \in M$.

Recall that $W_1$ is a wall of sidelength $\ell_1$, where $\ell_1=\Theta(|M|^{1/2}(k+g)^{3/2})$ can be chosen arbitrarily large.
This lets us  identify $n_3$ disjoint subwalls of sidelength $\ell_3 = 100(k+g)$, where $n_3=\Theta(|M|(k+g))$ can be chosen arbitrarily large.
Note that $|\bigcup_{w \in M} I(w)| = \Oh(|M|(k+g))$ and that all neighbors of $M$ are located in $\bigcup_{w\in M, v\in I(w)}B(v,\ell_2)$.
Each ball $B(v,\ell_2)$ may intersect only a constant number of the identified subwalls of sidelength $\ell_3$.
Hence, if $n_3=\Theta(|M|(k+g))$ is sufficiently large,  there is a subwall $W_3$ of sidelength $\ell_3 = 100(k+g)$ such that no vertex of $G_3$,
defined as the part of $G_1$ that is enclosed by the outer boundary of $W_3$,  is a neighbor of a vertex in $M$. 
Note that only the vertices on the outer boundary of $G_3$ may have neighbors in $G-V(G_3)$.
With $G_3$ and $W_3$, we proceed to the next section.

\subsection{Middle vertex of a flat part is irrelevant}

Let $v$ be a vertex on the middle face of $W_3$. We show that $v$ is irrelevant, that is, it can be returned as the outcome of Theorem~\ref{thm:irr}.
Clearly, if $(G,g,k)$ is a yes-instance, then so is $(G-\{v\}, g, k)$. 
In the other direction, suppose that there exists $S \subseteq V(G) \setminus \{v\}$ such that $G-\{v\}-S$ admits an embedding $\Emb$ into a surface
of Euler genus at most $g$. We would like to enhance this embedding so that it also accommodates $v$.

The sidelength $\ell_3$ of $W_3$ is large enough to find in $W_3$ a subdivision of a \emph{circular wall} $W_4$ of height $h_4=5(k+g)+9$ and circumference $\ell_4=\min(3,k+1)$
so that $v$ is located inside the inner cycle of the wall; see \cref{fig:ring} for the definition of a circular wall.
Next, in $W_4$ we identify $k+g+2$ \emph{rings} (which are concentric circular walls of height 5 and circumference $\ell_4$), separated by layers of height 1; see \cref{fig:ring}.
Let $\mathcal{R}_0$ be the family of identified rings.

\begin{figure}
\begin{center}
\includegraphics{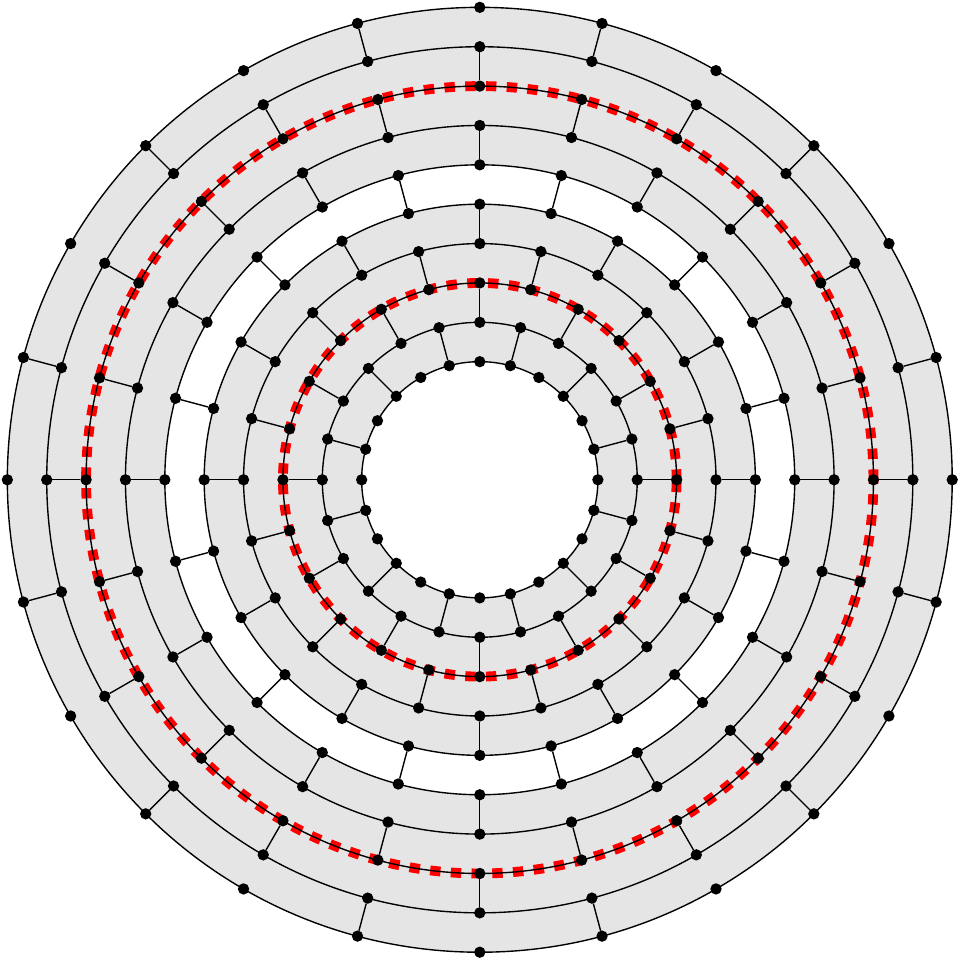}
\caption{A circular wall of height 9 and circumference 12 containing two rings. Their central circles are marked with red squares.}\label{fig:ring}
\end{center}
\end{figure}

For a ring $R\in \mathcal{R}_0$, we say that the \emph{central circle} of $R$ is the circle between the two middle layers of the ring,
and the \emph{boundary circles} are the circles separating $R$ from the remainder of $W_4$.
A face of $R$ in the embedding $\Emb_0$ is \emph{small} if it is not the outerface nor the face
inside the innermost cycle, and \emph{central} if it is incident to the central circle.
 The \emph{territory} of $R$, denoted by $T(R)$, is the subgraph of $G$ that consists of $R$ and 
everything that is drawn in the embedding $\Emb_0$ in the small faces of $R$.
Note that $T(R)$ is a planar graph, and the subgraphs $\{T(R) : R \in \mathcal{R}_0\}$ are vertex-disjoint.

Let $\mathcal{R} \subseteq \mathcal{R}_0$ be the family of these rings for which $T(R)$ is disjoint with $S$.
As $|\mathcal{R}_0| = k+g+2$ and the subgraphs $T(R)$ are vertex-disjoint, we have $|\mathcal{R}| \geq g+2$.

We say that a ring $R$ is embedded \emph{plainly} in the embedding $\Emb$ if
for every central face $f$ of $R$, the cycle $C^f$ that surrounds $f$ in $\Emb_0$, is a two-sided cycle in $\Emb$ that bounds
a disc on one side, and the graph $R-V(C^f)$ is drawn on the other side.
The following lemma is an easy corollary of~\cref{lem:diestel}:
\begin{lemma}\label{lem:plainly}
There exists $R \in \mathcal{R}$ that is embedded plainly in $\Emb$.
\end{lemma}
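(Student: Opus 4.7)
The plan is to apply Lemma~\ref{lem:diestel} to the central circles $\gamma_R$ for $R \in \mathcal{R}$, regarded as pairwise disjoint simple closed curves in the surface $\Sigma$ of Euler genus at most $g$ on which $\Emb$ is embedded. Since the territories $T(R)$ are pairwise vertex-disjoint and $|\mathcal{R}| \geq g+2$, this is a family of at least $g+2$ disjoint closed curves in $\Sigma$.

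First I would locate a component $D_0$ of $\Sigma \setminus \bigcup_{R \in \mathcal{R}} \gamma_R$ whose closure meets $g+1$ of these curves. The key resource is that the portions of the circular wall $W_4$ lying between consecutive rings of $\mathcal{R}$, together with the portion of $W_4$ outside the outermost ring of $\mathcal{R}$, form a subgraph $H \subseteq G$ that lies outside every territory $T(R)$ with $R \in \mathcal{R}$, and is therefore disjoint from $S$. With $\ell_4$ and $h_4$ chosen sufficiently large (as they were), this $H$ remains connected after removing any at most $k$ additional vertices, and its image under $\Emb$ touches each $\gamma_R$ from the ``outer'' side in the sense of $\Emb_0$. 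Taking $D_0$ to be the component of $\Sigma \setminus \bigcup \gamma_R$ containing $\Emb(H)$ and applying Lemma~\ref{lem:diestel} to any $g+1$ of the circles produces a ring $R^{\star} \in \mathcal{R}$ such that $\gamma^{\star} := \gamma_{R^{\star}}$ bounds a disc $\Delta \subseteq \Sigma$ disjoint from $D_0$.

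Because $D_0$ contains $\Emb(H)$, the disc $\Delta$ lies on the side of $\gamma^{\star}$ opposite to $H$, i.e.\ on the ``inner'' side in $\Emb_0$. The graph $R^{\star} - V(\gamma^{\star})$ splits into an inner half and an outer half, each a connected annular subgraph attached to $\gamma^{\star}$. A Jordan-curve argument inside $\Sigma$ then shows that the inner half is drawn inside $\overline{\Delta}$, while the outer half is drawn on the opposite side of $\gamma^{\star}$; slightly thickening $\overline{\Delta}$ across $\gamma^{\star}$ yields a disc $\Delta' \supseteq \overline{\Delta}$ containing all of $R^{\star}$. Inside $\Delta'$, the ring $R^{\star}$ is drawn planarly. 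Each central face $f$ of $R^{\star}$ is then a face of this planar drawing whose boundary is precisely the two-sided cycle $C^f$ bounding a sub-disc of $\Delta'$, with $R^{\star} - V(C^f)$ lying in the complement of this sub-disc but still inside $\Delta'$. This is exactly the definition of $R^{\star}$ being plainly embedded.

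The main obstacle is the third paragraph: deriving a planar-in-a-disc drawing of the whole ring from the disc-bounding property of the single separating cycle $\gamma^{\star}$. The delicate point is ruling out that the inner or outer half of $R^{\star}$ ``leaks'' across $\gamma^{\star}$ to the wrong side in $\Sigma$; the natural way to handle this is to use the $2$-connectedness of $R^{\star}$ together with the grid-like structure of the ring on both sides of its central circle, which forces each half to be drawn on a single topological side of $\gamma^{\star}$.
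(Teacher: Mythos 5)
Your overall strategy (apply Lemma~\ref{lem:diestel} to one disjoint circle per ring, using a large connected region meeting all of them) is the same as the paper's, but the specific choice of circles creates a genuine gap. You apply the lemma to the \emph{central circles} $\gamma_R$ and then, in your third paragraph, try to upgrade ``$\gamma^\star$ bounds a disc $\Delta$ disjoint from $D_0$'' to ``$R^\star$ is plainly embedded''. That implication does not follow. Being plainly embedded is a condition on \emph{every} central face cycle $C^f$, and knowing that the central circle bounds an empty disc on one side only constrains one curve: the complement of $\Delta$ may have positive genus, and the two halves of $R^\star$, although attached to $\gamma^\star$ and drawn outside $\Delta$ (and inside $D_0$), can perfectly well be embedded non-planarly there, e.g.\ with a concentric circle of the ring running around a handle, in which case some brick cycle $C^f$ fails to bound a disc with $R^\star-V(C^f)$ on the other side. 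Nothing confines $R^\star$ to a collar of $\Delta$, so ``slightly thickening $\overline{\Delta}$'' does not yield a disc containing all of $R^\star$, and $2$-connectivity of the ring does not force a planar drawing in $\Sigma\setminus\Delta$. The paper sidesteps exactly this difficulty by arguing by contradiction: for each allegedly non-plain ring it picks a \emph{violating} central-face cycle $C(R)$ and applies Lemma~\ref{lem:diestel} to these adaptively chosen cycles, so that the disc produced by the lemma (disjoint from the component containing $R\setminus V(C(R))$) directly contradicts the violation. Your direct argument via central circles would need an additional, nontrivial argument for each $C^f$, which is the actual content of the lemma and is missing.

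There is also a flaw in your construction of $D_0$, though it is repairable. The subgraph $H$ you take (the portions of $W_4$ between consecutive rings of $\mathcal{R}$ and outside the outermost one) is not disjoint from $S$ --- only the territories of rings in $\mathcal{R}$ avoid $S$; the inter-ring layers and the rings of $\mathcal{R}_0\setminus\mathcal{R}$ may contain solution vertices --- and it is not even connected, since its annular pieces are separated by the rings of $\mathcal{R}$, which you exclude from $H$. To get a single component of $\Sigma\setminus\bigcup_R\gamma_R$ whose closure meets every central circle you must route through the rings themselves (their territories avoid $S$) and use the $\ell_4\ge k+1$ vertex-disjoint radial paths between consecutive boundary circles to find $S$-free connections between rings; this is precisely how the paper builds its component $D$ from the pieces $R\setminus V(C(R))$ together with such paths.
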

\begin{proof}
Suppose the contrary.
For every ring $R \in \mathcal{R}$, let $C(R)$ be the cycle around a small face of $R$ that violates the definition of a plainly embedded ring.
Consider a graph $G'=G-M-S-\bigcup_{R \in \mathcal{R}} V(C(R))$. We shall prove that it has a large connected component $D$ which contains $R':=R\setminus V(C(R))$ for each $R\in \mathcal{R}$.
First, note that $R'$ is itself connected and the boundary circles of $R$ of are preserved in $R'$. 
Thus, it suffices to prove that for every two consecutive rings $R_1,R_2\in \mathcal{R}$, the inner boundary circle of $R_2$
is connected to the outer boundary circle of $R_1$.
Observe that the part of $W_4$ between these circles is a circular wall of circumference $\ell_4$ and some positive height.
Thus, there are $\ell_4\ge k+1$ vertex-disjoint paths between the two boundary circles. These paths are disjoint with $M$ and $\bigcup_{R \in \mathcal{R}} V(C(R))$,
and at least one of them must be disjoint with $S$. Hence, one of these paths is preserved in $G'$.

We conclude that the claimed component $D$ indeed exists.
It is adjacent to every cycle $C(R)$, so \cref{lem:diestel}, due to our assumption on how cycles $C(R)$ are embedded in $\Emb$, yields that the Euler genus of the embedding $\Emb$ is at least $|\mathcal{R}|$, which is more than $g$.
\end{proof}

Let $R \in \mathcal{R}$ be a plainly embedded ring, and let $C_R$ be the central circle of $R$.
A direct corollary of the definition of a plainly embedded ring is the following.
\begin{corollary}\label{cor:CR}
In the embedding $\Emb$, $C_R$ is a two-sided cycle, and its incident edges of $R$ are partitioned between the sides of $C_R$ exactly as in the embedding
$\Emb_0$.
\end{corollary}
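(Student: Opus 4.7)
The plan is to unpack the definition of plainly embedded and glue the local pictures supplied by the central faces on each side of $C_R$. Let $F^+$ and $F^-$ denote the central faces of $R$ lying on the two sides of $C_R$ in $\Emb_0$. For each central face $f$, the cycle $C^f$ consists of an arc $A^f \subseteq C_R$ together with some edges away from $C_R$; the arcs $\{A^f\}_{f \in F^+}$ partition the edge set of $C_R$, as do $\{A^f\}_{f \in F^-}$. By the plainly-embedded hypothesis, each $C^f$ is two-sided in $\Emb$ and bounds a disc $D^f$ on one side, with $R - V(C^f)$ drawn on the other. Because $R - V(C^f)$ contains the portion of $R$ on the side of $C_R$ opposite to $f$ (including the outer layers of the ring and all the faces from the other family), the disc $D^f$ must sit on the $f$-side of $C_R$ and thus serves as a regular neighbourhood of $A^f$ there.

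Next, I would glue the discs $\{D^f\}_{f \in F^+}$ across the vertices of $C_R$ shared by consecutive arcs. At a vertex $u$ shared by $A^f$ and $A^{f^*}$ with $f, f^* \in F^+$, both $D^f$ and $D^{f^*}$ touch $u$, and they must do so on the same side of $C_R$ in $\Emb$, since the opposite side is already occupied by the $F^-$-side subgraph contained in $R - V(C^f)$ and $R - V(C^{f^*})$. The gluing yields a closed strip along $C_R$ on the $F^+$-side, and the symmetric argument produces a strip on the $F^-$-side. These two strips meet exactly along $C_R$, so a regular neighbourhood of $C_R$ in the surface is an annulus and $C_R$ is two-sided. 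Any edge of $R$ incident to $C_R$ that in $\Emb_0$ points into $F^+$ belongs to $E(C^f)$ for some $f \in F^+$ (and hence lies on the boundary of the $F^+$-strip in $\Emb$), and likewise for the other side; so the partition of incident edges between the two sides of $C_R$ in $\Emb$ matches that in $\Emb_0$.

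The main technical point is the local check at branching vertices of $C_R$ in the gluing step: one must verify that at such $u \in V(C_R)$, the cyclic order of incident edges in $\Emb$ places all $F^+$-side edges of $R$ consecutively and all $F^-$-side edges of $R$ consecutively, separated by the two $C_R$-edges at $u$. This is where the proof is slightly finicky; fortunately, $u$ has degree at most three in $R$, so there is only one extra edge to place, and its side is forced because it belongs to a $C^f$ for exactly one $f$ (in either $F^+$ or $F^-$), whose two-sidedness and disc-side identification from the first paragraph determine the side unambiguously.
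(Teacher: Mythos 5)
Your proposal is correct and takes essentially the same route as the paper: both arguments rest on the observation that, since $R$ is plainly embedded, each central face's cycle $C^f$ bounds a disc in $\Emb$ containing no part of $R$, which pins down the local rotation along $C_R$ — the paper states this locally (two consecutive same-side edges versus the opposite-side edge between them) and then lets the arbitrariness of the choice do the chaining, while you package the same fact by gluing the discs into two strips meeting exactly along $C_R$. One small inaccuracy: a vertical edge of $R$ at a branch vertex of $C_R$ lies on $C^f$ for exactly \emph{two} central faces (the two consecutive faces of its own family that it separates), not one; this only strengthens your forcing step at the gluing vertices and does not affect the argument.
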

\begin{proof}
Let $e_1$ and $e_2$ be two edges of $R\setminus E(C_R)$ that are incident to $C_R$ that are two consecutive edges on the same side of $C_R$ in $\Emb_0$.
Furthermore, let $P$ be the path between $e_1$ and $e_2$ in $C_R$ that is not incident to any other edge of $R\setminus E(C_R)$ on the same side as $e_1$,
and let $f$ be the face of the embedding of $R$ in $\Emb_0$ that is incident with $e_1$, $e_2$, and $P$; see also Figure~\ref{fig:ringzoom}.
Let $e$ be the other edge of $R\setminus E(C_R)$ incident to $P$ (the one that is drawn on the opposite side of $C_R$ than $e_1$ in $\Emb_0$).
Since $R$ is plainly embedded in $\Emb$, the embedding $\Emb$ restricted to $R$ has a face surrounded by $C^f$, and thus, as we traverse $P$ in $\Emb$,
the edges $e_1$ and $e_2$ are on one side, and $e$ is on the other side. Since the choice of $e_1$, $e_2$, and $f$ is arbitrary, the claim follows.
\end{proof}

\begin{figure}
\begin{center}
\includegraphics{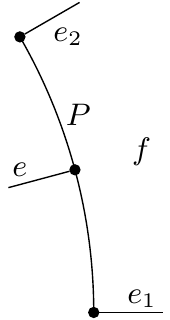}
\caption{Illustration for the proof of Corollary~\ref{cor:CR}.}
\label{fig:ringzoom}
\end{center}
\end{figure}

Corollary~\ref{cor:CR} allows us to speak about the \emph{inner} and \emph{outer} side of $C_R$ in $\Emb$: the sides of $C_R$ that contain incident edges inside $C_R$
and outside $C_R$ in $\Emb_0$.

A \emph{bridge} is a connected component $C$ of $G-V(R)$, together with the edges joining $C$ with $V(R)$. Furthermore, an edge $e \notin E(R)$
with both endpoints in $V(R)$ is also a bridge on its own. For a bridge $B$, the vertices of $V(B) \cap V(R)$ are \emph{attachment points}.

A bridge $B$ is \emph{central} if it has at least one attachment point, but all its attachment point lie in $V(C_R)$.
Note that, since $R$ is contained in $W_3$, a central bridge is disjoint with $M$ and in the embedding $\Emb_0$
it is drawn inside one of the small faces of $R$ incident with $C_R$. In particular, a central bridge is a subgraph of the territory of $R$.

Let $H$ be the subgraph of $G-M$ that consists of: the part of $G-M$ that is enclosed on the same side of $C_R$ in the embedding $\Emb_0$ as the vertex $v$
(i.e., on the disc, flat side of $C_R$), together with all central bridges.
In other words, $H$ is a subgraph of $G-M$ induced by the vertices of $C_R$ and all connected components of $G-M-V(C_R)$ that are contained
in the same connected component of $G-M$ as $C_R$, except for the connected component of $G-M-V(C_R)$ that contains the outermost concentric cycle of $R$.
The boundary of $H$, denoted $\partial H$, is the set of vertices of $H$ that have incident edges of $G$ that do not belong to $H$.

Armed with these observations, 
we now modify the embedding $\Emb$ of $G-S-\{v\}$ as follows. First, we cut the surface along the cycle $C_R$, and cap with discs the two resulting holes.
This operation can only lead to a surface of a lower Euler genus.
Second, we remove all edges of $H$ from $\Emb$, and all vertices of $H$ that become isolated by this operation.
Observe that due to Corollary~\ref{cor:CR}, now in $\Emb$ there is a face $f$, containing one of the discs glued to $C_R$ (the one glued on the inner side of $C_R$).

Note that if we restrict the embedding $\Emb_0$ to $H$, we obtain a plane embedding $\Emb_H$ of $H$
with an additional property that every vertex of $\partial H$ lies on the outerface.

Consider a bridge $B$ that has an attachment point in $V(C_R)$, but is not central (i.e., is not a subgraph of $H$).
Note that due to the fact that $R$ is plainly embedded, in $\Emb$ the bridge $B$ needs to be drawn in the same face of $R$ as in the embedding $\Emb_0$.
Thus, the order of the vertices of $\partial H$ on the outerface of $\Emb_H$ is exactly the same as the order of these vertices in the modified embedding $\Emb'$.
Consequently, the embedding $\Emb_H$ of $H$ can be glued into $f$,
identifying correspondingly the vertices of $V(H)$ that remain in $\Emb$.
In this manner, we obtain an embedding $\Emb'$ of $G-(S \setminus V(H))$, concluding the proof that $v$ is irrelevant.

This finishes the proof of Theorem~\ref{thm:irr}.

\section{Conclusions}

In this work we have developed fixed-parameter algorithms for the \genusVD{} problem with solution size and treewidth parameterizations, putting particular effort
into optimizing the dependency on the treewidth parameter in the running time bound. 

We remark that, although our formal statement of \genusVD{} involves only bounding the Euler genus of the output graph, only minor changes to our algorithms are required
if one demands the final graph to be embeddable in an \emph{orientable} surface of some genus. 
In terms of combinatorial embeddings, studied in Section~\ref{sec:embed}, an embedding is orientable if the set of flags can be partitioned into two parts (called \emph{left} and \emph{right})
such that every orbit of $\Vperm$, $\Eperm$, and $\Fperm$ contains two flags from different sets. 
The crucial observation is that deleting an edge, drawing an edge along a face boundary, and suppressing a size-$4$ vertex that is not isolated, applied to an orientable embedding
results in an embedding that is also orientable. Consequently, if we allow only orientable embeddings in the dynamic programming algorithm of Section~\ref{sec:tw}, we obtain the desired variant
of Theorem~\ref{thm:genusVDtw} for orientable surfaces. Finally, the arguments of Section~\ref{sec:irr} operate in the language of modifying an embedding in a fixed surface, and therefore
yield also without any changes a variant of Theorem~\ref{thm:irr} for orientable surfaces.

We would like to conclude with two open questions, stemming from our research.

First: Can we obtain a $2^{\Oh(C_g k \log k)} n$-time algorithm, following the ideas of~\cite{jls} for the planar case?
Our bounded treewidth routine suits such an algorithm, but the irrelevant vertex argument does not.

Second, and more challenging: what can we say about possible dependency on the parameter $k$
for the problem of deleting $k$ vertices to an arbitrary minor-closed graph family?
A similar question can be asked for the parameter treewidth.
Here, the main challenge is that it is harder to certify being $H$-minor-free for an arbitrary
graph $H$, while one can certify being of bounded genus by giving a corresponding embedding.

\bibliographystyle{plainurl}
\bibliography{genus}

\begin{thebibliography}{10}

\bibitem{ksiazka}
Marek Cygan, Fedor~V. Fomin, Łukasz Kowalik, Daniel Lokshtanov, D{\'{a}}niel
  Marx, Marcin Pilipczuk, Michał Pilipczuk, and Saket Saurabh.
\newblock {\em Parameterized Algorithms}.
\newblock Springer, 2015.
\newblock \href {http://dx.doi.org/10.1007/978-3-319-21275-3}
  {\path{doi:10.1007/978-3-319-21275-3}}.

\bibitem{bidim-algo}
Erik~D. Demaine, Mohammad~Taghi Hajiaghayi, and Ken{-}ichi Kawarabayashi.
\newblock Algorithmic graph minor theory: Decomposition, approximation, and
  coloring.
\newblock In {\em 46th Annual {IEEE} Symposium on Foundations of Computer
  Science, {FOCS} 2005}, pages 637--646. {IEEE} Computer Society, 2005.
\newblock \href {http://dx.doi.org/10.1109/SFCS.2005.14}
  {\path{doi:10.1109/SFCS.2005.14}}.

\bibitem{diestel}
Reinhard Diestel.
\newblock {\em Graph Theory}, volume 173 of {\em Graduate Texts in
  Mathematics}.
\newblock Springer-Verlag, Heidelberg, 5th edition, August 2016.
\newblock URL: \url{http://diestel-graph-theory.com/}.

\bibitem{DBLP:journals/jcss/ImpagliazzoPZ01}
Russell Impagliazzo, Ramamohan Paturi, and Francis Zane.
\newblock Which problems have strongly exponential complexity?
\newblock {\em Journal of Computer and System Sciences}, 63(4):512--530, 2001.
\newblock \href {http://dx.doi.org/10.1006/jcss.2001.1774}
  {\path{doi:10.1006/jcss.2001.1774}}.

\bibitem{jls}
Bart M.~P. Jansen, Daniel Lokshtanov, and Saket Saurabh.
\newblock A near-optimal planarization algorithm.
\newblock In Chandra Chekuri, editor, {\em 25th Annual {ACM-SIAM} Symposium on
  Discrete Algorithms, {SODA} 2014}, pages 1802--1811. {SIAM}, 2014.
\newblock \href {http://dx.doi.org/10.1137/1.9781611973402.130}
  {\path{doi:10.1137/1.9781611973402.130}}.

\bibitem{kenichi}
Ken{-}ichi Kawarabayashi.
\newblock Planarity allowing few error vertices in linear time.
\newblock In {\em 50th Annual {IEEE} Symposium on Foundations of Computer
  Science, {FOCS} 2009}, pages 639--648. {IEEE} Computer Society, 2009.
\newblock \href {http://dx.doi.org/10.1109/FOCS.2009.45}
  {\path{doi:10.1109/FOCS.2009.45}}.

\bibitem{mohar2}
Ken{-}ichi Kawarabayashi, Bojan Mohar, and Bruce~A. Reed.
\newblock A simpler linear time algorithm for embedding graphs into an
  arbitrary surface and the genus of graphs of bounded tree-width.
\newblock In {\em 49th Annual {IEEE} Symposium on Foundations of Computer
  Science, {FOCS} 2008}, pages 771--780. {IEEE} Computer Society, 2008.
\newblock \href {http://dx.doi.org/10.1109/FOCS.2008.53}
  {\path{doi:10.1109/FOCS.2008.53}}.

\bibitem{Kloks94}
Ton Kloks.
\newblock {\em Treewidth, Computations and Approximations}, volume 842 of {\em
  {LNCS}}.
\newblock Springer, 1994.
\newblock \href {http://dx.doi.org/10.1007/BFb0045375}
  {\path{doi:10.1007/BFb0045375}}.

\bibitem{lewis}
John~M. Lewis and Mihalis Yannakakis.
\newblock The node-deletion problem for hereditary properties is np-complete.
\newblock {\em Journal of Computer and System Sciences}, 20(2):219--230, 1980.
\newblock \href {http://dx.doi.org/10.1016/0022-0000(80)90060-4}
  {\path{doi:10.1016/0022-0000(80)90060-4}}.

\bibitem{ildi}
D{\'{a}}niel Marx and Ildik{\'{o}} Schlotter.
\newblock Obtaining a planar graph by vertex deletion.
\newblock {\em Algorithmica}, 62(3-4):807--822, 2012.
\newblock \href {http://dx.doi.org/10.1007/s00453-010-9484-z}
  {\path{doi:10.1007/s00453-010-9484-z}}.

\bibitem{mohar1}
Bojan Mohar.
\newblock A linear time algorithm for embedding graphs in an arbitrary surface.
\newblock {\em {SIAM} J. Discrete Math.}, 12(1):6--26, 1999.
\newblock \href {http://dx.doi.org/10.1137/S089548019529248X}
  {\path{doi:10.1137/S089548019529248X}}.

\bibitem{tw-lb}
Marcin Pilipczuk.
\newblock A tight lower bound for vertex planarization on graphs of bounded
  treewidth.
\newblock {\em Discrete Applied Mathematics}, 2016.
\newblock in press.
\newblock \href {http://dx.doi.org/10.1016/j.dam.2016.05.019}
  {\path{doi:10.1016/j.dam.2016.05.019}}.

\end{thebibliography}


\end{document}